\newcommand{\FP}{\text{FP}}
\newenvironment{proofsketch}{%
  \proof}{\endproof}
\renewcommand{\epsilon}{\varepsilon} 
\renewcommand{\epsilon}{\varepsilon}
\newcommand{\eps}{\varepsilon}
\newcommand{\Dict}{\textsf{Dict}}
\newtheorem{theorem}{Theorem}
\newtheorem{lemma}[theorem]{Lemma}
\newtheorem{claim}[theorem]{Claim}
\newtheorem{assumption}[theorem]{Assumption}
\newtheorem{definition}[theorem]{Definition}
\newtheorem{invariant}[theorem]{Invariant}
\newtheorem{observation}[theorem]{Observation}
\newcommand{\UU}{\mathcal{U}}
\newcommand{\MM}{\mathcal{M}}
\newcommand{\hc}{h^c}
\newcommand{\qsi}{q^{si}}
\newcommand{\hb}{h^b}
\newcommand{\hf}{\hat{f}}
\newcommand{\hatc}{\hat{c}}
\newcommand{\hl}{\hat{\ell}}
\newcommand{\tf}{\tilde{f}}
\newcommand{\tl}{\tilde{\ell}}
\newcommand{\thash}{\tilde{h}}
\newcommand{\bh}{\bar{h}}
\newcommand{\mumacro}{\sqrt{
      \frac{3}{m}\cdot\ln(6w^{\beta/3})}}
\newcommand{\DD}{\mathcal{D}}
\newcommand{\etal}{\textit{et al.}\xspace}
\newcommand{\size}[1]{\ensuremath{\left|#1\right|}}
\newcommand{\set}[1]{\left\{ #1 \right\}}
\newcommand{\parentheses}[1]{\left(#1\right)}
\DeclarePairedDelimiter{\floor}{\lfloor}{\rfloor}
\DeclarePairedDelimiter{\ceil}{\lceil}{\rceil}
\renewcommand{\Pr}[1]{{\mathrm{Pr}}\left[ #1 \right]}
\newcommand{\NN}{\mathbb{N}}
\DeclareMathOperator{\query}{\textsf{query}}
\DeclareMathOperator{\ins}{\textsf{insert}}
\DeclareMathOperator{\pop}{\textsf{pop}}
\DeclareMathOperator{\del}{\textsf{delete}}
\DeclareMathOperator{\op}{\textsf{op}}
\DeclareMathOperator{\out}{\textsf{out}}
\DeclareMathOperator{\hheader}{\textsf{header}}
\DeclareMathOperator{\bbody}{\textsf{body}}
\DeclareMathOperator{\PD}{\textsf{PD}}
\DeclareMathOperator{\CSD}{\textsf{CSD}}
\DeclareMathOperator{\VarPD}{\textsf{VarPD}}
\DeclareMathOperator{\VarCSD}{\textsf{VarCSD}}
\DeclareMathOperator{\SID}{\textsf{SID}}
\DeclareMathOperator{\ptr}{\textsf{ptr}}
\DeclareMathOperator{\poly}{\textsf{poly}}
\newcommand{\ind}[1]{\mathds{1}_{#1}}
\newcommand{\indD}{\ind{\DD(t)}}
\newcommand{\betaval}{6+\delta}
\newcommand{\betapval}{5+0.75\cdot \delta}
\newcommand{\CC}{C}
\begin{document}
\title{Fully-Dynamic Space-Efficient Dictionaries and Filters with Constant
Number of Memory Accesses\thanks{This research was supported by a
grant from the United States-Israel Binational Science Foundation
(BSF), Jerusalem, Israel, and the United States National Science
Foundation (NSF)}}
\author{
 Ioana O. Bercea\thanks{
Tel Aviv University, Tel Aviv, Israel.
Email:~\texttt{ioana@cs.umd.edu, guy@eng.tau.ac.il}.}
\and
Guy Even\footnotemark[1]
}
\date{\today}
\maketitle

\begin{abstract} 
A fully-dynamic dictionary is a data structure for maintaining sets
that supports insertions, deletions and membership queries. A filter
approximates membership queries with a one-sided error. We present
two designs:
\begin{enumerate}
\item The first space-efficient fully-dynamic dictionary that
maintains both sets and random multisets and supports
queries, insertions, and deletions with
a constant number of memory accesses in
the worst case with high probability. The comparable
dictionary of
Arbitman~\etal~\cite{arbitman2010backyard} works only for
sets.
\item By a reduction from our dictionary for random
multisets, we obtain a space-efficient fully-dynamic filter that
supports queries, insertions, and deletions with a constant
number of memory accesses in the worst case with high
probability (as long as the false positive probability is
$2^{-O(w)}$, where $w$ denotes the word length). This
is the first in-memory space-efficient fully-dynamic filter design
that provably achieves these properties. 
\end{enumerate}
We also present an application of the techniques used to design our
dictionary to the static Retrieval Problem.
\end{abstract}

\section{Introduction}
We consider the problem of maintaining datasets
subject to insert, delete, and membership query operations. 
Given a set $\DD$ of $n$ elements from a universe $\mathcal{U}$, a
membership query asks if the queried element $x \in \mathcal{U}$
belongs to the set $\DD$. When exact answers are required, the
associated data structure is called a \emph{dictionary}. When
one-sided errors are allowed, the associated data structure is called
a \emph{filter}. Formally, given an error parameter $\eps>0$, a filter
always answers ``yes'' when $x \in \DD$ and when $x\notin \DD$, it
makes a mistake with probability at most $\eps$. We refer to such an
error as a \emph{false positive} event%
\footnote{The probability is taken over the random choices that the
filter makes.}.

When false positives can be tolerated, the main advantage of using a
filter instead of a dictionary is that the filter requires much less
space than a dictionary~\cite{carter1978exact,lovett2010lower}.  Let
$u\triangleq |\UU|$ be the size of the universe and $n$ denote
an upper bound on the size of the set at all points in time. The information
theoretic lower bound for the space of dictionaries is
$\ceil{\log_2 {u \choose n}} = n\log(\frac{u}{n})+ \Theta(n)$
bits.\footnote{All logarithms are base $2$ unless otherwise
stated. $\ln x$ is used to denote the natural
logarithm.}\footnote{We consider the case in which the bound $n$ on
the cardinality of the dataset is known in advance. The scenario in
which $n$ is not known is addressed in Pagh~\etal~\cite{pagh2013approximate}.}\footnote{ This equality holds when $u$ is
significantly larger than $n$.} On the other hand, the lower bound
for the space of filters is $n\log(1/\eps)$ bits~\cite{carter1978exact}. In light of these lower bounds, we call a dictionary \textit{space-efficient} when
it requires $(1+o(1))\cdot n\log(\frac{u}{n})+ \Theta(n)$ bits, where the
term $o(1)$ converges to zero as $n$ tends to infinity. Similarly, a
space-efficient filter requires $(1+o(1))\cdot n\log(1/\eps) + O(n)$
bits.\footnote{An asymptotic expression that mixes big-O and small-o
calls for elaboration. If $\eps=o(1)$, then the asymptotic
expression does not require the $O(n)$ addend. If $\eps$ is constant, the
$O(n)$ addend only emphasizes the fact that the constant that
multiplies $n$ is, in fact, the sum of two constants: one is almost
$\log (1/\eps)$, and the other does not depend on $\eps$. Indeed,
the lower bound in~\cite{lovett2010lower} excludes space
$(1+o(1))\cdot n\log(1/\eps)$ for constant values of $\eps$.}

When the set $\DD$ is fixed, we say that the data structure is \emph{static}. When the data structure
supports insertions as well, we say that it is \emph{incremental}. Data structures
that handle both deletions and insertions are called \emph{fully-dynamic} (in short, dynamic). 

The goal is to design  dictionaries and filters
that achieve ``the best of both worlds''~\cite{arbitman2010backyard}:
they work in the fully-dynamic setting, are space-efficient and
perform operations in constant time in the worst case with high
probability.\footnote{By with high probability (whp), we mean with
  probability at least $1-1/n^{O(1)}$. The constant in the exponent
  can be controlled by the designer and only affects the $o(1)$ term
  in the space of the dictionary or the filter.}

On the fully-dynamic front, one successful approach for designing
fully-dynamic filters was suggested by
Pagh~\etal~\cite{DBLP:conf/soda/PaghPR05}.  Static (resp.,
incremental) filters can be obtained from static (resp., incremental)
dictionaries for sets by a reduction of
Carter~\etal~\cite{carter1978exact}. This reduction does not directly
lead to filters that support deletions.  To extend the reduction to
the fully-dynamic setting, Pagh~\etal~\cite{DBLP:conf/soda/PaghPR05}
propose to reduce instead from a stronger dictionary that maintains
multisets rather than sets (i.e., elements in multisets have arbitrary
multiplicities). This extension combined with a fully-dynamic dictionary for
multisets yields in~\cite{DBLP:conf/soda/PaghPR05} a fully-dynamic
filter that is space-efficient but performs insertions and deletions
in amortized constant time, not in the worst case.  It is still an
open problem whether one can design a fully-dynamic dictionary on
multisets that is space-efficient and performs operations in constant
time in the worst case whp~\cite{arbitman2010backyard}.

In this paper, we present the first fully-dynamic space-efficient
filter with constant time operations in the worst case whp.  Our
result is based on the observation that it suffices to use the
reduction of Carter~\etal~\cite{carter1978exact} on 
dictionaries that support \emph{random} multisets rather than
arbitrary multisets. We then design the first fully-dynamic
space-efficient dictionary that works on random multisets and supports
operations in constant time in the worst case whp.  Applying the
reduction to this new dictionary yields our fully-dynamic filter. We
also show how a static version of our dictionary can be used to design
a data structure for the static retrieval problem.

\subsection{Our Contributions}
\paragraph{Fully-Dynamic Dictionary for Random Multisets.}
We consider a new setting for dictionaries in which the dataset is a
random sample (with replacements) of the universe.  We refer to such
datasets as \emph{random multisets}.  We present the first
space-efficient fully-dynamic dictionary for random multisets that performs
inserts, deletes, and queries in  constant time
in the worst case whp.  The motivating application for random multi-sets
is in designing fully-dynamic filters (see Sec.~\ref{sec:reduce}). We note
that our dictionary can also maintain regular sets.

In the following theorem, we summarize the properties of our
dictionary, called the Crate Dictionary.  Note that we analyze the
number of memory accesses in the RAM model in which every memory
access reads/writes a word of $w = \Omega(\log n)$ contiguous bits.
All operations we perform in one memory access take constant time. 
We also analyze the probability that the space allocated for the
dictionary does not suffice; such an event is called an \emph{overflow}.

\begin{theorem}\label{thm:crate-dict}
The Crate Dictionary with parameter $n$ is a fully-dynamic dictionary that maintains sets and
random multisets with up to $n$ elements from a universe $\UU$ with
the following guarantees: (1)~For every polynomial in $n$ sequence
of insert, delete, and query operations, the dictionary does not
overflow whp. (2)~If the dictionary does not overflow, then every
operation (query, insert, delete) can be completed using at most
$4\cdot \frac{\log (|\UU|/n)}{w} + O(1)$ memory accesses. (3)~The
required space is $(1+o(1))\cdot n \log (|\UU|/n)+O(n)$ bits.
\end{theorem}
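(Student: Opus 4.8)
The plan is to present the Crate Dictionary as a hierarchical hashing structure and then to prove its three guarantees separately. The primary level is an array of $m$ compact \emph{crates}: an element $x$ is first mapped through a fixed (and, in the set case, randomly chosen) near-permutation $\pi$ of $\UU$, and the high-order part of $\pi(x)$ selects a crate while only the low-order part --- about $\log(\size{\UU}/n)+O(1)$ bits --- is stored in it. Elements that do not fit in their crate are diverted to a secondary \emph{spare}, maintained as a deamortized cuckoo-type table on the reduced universe (with, if needed, a further tiny reliable backstop), so that every operation still costs $O(1)$ in the worst case. With this layout, part~(2) follows by counting words, part~(3) by summing allocated bits, and part~(1) by concentration bounds on the load of every component, union-bounded over all $\poly(n)$ time steps.

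For part~(1) --- the heart of the proof --- I would argue as follows. Each crate holds a bounded expected number of elements, and the crucial point is that its load is a sum of \emph{independent} indicators: this is immediate for random multisets, and for an adversarial set it holds because $\pi$ is drawn at random, making $\pi(\DD)$ a uniformly random $n$-subset whose crate loads are dominated in the upper tail by the with-replacement (balls-in-bins) case. A Chernoff bound then shows a crate exceeds its allotted capacity $\lceil \mu_0(1+\mu)\rceil$, with $\mu \triangleq \mumacro$, only with a small probability; the overflowing elements are rerouted to the spare, which therefore receives only an $o(1)$-fraction of the elements, and a second concentration estimate shows the spare (and backstop) never exceed their own capacities. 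Choosing the constant $\beta$ in $\mumacro$ large enough makes the product of the per-component, per-step failure probabilities at most $1/\poly(n)$. Two side conditions must also be checked: that no single universe element floods a crate through a huge multiplicity (for a random multiset the maximum multiplicity is $O(\log n/\loglog n)$ whp, which the capacity slack absorbs), and that each level's load invariant is re-established after \emph{every} individual insert and delete, since crate contents drift over time.

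Part~(2) is bookkeeping on the layout: evaluating the $O(1)$ hash and permutation values costs $O(1)$ accesses (their seeds occupy $O(1)$ words); a crate occupies a contiguous block of at most $\log(\size{\UU}/n)/w+O(1)$ words and supports query, insert, and delete touching only those words, so a read-then-write on it costs $2\cdot(\log(\size{\UU}/n)/w+O(1))$ accesses, one probe into the spare costs another $2\cdot(\log(\size{\UU}/n)/w+O(1))$, and the backstop contributes $O(1)$ more; summing gives $4\cdot\frac{\log(\size{\UU}/n)}{w}+O(1)$. For part~(3), the primary array contributes $m\cdot\lceil\mu_0(1+\mu)\rceil\cdot(\log(\size{\UU}/m)+O(1))$ bits; using $m\mu_0=n$, the choice of $m$ making $n\log(n/m)=o(n\log(\size{\UU}/n))+O(n)$, and the slack term $m\mu_0\mu = n\mu$ being of lower order, this is $(1+o(1))\cdot n\log(\size{\UU}/n)+O(n)$ bits; the spare and backstop, holding an $o(1)$-fraction of the elements, add $o(n\log(\size{\UU}/n))+O(n)$ bits, and the crate headers add $O(n)$ bits, for a total of $(1+o(1))\cdot n\log(\size{\UU}/n)+O(n)$.

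The main obstacle is part~(1) in its fully-dynamic, worst-case form. Bounding the loads at a single snapshot is routine; the difficulty is that the guarantee must hold per operation, with no amortization and no global rebuild, so the spare must be maintained incrementally (as a deamortized cuckoo-type table) and must be shown neither to overflow nor to become a time bottleneck over an entire $\poly(n)$-length interleaving of inserts and deletes, with every level's invariant re-proved at each step. Taming this cascade --- rather than the access and space counts, which are essentially mechanical --- is where the real work lies.
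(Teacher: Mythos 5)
The proposal misses the two central ideas the paper's proof actually rests on, and one of them would make your version fail for the very objects the theorem is about.

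First, you implement the spare as a ``deamortized cuckoo-type table.'' That is precisely the spare of Arbitman~\etal, and the paper explicitly explains that this implementation does not extend to multisets with deletions because cuckoo hashing cannot store duplicate keys. But supporting \emph{random multisets} is the entire point of Theorem~\ref{thm:crate-dict}, since that is what makes the filter reduction in Lemma~\ref{lemma:reduce dynamic} possible. The paper instead builds a \emph{distributed} spare: one $\SID$ per crate, each an array of counting set dictionaries ($\CSD$s) that hold multiplicities, linked by doubly-linked lists indexed by $\PD$ so that a pop operation after a delete can restore Invariant~\ref{inv:SID} in $O(1)$ accesses. Distributing the spare is also what keeps the pointer lengths at $O(\log w)$ rather than $\Theta(\log n)$ --- with a single global spare, the array of pointers alone would cost $\Omega(n \log n)$ bits and destroy space efficiency, as the paper remarks. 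Your proposal has no mechanism for storing multiplicities, no invariant tying spare membership to full bins, and no pop machinery.

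Second, your argument has no provision for the sparse case $w/\rho = o_n(1)$, where a single remainder $r(x)$ exceeds a word and cannot be read, let alone compared, in $O(1)$ accesses. The paper handles this with prefix-free variable-length adaptive remainders $\alpha(x)$ (Invariant~\ref{inv:min}), stored in variable-length pocket dictionaries synchronized with pointers into pocket motels that hold the full remainders. The coefficient $4$ in part~(2) is derived precisely here: the worst case is a delete from a full $\PD$, which must locate and read $r(x)$, pop an element $y$ from the $\SID$ and read $r(y)$, each reading costing $\log(\size{\UU}/n)/w$ accesses, twice over. Your accounting (``read-then-write on a crate plus one probe into the spare'') neither matches the paper nor is justified --- without adaptive remainders there is no way to find the location of $x$ in a crate in $O(1)$ accesses when $\log(\size{\UU}/n) \gg w$, so the access count would not be bounded by $O(\log(\size{\UU}/n)/w)$ at all.

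The concentration and space bookkeeping you sketch is roughly the right flavor and does appear in the paper (Claims~\ref{claim:fullbins}--\ref{claim:dense space}), but without the $\PD$-plus-distributed-$\SID$ structure and without the adaptive-remainder machinery, parts~(1), (2), and the multiset claim in the theorem statement all remain unproved.
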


The comparable dictionary of
Arbitman~\etal~\cite{arbitman2010backyard} also achieves constant time
memory accesses, is space-efficient and does not overflow with high probability, but works
only for sets. We remark that dictionary constructions for
arbitrary multisets  
exist\cite{chazelle2004bloomier,DBLP:conf/soda/PaghPR05,demaine2006dictionariis,katajainen2010compact,patrascu2014dynamic} with weaker performance guarantees.

\paragraph{Fully-Dynamic Filter.} One advantage of supporting random multisets is that it allows us to
use our dictionary to construct a fully-dynamic filter. We show a reduction
that transforms a fully-dynamic dictionary on random multisets into a
fully-dynamic filter on sets (see Lemma~\ref{lemma:reduce dynamic}).
Applying this reduction to the Crate
Dictionary in Theorem~\ref{thm:crate-dict}, we obtain the Crate Filter in Theorem~\ref{thm:crate-filter}.
The Crate Filter is the first in-memory
space-efficient fully-dynamic filter that supports all operations in constant 
time in the worst case whp.
The filter does not overflow with high probability. We summarize its
properties in the following:\footnote{Note that the theorem holds for
all ranges of $\eps$, in particular, $\eps$ can depend on $n$ or
$w$. Moreover, even when $\eps=2^{-O(w)}$, each operation in the filter requires a constant number of memory accesses.}

\begin{theorem}\label{thm:crate-filter}
The Crate Filter with parameters $n$ and $\epsilon$ is a fully-dynamic filter that maintains a set of at most
$n$ elements from a universe $\UU$ with the following guarantees:
(1)~For every polynomial in $n$ sequence of insert, delete, and
query operations, the filter does not overflow whp. (2)~If the
filter does not overflow, then every operation (query, insert, and
delete) can be completed using at most
$4\cdot \frac{\log (1/\eps)}{w} + O(1)$ memory accesses. (3)~The
required space is $(1+o(1))\cdot n\log (1/\eps)+O(n)$ bits. (4)~For
every query, the probability of a false positive event  is
bounded by $\eps$.
\end{theorem}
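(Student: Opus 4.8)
The plan is to derive Theorem~\ref{thm:crate-filter} by instantiating the reduction of Lemma~\ref{lemma:reduce dynamic} with the Crate Dictionary of Theorem~\ref{thm:crate-dict}. Fix a range parameter $R$ that is a power of two with $R = \Theta(n/\eps)$, and a hash function $h\colon \UU \to [R]$ drawn from the (nearly) universal family prescribed by the reduction. The filter represents a set $\DD \subseteq \UU$ by storing the multiset $h(\DD) \triangleq \{h(x) : x\in \DD\}$ in a Crate Dictionary with parameter $n$ over the universe $[R]$: an $\ins(x)$ (resp.\ $\del(x)$) operation inserts (resp.\ deletes one copy of) the key $h(x)$, and a $\query(y)$ returns ``yes'' iff $h(y)$ is currently present in the dictionary. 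Evaluating $h$ costs $O(1)$ time and memory accesses, so each filter operation is a single dictionary operation plus $O(1)$ overhead.

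With this setup the three quantitative claims follow from Theorem~\ref{thm:crate-dict}. For space, a dictionary for $n$ keys over $[R]$ occupies $(1+o(1))\cdot n\log(R/n) + O(n) = (1+o(1))\cdot n\log(1/\eps) + O(n)$ bits (the constant hidden in $R = \Theta(n/\eps)$ contributes only an $O(1)$ additive term to $\log(R/n)$, which is absorbed into the $(1+o(1))$ factor when $\eps = o(1)$ and into the $O(n)$ addend when $\eps$ is constant), proving claim~(3), provided the description of $h$ also fits within the budget. For the number of memory accesses, each operation triggers one dictionary operation on a key from $[R]$, hence uses $4\cdot\frac{\log(R/n)}{w} + O(1) = 4\cdot\frac{\log(1/\eps)}{w} + O(1)$ memory accesses; in particular when $\eps = 2^{-O(w)}$ this is still $O(1)$, matching the footnoted remark. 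This gives claim~(2).

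For the false-positive bound~(4), fix a query $y \notin \DD$. A false positive occurs exactly when $h(y) \in h(\DD)$, i.e.\ when $h(y) = h(x)$ for some $x\in\DD$. By the (almost-)universality of $h$ and a union bound over the at most $n$ elements of $\DD$, this event has probability at most $n/R \le \eps$ once the constant in $R = \Theta(n/\eps)$ is fixed appropriately; this is the Carter~\etal~\cite{carter1978exact} argument that Lemma~\ref{lemma:reduce dynamic} encapsulates. For the overflow bound~(1), we invoke Theorem~\ref{thm:crate-dict}(1): a polynomial-in-$n$ sequence of filter operations maps to a polynomial-in-$n$ sequence of dictionary operations, and the reduction guarantees that $h(\DD)$ evolves as a random multiset over $[R]$ (a uniform sample with replacements), so the Crate Dictionary does not overflow whp; hence neither does the filter.

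The main obstacle is the choice of $h$: we need a single family that simultaneously (i) is (nearly) universal, so that the false-positive union bound yields probability at most $\eps$; (ii) makes $h(\DD)$ satisfy the precise ``random multiset'' hypothesis required by Theorem~\ref{thm:crate-dict}, even though $h(\DD)$ is only statistically close to a truly uniform sample; and (iii) admits an $O(1)$-time evaluation and a description of $o(n\log(1/\eps)) + O(n)$ bits so as not to break the space budget. Exhibiting such a family, and checking that the overflow analysis of the Crate Dictionary is robust to the small deviation of $h(\DD)$ from uniformity, is where the real work lies --- this is the content of Lemma~\ref{lemma:reduce dynamic}, so once that lemma is in hand the present theorem reduces to the bookkeeping above, together with a check of the boundary regimes ($\eps$ constant, $\eps = o(1)$, $\eps = 2^{-O(w)}$) against the conventions for ``space-efficient'' fixed earlier.
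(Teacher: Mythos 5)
Your proposal follows the same route as the paper: Theorem~\ref{thm:crate-filter} is obtained by applying the reduction of Lemma~\ref{lemma:reduce dynamic} to the Crate Dictionary of Theorem~\ref{thm:crate-dict} with universe $[n/\eps]$, and the arithmetic translating $\log(\size{\UU}/n)$ to $\log(1/\eps)$ in the space and memory-access bounds is exactly what the paper intends. The reduction and the false-positive bound are also as in the paper.

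However, your ``main obstacle'' paragraph is based on a misreading of the model. The paper does not use a ``(nearly) universal family'' with a stored $o(n\log(1/\eps))$-bit description, and there is no ``small deviation of $h(\DD)$ from uniformity'' to be robust against. In the paragraph on hash functions (Section~1.2), the paper explicitly assumes access to \emph{perfectly random} hash functions with $O(1)$ evaluation time whose space is \emph{not} counted. Under that assumption, the unlabeled claim preceding Lemma~\ref{lemma:reduce dynamic} says that $h(\DD(t))$ is \emph{exactly} a random multiset, not approximately one, so the hypothesis of Theorem~\ref{thm:crate-dict} is met verbatim. Lemma~\ref{lemma:reduce dynamic} is therefore a short observation (the paper gives only a one-paragraph proof sketch), not the place where difficult work about hash-family design or deviation-from-uniformity analysis is done; the real technical content of the paper is in proving Theorem~\ref{thm:crate-dict} itself. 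You are right that replacing the fully-random assumption with an explicit family is a genuine open concern in this line of work, but attributing its resolution to Lemma~\ref{lemma:reduce dynamic} misrepresents what that lemma actually proves. Also a minor point: the paper takes the range to be $[n/\eps]$ directly rather than $R=\Theta(n/\eps)$, so the false-positive bound is immediate without a union-bound constant to tune.
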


\paragraph{Static Retrieval.}
We also present an application of the Crate Dictionary design for the
static Retrieval Problem~\cite{chazelle2004bloomier}.  In the static
retrieval problem, the input consists of a fixed dataset
$\DD\subseteq\UU$ and a function $f: \DD \rightarrow \set{0,1}^k$. On
query $x$, the output $y$ must satisfy $y=f(x)$ if $x\in \DD$ (if
$x\not\in \DD$, any output is allowed). The data structure can also
support updates in which $f(x)$ is modified for an $x\in \DD$.

We employ a variant of the Crate Dictionary design to obtain a Las
Vegas algorithm for constructing a practical data structure in linear
time whp.  The space requirements of this data structure are better
than previous constructions under certain parametrizations~(see
\cite{DBLP:conf/stacs/DietzfelbingerW19} and references therein).
In particular, the $o(1)$ term in the required space is similar to the one we obtain for our dictionary.

\begin{theorem} \label{thm:retrieval} There exists a data structure
for the static Retrieval Problem with the following guarantees:
(1)~Every query and update to $f(x)$ can be completed using at most
$4\cdot \frac{k}{w} + O(1)$ memory accesses. (2)~The time it takes
to construct the data structure is $O(n)$ whp. (3)~The required
space is $(1+o(1))\cdot nk + O(n)$ bits.
\end{theorem}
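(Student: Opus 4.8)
The plan is to reduce the static retrieval problem to a static variant of the Crate Dictionary, following the standard observation that a retrieval data structure need not store the keys themselves: it only needs to store, for each key $x\in\DD$, the $k$-bit value $f(x)$ in a location that can be recomputed from $x$ alone. Concretely, I would run the Crate Dictionary construction on $\DD$ (viewed as a set, or after hashing, as a random multiset of fingerprints), but in each bin/crate, instead of storing the quotiented remainders that would let us \emph{decide} membership, I store only the payloads $f(x)$, laid out in the same slot order that the dictionary's placement procedure would have used. On a query $x$, we run exactly the same sequence of hash evaluations and memory probes that the dictionary would perform to locate $x$'s home slot; whatever $k$-bit string sits there is returned. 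If $x\in\DD$ this is $f(x)$ by construction; if $x\notin\DD$ the output is arbitrary, which is permitted. An update to $f(x)$ just overwrites that slot. Since locating the slot uses the dictionary's navigation, which by (the analogue of) Theorem~\ref{thm:crate-dict} touches $O(1)$ words plus the words needed to stream the stored object, and here the stored object is $k$ bits rather than $\Theta(\log(|\UU|/n))$ bits, the probe count becomes $4\cdot\frac{k}{w}+O(1)$, giving part~(1).

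For part~(3), the accounting changes because we no longer pay for remainders/quotients: each of the $\Theta(n)$ slots now holds a $k$-bit payload, so the leading term is $(1+o(1))\cdot nk$, while all the auxiliary structures of the Crate Dictionary (the metadata bits, the redirect/overflow machinery, the top-level index) contribute only $O(n)$ bits and an $o(1)$-fraction overhead, exactly as in the dictionary's space analysis; I would reuse that analysis essentially verbatim, noting that the $o(1)$ term is driven by the same parameters and hence matches the one claimed for the dictionary. Part~(2), the $O(n)$-time Las Vegas construction, is where the real work lies: the dictionary's insertion procedure is designed for the dynamic setting and its worst-case-per-operation guarantee holds only \emph{whp}, so naively inserting all $n$ keys costs $O(n)$ \emph{whp} but a bad random choice can force a rebuild. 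I would handle this the standard Las Vegas way — attempt the construction with fresh hash seeds, detect overflow (which by part~(1) of Theorem~\ref{thm:crate-dict} happens with probability $1/\poly(n)$ over any polynomial-length operation sequence, in particular over the $n$ insertions), and restart on failure; the expected number of restarts is $1+o(1)$, so total expected time is $O(n)$, and it is $O(n)$ whp as well.

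The main obstacle I anticipate is part~(2) in the sharper sense: getting a \emph{linear-time} construction rather than merely linear expected time, and in particular making sure the per-bin packing step (placing the $f(x)$'s into the crate's slots in the canonical order) is itself linear and does not hide a super-constant factor from sorting or from resolving collisions within a crate. If the Crate Dictionary resolves intra-bin placement via a small perfect-hashing or matching subroutine, I would need that subroutine to run in time linear in the bin size whp; since bins have size $O(\polylog n)$ whp, even a quadratic-in-bin-size fallback is $O(n\polylog n)$, so I would instead argue that bins are $O(1)$-sized in expectation (or use the two-level structure so that the heavy-bin case is charged to the $O(n)$-bit secondary structure) to recover the clean $O(n)$ bound. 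A secondary subtlety is that the retrieval problem has no "not found" answer, so I must double-check that the dictionary's navigation never \emph{needs} to read a remainder to decide where to go — i.e., that the slot of $x$ is determined purely by hash values of $x$ and by metadata, not by comparisons against stored keys; if the Crate Dictionary does use stored remainders for navigation, I would keep just enough of them (a few bits per slot, contributing to the $O(n)$ term) to preserve deterministic navigation while still charging the bulk $nk$ term only to payloads.
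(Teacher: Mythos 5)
Your plan matches the paper's: reuse the Crate Dictionary's placement machinery, replace the stored remainders by the $k$-bit payloads $f(x)$, locate the slot by the same hashing and navigation, and build the structure in $O(n)$ time by $n$ incremental inserts with Las Vegas restarts on overflow. However, what you label a ``secondary subtlety'' --- that the dictionary's navigation compares against stored remainder bits, so you must retain ``a few bits per slot'' to keep the slot of $x$ computable --- is in fact the paper's central device, not a side issue: those retained bits are precisely the adaptive remainders $\alpha(x)$, the prefix-free variable-length prefixes of $r(x)$ maintained under Invariant~\ref{inv:min}, whose total length is $O(n)$ whp by the same analysis as in Claim~\ref{claim:varpd-ovf}. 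Without invoking prefix-freeness and the $O(1)$-average-length bound, the claim ``a few bits per slot, contributing to the $O(n)$ term'' is unjustified, so you should bring this machinery to the foreground rather than treat it as a fallback. The paper also splits on $k$: when $k=O(\log\log n)$ it stores $f(x)$ in the slot where $\ptr(x)$ lives in the sparse-case $\PD$/$\CSD$ and omits the pocket motel entirely; when $k=\omega(\log\log n)$ it stores $f(x)$ in a pocket-motel-style array so the header and adaptive-remainder overheads remain lower-order relative to $nk$. Your uniform treatment glosses over this split but does not change the approach. Finally, your concern about super-linear per-bin packing during construction is unfounded: inserts in the Crate Dictionary take $O(1)$ memory accesses per operation absent overflow, so $n$ sequential inserts cost $O(n)$, and the Las Vegas restart argument you give at the end is exactly the paper's.
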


\subsection{Our Model}
\paragraph{Memory Access Model.} We assume that the data structures
are implemented in the RAM model in which the basic unit of one memory
access is a word. Let $w$ denote the memory word length in bits. We
assume that $w=\Omega(\log n)$. Performance is measured in terms of
memory accesses. We do not count the number of computer word-level
instructions because the computations we perform over words per memory
access can be implemented with $O(1)$ instructions using modern
instruction sets and lookup tables~\cite{Reinders2013, DBLP:conf/sigmod/PandeyBJP17,
bender2017bloom}.

We note that comparable dictionary designs assume words of
$\log(\size{\UU})$ bits~\cite{DBLP:conf/soda/PaghPR05, demaine2006dictionariis, arbitman2010backyard}. When the dictionaries are used to obtain filters~\cite{DBLP:conf/soda/PaghPR05, arbitman2010backyard},
they assume that the word length is $\log(1/\eps)$. If $w=\log(\size{\UU})$, then our designs always require a constant number of memory accesses per operation
(hence, constant time).  However, we prefer to describe our designs using a model in which the
word length does not depend on the size of the universe\cite{patrascu2014dynamic}.

\paragraph{Success Probability.}
Our constructions succeed with high probability in the following
sense. Upfront, space is allocated for the data structure and its
components. An \textit{overflow} is the event that the space allocated
for one of the components does not suffice (e.g., too many elements
are hashed to the same bin).  We prove that overflow occurs with
probability at most $1/\poly(n)$ and that one can control the degree
of the polynomial (the degree of the polynomial only
affects the $o(1)$ term in the size bound). In the case of random multisets, the
probability of an overflow is a joint probability distribution over
the random choices of the dictionary and the distribution over the
realizations of the multiset. In the case of sets, the probability of
an overflow depends only on the random choices that the dictionary/filter
makes.

\paragraph{Hash Functions.} We assume that we have access to perfectly
random hash functions and permutations with constant evaluation time.
We do not account the space these hash functions occupy.  A similar
assumption is made in~\cite{brodnik1999membership, raman2003succinct,
chazelle2004bloomier,dietzfelbinger2008succinct, bender2018bloom,
DBLP:conf/stacs/DietzfelbingerW19}. See~\cite{fredman1982storing,
kaplan2009derandomized,arbitman2010backyard,DBLP:conf/stoc/PatrascuT11, celis2013balls,meka2014fast}
for further discussion on efficient storage and evaluation of hash
functions.

\paragraph{Worst Case vs. Amortized.}
An interesting application that emphasizes the importance of
worst-case performance is that of handling search engine queries. Such
queries are sent in parallel to multiple servers, whose responses are
then accumulated to generate the final output. The latency of this
final output is determined by the slowest response, thus reducing the
average latency of the final response to the worst latency among the
servers. See~\cite{broder2001using,kirsch2007using, arbitman2009amortized,arbitman2010backyard}
for  further discussion on the shortcomings of expected or amortized
performance in practical scenarios.

\subsection{Main Ideas and Techniques}\label{sec:techniques}
We review the general techniques which we employ in our design and highlight the 
main issues that arise when trying to design a fully-dynamic filter. We then describe how our
construction addresses these issues.

\subsubsection{Relevant Techniques and Issues}

\paragraph{Dictionary Based Filters.} Carter
\etal~\cite{carter1978exact} observed that one can design a static
filter using a dictionary that stores hashes of the
elements. The reduction is based on a
random hash function $h:U\rightarrow \left[\frac{n}{\eps}\right]$. We
refer to $h(x)$ as the \emph{fingerprint} of $x$.  A static dictionary
over the set of fingerprints $h(\DD)$ becomes a static filter over the
set $\DD$.  Indeed, the probability of a false positive event is at
most $\eps$ because $\Pr{h(x)\in h(\DD)}\leq \eps$, for every
$x\notin\DD$. This reduction yields a space-efficient static filter if
the dictionary is space-efficient.

\paragraph{Delete Operations in Filters.}
Delete operations pose a challenge to the design of filters based on
dictionaries~\cite{DBLP:conf/soda/PaghPR05}.  Consider a collision,
namely $h(x)=h(y)$ for some $x \in \DD$ and $y\in\UU$. One must make
sure that a $\del(y)$ operation does not delete $x$ as well.  We
consider two scenarios.  In the first scenario, $y\notin \DD$ and a
$\del(y)$ causes $h(x)$ to be deleted from the filter. A subsequent
$\query(x)$ operation is then responded with a ``no'', rendering the
filter incorrect. To resolve this issue, we assume that a delete
operation is only issued for elements in the dataset.\footnote{This assumption is unavoidable if one wants to design a filter using the reduction
of Carter~\etal~\cite{carter1978exact}.}

\begin{assumption}\label{assume:del}
At the time a $\del(x)$ is issued to the filter, the
element $x$ is in the dataset.
\end{assumption}

In the second scenario, both $x\in\DD$ and $y\in\DD$ when $\del(y)$ is issued.
In this case, the
dictionary underlying the filter must be able to store duplicate
elements (or their multiplicities), so that $\del(y)$ deletes
only one duplicate leaving the other one intact (or subtracts the
counter).\footnote{This is why a fully-dynamic dictionary for
sets cannot be used out-of-the-box as a fully-dynamic filter.}  Pagh
\etal~\cite{DBLP:conf/soda/PaghPR05} propose a solution that requires
the underlying dictionary to support a multiset rather than a set.

\paragraph{Quotienting.} The space lower bound of $n\log(\size{\UU}/n)$ bits
suggests that one should be able to save $\log n$ bits in the representation
of each element. Indeed, the technique of
\textit{quotienting}~\cite{Knuth} saves $\log n$ bits by
storing them implicitly in an array of $n$ entries. This technique has
been successfully used in several filter and dictionary constructions
(see~\cite{pagh2001low,DBLP:conf/soda/PaghPR05, demaine2006dictionariis,
arbitman2010backyard, bender2012thrash,  bender2018bloom} and references
therein). The idea is to divide the $\log(\size{\UU})$ bits of an element $x\in \UU$ into
the first $\log n$ bits called the \emph{quotient} $q(x)$ (which are
stored implicitly),  and the remaining $\log(\size{\UU}/n)$ bits called
the \emph{remainder} $r(x)$ (which are stored explicitly).  An array
$A[0:n-1]$ is then used to store the elements, where $A[q]$ stores the
multiset of remainders $r(x)$ such that $q(x)=q$.\footnote{We abuse
notation and interpret $q(x)$ both as a number in $[n]$ and a binary
string of $\log n$ bits.} The benefit is that one does not need to
store the quotients explicitly because they are implied by the
location of the remainders.  

\paragraph{Load Balancing and Spares.} When quotients are
random locations in $[n]$, it is common to analyze dictionaries and
filters using the balls-into-bins
paradigm~\cite{wieder2017hashing}. In the case of dictionary/filter
design, the remainders are balls, and the quotients determine the
bins.  The balls that cannot be stored in the space allocated for
their bin are stored in a separate structure called the
\textit{spare}. A balls-into-bins argument implies that,
under certain parametrizations, the number of
balls stored in the spare is small compared to $n$ and can be
accommodated by the $o(1)$ term in the space
bounds~\cite{dietzfelbinger1990new, demaine2006dictionariis, pagh2008uniform,
arbitman2010backyard,bender2018bloom}.
Since the spare stores a sublinear number of elements, space
inefficient dictionaries can be employed to implement the spare.

Two challenges with this approach need to be addressed: (1)~Organize
the spare such that it can be accessed and updated in a constant
number of memory accesses. (2)~Manage the spare so that it does not
overflow. In particular, to avoid an overflow of the spare in the
fully-dynamic setting, one must make sure that balls are not only added to
the spare but also moved back to their bins.  For the balls-into-bins
analysis to remain valid over a sequence of insertions and deletions,
an invariant is maintained that requires that a ball be stored in the
spare only when its bin is full (Invariant~\ref{inv:SID}). This means
that whenever a ball from a full bin is deleted, one needs to move
another ball from the spare back to the bin.

The implementation of the spare plays a crucial role in whether the
dictionary supports deletions, duplicates, or operations in constant
time.  The dictionary of Arbitman \etal~\cite{arbitman2010backyard}
manages the spare as a de-amortized cuckoo hash
table~\cite{arbitman2009amortized} (so operations are in constant
time). Each time an element is relocated within the cuckoo hash table,
it checks if its corresponding bin is not full. If so, it leaves the
spare and returns to its bin (hence, deletions are
supported). However, the implementation of the spare
in~\cite{arbitman2010backyard} does not extend to filters with
deletions because their implementation does not support duplicate
elements.

\paragraph{Sparse vs. Dense Cases.}
Most dictionary and filter constructions consider two cases
(sparse/dense) based on the relative size of the dataset with respect
to the size of the universe (see~\cite{brodnik1999membership,
raman2003succinct,demaine2006dictionariis, arbitman2010backyard,
bender2018bloom}).  In the case of filters, the two cases are
separated based on the value of $\epsilon$. (See Def.~\ref{def:dense}).

In the dense case, elements are short enough (in bits) that a bin can pack
all the elements that hash into it within a word (so an element can be found 
by searching its corresponding bin in
constant time). The challenge in the sparse case is that the remainders
are long, and bins no longer fit in a constant number of words.
Arbitman~\etal~\cite{arbitman2010backyard} employ additional
structures in this case (i.e., global lookup tables) that point
to the location in which the remainder of an element is (separately) stored.

\subsubsection{Our Techniques}
\paragraph{Reduction Using Random Multisets.} We introduce a
relaxation of the Pagh \etal~\cite{DBLP:conf/soda/PaghPR05} condition
that a dictionary must support multisets to function as a fully-dynamic
filter (Sec.~\ref{sec:reduce}). This relaxation is based on the
observation that it suffices for the dictionary to support random multisets
for the reduction of Carter
\etal~\cite{carter1978exact} to succeed. Theorem~\ref{thm:crate-dict} provides
the first such dictionary. Applying the reduction to the Crate Dictionary gives 
Theorem~\ref{thm:crate-filter}. While random invertible permutations yield a reduction from a set to a random
set~\cite{demaine2006dictionariis, arbitman2010backyard} (see
Assumption~\ref{assume:rnd}), this is not the case for multisets.
Indeed, the image of a multiset with respect to a random permutation
is not a random multiset.

\paragraph{Pocket Dictionary.}

The starting point of our dictionary construction is the idea that the bins the
elements hash to (based on their random quotients) should behave like
self-contained dictionaries. Moreover, these dictionaries
should be space-efficient and fit in a constant number of words so that all operations on
them can be executed in a constant number of memory accesses.

Specifically, the basic building blocks of our design are small local
data structures that we employ in a modular, black-box fashion
(Sec.~\ref{sec:DictSmall}).  To this end, we construct a simple
space-efficient dictionary for small general multisets of elements whose
quotients belong to a limited range. We refer to this construction as
a \emph{pocket dictionary}. These small dictionaries function as bins;
elements choose a bin based on their (random) quotient.

Pocket dictionaries are space-efficient and use at most two extra bits
per element overall. This is an improvement over previous packing
techniques in which the number of bits per element is
$3$~\cite{Clerry,bender2012thrash} and
$17/8$~\cite{DBLP:conf/sigmod/PandeyBJP17}. We note that the practical
filter proposed in Pagh~\etal~\cite{DBLP:conf/soda/PaghPR05} also uses
two extra bits per element. However, they use a variant of linear
probing and therefore, the runtime of the operations is not constant (in the worst
case) and depends on the load. Arbitman
et. al~\cite{arbitman2010backyard} manage the bins by employing a global
lookup table for storing the encodings of all possible subsets stored
in a bin.

The simplicity of the pocket dictionary design gives it
flexibility. In particular, we also construct variants that support
variable-length remainders($\VarPD$) or count multiplicities of elements ($\CSD$).

\paragraph{Distributed Spares.}
To facilitate constant time access in the spare, we propose to manage
the spare by partitioning it across intervals of pocket dictionaries (hereafter
referred to as \emph{crates}). Specifically, we allocate one
distributed spare per crate such that each distributed spare stores
the overflow only from the pocket dictionaries in its crate (see Section~\ref{sec:SID}). With high probability,
at most $\poly (w)$ elements are stored in each distributed spare
(however, the elements no longer form a random
multiset). Each distributed spare is implemented as a \emph{space-inefficient
dictionary} ($\SID$).

The key advantage of distributed spares is that doubly linked lists of
elements from the same pocket dictionary can be implemented by storing pointers of length
$O(\log w)$ alongside the elements (as opposed to pointers of length
$\log n$ in the case of a global spare).  These linked lists enable
us to move an element from the spare back to its bin in a constant number
of memory accesses. 

The first technical challenge with this approach is that the elements
stored in a $\SID$ no longer form a random
multiset. To this end, we implement each $\SID$ as an
array of $\CSD$s that can maintain general multisets (See Appendix~\ref{sec:aux}).
The  second technical challenge  is that the probability of overflow of the components in each $\SID$  must be exponentially small in their size. (Indeed, their size is
$O(w)$ and the tolerated failure probability is at most
$1/\poly(n)$.) We formalize the properties of the $\SID$
in the following lemma. 

\begin{lemma}\label{lemma:crate-sid}
The $\SID$ with parameter $n'=\poly(w)$ is a fully-dynamic dictionary that maintains 
(general) multisets of cardinality at most $n'$ from a universe $\UU$ with
the following guarantees: (1)~For every polynomial in $n'$ sequence
of insert, delete, and query operations, the $\SID$ does not
overflow with probability at least $1-e^{-\Omega(w)}$. (2)~If the $\SID$ does not overflow, then every
operation (query, insert, delete) can be completed using at most
$2\cdot \frac{\log (|\UU|)}{w} + O(1)$ memory accesses. (3)~The
required space is $O(wn'\log(\size{\UU}))$ bits.
\end{lemma}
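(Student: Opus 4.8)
\begin{proofsketch}
The plan is to realize the $\SID$ as a flat array of $m=\Theta(n')$ buckets, each bucket being a $\CSD$ (the counting pocket dictionary of Appendix~\ref{sec:aux}) provisioned to hold up to $L=\Theta(w)$ distinct elements together with their multiplicity counters, and to route elements to buckets with a single perfectly random hash function $h\colon\UU\to[m]$. Given this layout, parts~(2) and~(3) are bookkeeping. Evaluating $h(x)$ uses no memory access, and the bucket at index $h(x)$ sits at a precomputable offset since all buckets have the same allocated size, so by the $\CSD$ guarantees each of $\query(x)$, $\ins(x)$, $\del(x)$ is completed with $2\ceil{\log(\size{\UU})/w}+O(1)$ word accesses, matching~(2). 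Each of the $m=\Theta(n')$ buckets occupies $O(w\log\size{\UU})$ bits --- the $\le L$ remainders, the $\le L$ counters of $O(\log n')$ bits each (the multiset has cardinality at most $n'$), and an $O(w)$-bit $\CSD$ header --- for a total of $O(wn'\log\size{\UU})$ bits, matching~(3).

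The substance is part~(1). Fix any sequence of at most $\poly(n')$ operations. By construction the $\SID$ overflows only if some bucket is asked to hold more than $L$ distinct elements at some point (the counters never overflow, having been sized for cardinality $n'$), so it suffices to bound this event over the choice of $h$. At any fixed time the live multiset has at most $n'$ distinct elements, and $h$ places each of them in a uniformly random bucket, so the probability that a fixed bucket receives $L$ or more of them is at most $\binom{n'}{L}m^{-L}\le(en'/(Lm))^{L}$ by a union bound over $L$-element subsets. With $m=\Theta(n')$ and $L=cw$ for a sufficiently large constant $c$, this is $e^{-\Omega(w\log w)}$. A further union bound over the $m=\poly(w)$ buckets and the $\poly(n')=\poly(w)$ time steps multiplies it by a $\poly(w)$ factor, which is absorbed into the exponent, so the total overflow probability is $e^{-\Omega(w)}$.

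The step I expect to be the real obstacle --- and the reason one cannot get away with a plain global hash table whose $\polylog$-sized buckets are analyzed by the usual $1/\poly$ high-probability argument --- is exactly the point flagged just before the lemma: every \emph{component} (bucket) of the $\SID$ has size only $O(w)$, yet must fail with probability $e^{-\Omega(w)}$, since $w=\Omega(\log n)$ and the global failure budget is only $1/\poly(n)$. This is what forces the bucket capacity $L$ to be a growing function of $w$ (a constant multiple of $w$ suffices), so that the balls-into-bins tail is genuinely exponential in $w$ rather than merely polynomially small in $n'$. Two secondary points still need attention but are not deep: the multiset handed to the $\SID$ is adversarial rather than random, so the whole concentration argument must rest on the randomness of $h$ alone; and each bucket must support arbitrary multiplicities, which is precisely why a bucket is a $\CSD$ --- an array of which, maintaining general multisets, is the construction of Appendix~\ref{sec:aux} --- rather than an ordinary pocket dictionary, with per-bucket correctness and the memory-access bound inherited from that construction.
\end{proofsketch}
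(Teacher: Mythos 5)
Your overall architecture --- a flat array of $\Theta(n')$ $\CSD$ buckets, one random hash to route elements, a binomial tail bound per bucket, and a union bound over buckets and time --- matches the paper's $\SID$ construction and its overflow analysis (Claim~\ref{crate-cpd-ovf}) essentially step for step, and your observation that the argument must rest on the randomness of the routing hash alone (not of the stored multiset) is also the paper's. However, your choice of bucket capacity $L=\Theta(w)$ creates a real gap in part~(2). The paper sets the per-$\CSD$ support bound to $\hf = cw/\hl = \Theta(w/\log w)$ (Table~\ref{tbl:sid parameters}) precisely so that Claim~\ref{claim:CSD} applies: a $\CSD$ fits in $O(1)$ machine words only when $\hf\cdot\max\set{\hl,\log\hatc} = O(w)$, and that constant-word property is what Claim~\ref{claim:sid mem} uses to get $O(1)$ memory accesses per $\SID$ operation. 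With $L=\Theta(w)$ and elements of $\Omega(\log w)$ bits (plus counters and, for pop support, the $O(\log w)$-bit list pointers the paper stores per element), each bucket occupies $\Omega(w\log w)$ bits, i.e., $\omega(1)$ words; a query would then read $\Omega(\log w)$ words (or $\Omega(\log\size{\UU})$, for remainders that long), far above the claimed $2\ceil{\log\size{\UU}/w}+O(1)$.

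The fix is that you overshoot on $L$: $L=\Theta(w/\log w)$ already yields a per-bucket tail bound of $\parentheses{e/L}^{L} = e^{-L(\ln L - 1)} = e^{-\Theta(w)}$ because $L\log L = \Theta(w)$, which is exponential in $w$ as required, while keeping each $\CSD$ within $O(1)$ words. Your intuition ``a constant multiple of $w$ suffices'' is true for part~(1) but incompatible with part~(2); $\Theta(w/\log w)$ is the knob that satisfies both, and the constant-word property of the $\CSD$ is load-bearing, not an afterthought. (Two secondary items your sketch leaves out: the per-element doubly linked list pointers and the pop operation, which the paper needs to maintain Invariant~\ref{inv:SID} and whose $O(1)$-access cost is argued in Claim~\ref{claim:sid mem}; and the fact that the lemma's $2\log\size{\UU}/w$ term is really about the sparse case, where the $\CSD$ stores short pointers and the long remainders live in pocket motels --- in the dense case that term is $O(1)$.)
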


Note the extra $w$ factor in the space requirement that justifies the term space-inefficient dictionary. The $\SID$ also supports a special delete operation called pop, with the same performance guarantees (for an elaboration on the pop operation, see Sec.~\ref{sec:pop}).

\paragraph{Variable-Length Adaptive Remainders.}
In the sparse case, reading even one remainder of an element might require more
than $O(1)$ memory accesses. (The word length might be smaller than
$\log (\size{\UU}/n)$, the length of a remainder.) We propose to solve this problem by maintaining
variable-length prefixes of the remainders (Sec.~\ref{sec:adapt finger}). We call such prefixes
\emph{adaptive remainders}.  We maintain the adaptive remainders
dynamically such that they are minimal and prefix-free (Invariant~\ref{inv:min}).

The full remainder of an element is stored separately from its adaptive remainder but
their locations are synchronized. Specifically, the key property of the adaptive remainders is that they allow us to find the location of an element $x$ using a constant number of memory
accesses, regardless of the length of $x$.\footnote{The ``location''
of an element $x$ is ambiguous since $x$ might not even be in the
dataset. More precisely, we can certify whether $x$ is in the
dataset or not by reading at most two full remainders and these full
remainders can be located in a constant number of memory accesses
(see Claim~\ref{claim:ptr}).} Indeed, one may view this technique
as a method for dynamically maintaining a perfect hashing of the
elements in the same bin with the same
quotient. As opposed to typical perfect hashing schemes in which the images of
the elements are of the same length, we employ variable-length
images. The adaptive remainders have, on average, constant length so
they do not affect space efficiency. Indeed, considering variable-length
adaptive remainders allows us to bypass existing space bounds on dynamic perfect hash functions
that depend on the size of the universe~\cite{mortensen2005dynamic}. 

Prefix-free variable-length adaptive fingerprints are employed in the adaptive Broom filter of Bender
\etal~\cite{bender2018bloom} to fix false positives and maintain distinct
fingerprints for elements in the filter. Their adaptive fingerprints are computed
by accessing an external memory reverse hash table that is not
counted in the filter's space.~\footnote{The reverse hash table allows the filter to look up an element based on its fingerprint in the filter.~\cite{bender2018bloom}.} We emphasize that all our adaptive
remainders are computed in-memory.

\subsection{Related Work}
The topic of dictionary and filter design is a fundamental
theme in the theory and practice of data structures.
 We restrict our focus to the results that
are closest to our setting.
\paragraph{Dictionaries.}
To the best of our knowledge, the dictionary of Arbitman~\etal~\cite{arbitman2010backyard} is the
only space-efficient fully-dynamic dictionary for sets that performs
queries, insertions, and deletions in constant time in the worst case with high probability.

Several other fully-dynamic constructions
support operations in constant time with high
probability~\cite{ dietzfelbinger1990new, dalal2005two,
demaine2006dictionariis,
arbitman2009amortized} but are not
space-efficient. On the other hand, some dictionaries are
space-efficient but do not have constant time guarantees with high
probability for all of their 
operations~\cite{raman2003succinct, fotakis2005space,panigrahy2005efficient,dietzfelbinger2007balanced}. For the static case, several space-efficient constructions exist
that perform queries in constant time~\cite{brodnik1999membership,pagh2001low,patrascu2008succincter}.

\paragraph{Filters.} In the context of filters, to the best of our
knowledge, only a few space-efficient constructions have proven
guarantees. We focus here on space-efficient filters that perform
operations in a constant number of memory accesses (albeit with
different probabilistic guarantees). The fully-dynamic filter of Pagh
\etal~\cite{DBLP:conf/soda/PaghPR05} supports constant time queries
and insertions and deletions in amortized expected constant time. The
incremental filter of Arbitman~\etal~\cite{arbitman2010backyard}
performs queries in constant time and insertions in constant time with
high probability. The construction of Bender et
al.~\cite{bender2018bloom} describes an adaptive filter~\footnote{Loosely
speaking, an adaptive filter is one that
fixes false positives after they occur~\cite{bender2018bloom, mitzenmacher2018adaptive}.} equipped with
an external memory reverse hash table that supports queries,
insertions and deletions in constant time with high probability. (The
space of the reverse hash table is not counted in the space of their
filter.) Space-efficient filters for the static case have been studied
extensively~\cite{mitzenmacher2002compressed,dietzfelbinger2007balanced,dietzfelbinger2008succinct, porat2009optimal}. Several
heuristics such as the cuckoo filter~\cite{DBLP:conf/conext/FanAKM14},
the quotient filter~\cite{bender2012thrash,
DBLP:conf/sigmod/PandeyBJP17} and variations of the Bloom
filter~\cite{bloom1970spacetime} have been reported to work well in
practice.

\subsection{Paper Organization}
The preliminaries are in Section~\ref{sec:prelim}.  In
Section~\ref{sec:reduce}, we discuss the reduction from a fully-dynamic
dictionary on random multisets to a fully-dynamic filter on sets. In
Section~\ref{sec:DictSmall}, we describe the structure of the pocket
dictionary and its variants. We also briefly mention some auxiliary
data structures which we employ and whose complete description can be
found in Appendix~\ref{sec:aux}.

The description and analysis of the Crate Dictionary are covered in two
parts. In Section~\ref{sec:cratefilter}, we discuss the Crate
Dictionary construction and analysis in the dense case. The
distributed spares are described in
Section~\ref{sec:SID}. Section~\ref{sec:crate-dict-sparse} covers the
Crate Dictionary in the sparse case.
Section~\ref{sec:retrieval}
describes the modifications required to perform static retrieval.

\section{Preliminaries}\label{sec:prelim}

\paragraph{Notation.}
The \emph{indicator function} of a set $S$ is the function
$\mathds{1}_S: S\rightarrow \set{0,1}$ defined by
\begin{align*}
\mathds{1}_S(x)&\triangleq
\begin{cases}
1 & \text{if $x\in S$},\\
0 & \text{if $x\not\in S$}\;.
\end{cases}
\end{align*}
For any positive $k$, let $[k]$ denote the set
$\set{0,\ldots,\ceil{k}-1}$.  For a string $a \in \set{0,1}^*$, let $\size{a}$ denote the length
of $a$ in bits.

We define the range of a hash function $h$ to be a set of natural numbers
$[k]$ and also treat the image $h(x)$ as a
binary string, i.e., the binary representation of $h(x)$ using
$\ceil{\log_2 k}$ bits.

Given two strings, $a,b\in \set{0,1}^*$, the concatenation of $a$ and
$b$ is denoted by $a\circ b$. We denote the fact that $a$ is a
prefix of $b$ by $a\subseteq b$.

\begin{definition}
A sequence of strings $A \triangleq \set{a_1,\ldots, a_s}$ is
\emph{prefix-free} if, for every $i\neq j$, the string $a_i$ is not a
prefix of $a_j$.
\end{definition}

\subsection{Filter and Dictionary Definitions}
Let $\UU$ denote the universe of all possible elements.
\paragraph{Operations.}

We consider three types of operations:
\begin{itemize}
\item $\ins(x_t)$ - insert $x_t\in \UU$ to the dataset.
\item $\del(x_t)$ - delete $x_t\in \UU$ from the dataset.
\item $\query(x_t)$ - is $x_t\in \UU$ in the dataset?
\end{itemize}

\paragraph{Dynamic Sets and Random Multisets.}
Every sequence of operations $R=\set{\op_{t}}_{t=1}^{T}$ defines a
\emph{dynamic set} $\DD(t)$ over $\UU$ as follows.\footnote{ The
definition of state in Equation~\ref{eq:state} does not rule out a
deletion of $x\notin\DD(t-1)$.  However, we assume that
$op_t=\del(x_t)$ only if $x_t\in \DD(t-1)$. See
Assumption~\ref{assume:del} and the discussion therein. 
}

\begin{align}\label{eq:state}
\DD(t)&\triangleq
\begin{cases}
\emptyset & \text{if $t=0$}\\
\DD(t-1)\cup \set{x_t} &\text{if $\op_t=\ins(x_t)$}\\
\DD(t-1)\setminus \set{x_t} &\text{if $\op_t=\del(x_t)$}\\
\DD(t-1)& \text{if $t>0$ and $\op_t=\query(x_t)$.}
\end{cases}
\end{align}

\begin{definition}
A \emph{multiset} $\MM$ over $\UU$ is a function $\MM:\UU\rightarrow
\NN$. We refer to $\MM(x)$ as the \emph{multiplicity} of $x$. If
$\MM(x)=0$, we say that $x$ is not in the multiset. We refer to
$\sum_{x\in \UU} \MM(x)$ as the \emph{cardinality} of the multiset
and denote it by $\size{\MM}$.
\end{definition}

The \emph{support} of the multiset is the set
$\set{x \mid \MM(x)\neq 0}$. The \emph{maximum multiplicity} of a multiset is
$\max_{x\in\UU} \MM(x)$.

A \emph{dynamic multiset} $\set{\MM_t}_t$ is specified by a sequence
of insert and delete operations. Let $\MM_t$ denote the multiset after
$t$ operations.\footnote{As in the case of fully-dynamic sets, we require
that $\op_t=\del(x_t)$ only if $\MM_{t-1}(x_t)>0$.}
\begin{align*}
\MM_t(x)&\triangleq
\begin{cases}
0 & \text{if $t=0$}\\
\MM_{t-1}(x)+1 &\text{if $\op_t=\ins(x)$}\\
\MM_{t-1}(x)-1 &\text{if $\op_t=\del(x)$}\\
\MM_{t-1}(x)& \text{otherwise.}
\end{cases}
\end{align*}
\medskip\noindent We say that a dynamic multiset $\set{\MM_t}_t$ has
cardinality at most $n$ if $\size{\MM_t}\leq n$, for every $t$.
\begin{definition}
A dynamic multiset $\MM$ over $\UU$ is a\emph{ random
multiset} if for every $t$, the multiset $\MM_t$ is the outcome of 
independent uniform samples (with replacements) from $\UU$.
\end{definition}

\paragraph{Fully-Dynamic Filters.}
A \emph{fully-dynamic filter} is a data structure that maintains a dynamic set
$\DD(t)\subseteq \UU$ and is parameterized by an error parameter
$\eps\in (0,1)$.  Consider an input sequence that specifies a dynamic
set $\DD(t)$, for every $t$.  The filter outputs a bit for every query
operation. We denote the output that corresponds to $\query(x_t)$ by
$\out_t\in\set{0,1}$.  We require that the output satisfy the
following condition:
\begin{align}
\op_t&=\query(x_t) \Rightarrow \out_t \geq \indD(x_t)\;.
\end{align}
The output $\out_t$ is an approximation of $\indD(x_t)$ with a one-sided
error.  Namely, if $x_t\in \DD(t)$, then $b_t$ must equal $1$.

\begin{definition}[false positive event]
Let $FP_t$ denote the event that $\op_t=\query(x_t)$,
$\out_t=1$ and $x_t\notin\DD(t)$.
\end{definition}

\medskip\noindent
The error parameter $\eps\in (0,1)$ is used to bound the probability
of a false positive error.
\begin{definition}
We say that the \emph{false positive probability} in a filter is
bounded by $\eps$ if it satisfies the following property. For every
sequence $R$ of operations and every $t$,

\begin{align*}
\Pr{\FP_t}
\leq\eps\;.
\end{align*}
\end{definition}
The probability space in a filter is induced only by the random
choices (i.e., choice of hash functions) that the filter makes. Note
also that if $\op_t=\op_{t'}=\query(x)$, where
$x\not\in \DD(t)\cup \DD(t')$, then the events $\FP_{t}$ and
$\FP_{t'}$ may not be independent
(see~\cite{bender2018bloom,mitzenmacher2018adaptive} for a discussion
of repeated false positive events and adaptivity).

\paragraph{Fully-Dynamic Dictionaries.}
A \emph{fully-dynamic dictionary} with parameter $n$ is a
fully-dynamic filter with parameters $n$ and $\eps=0$. In the case of multisets, the response $\out_t$ of a fully-dynamic
dictionary to a $\query(x_t)$ operation must satisfy  $\out_t=1$ iff $\MM_t(x_t)>0$.
\footnote{ One may also define $\out_t=\MM_t(x_t)$.}

When we say that a filter or a dictionary has parameter $n$, we mean that the cardinality of the input set/multiset is at most $n$ at all points in time.

\paragraph{Success Probability and Probability Space.} We say that a
dictionary (filter) \emph{works} for sets and random multisets if the
probability that the dictionary does not overflow is high (i.e., it is
$\geq 1-1/\poly(n)$). The probability in the case of random multisets
is taken over both the random choices of the dictionary and the
distribution of the random multisets. In the case of sets, the success
probability depends only on the random choices of the dictionary.

\paragraph{Relative Sparseness.}
The relative sparseness~\cite{brodnik1999membership} of a set (or
multiset) $\DD\subseteq \UU$ is the ratio $\size{\UU}/\size{\DD}$.  We
denote the relative sparseness by $\rho$. In the fully-dynamic setting, we
define the relative sparseness to be $\rho\triangleq \size{\UU}/{n}$,
where $n$ is the upper bound on $\size{\DD(t)}$.

Recall that $w=\Omega(\log n)$ denotes the memory word length.  We
differentiate between two cases in the design of dictionaries, depending on
the ratio $w/\rho$. 
\begin{definition}\label{def:dense}
The \emph{dense case} occurs when $w/\rho=\Omega_n(1)$.  The
\emph{sparse case} occurs when $w/\rho=o_n(1)$.\footnote{By
$w/\rho=o_n(1)$, we mean that
$\lim_{n\to \infty} \frac{w}{\rho} = 0$. By $w/\rho=\Omega_n(1)$,
we mean that there exists a constant $c>0$ such that
$w/\rho \geq c$, for sufficiently large values of $n$. }
\end{definition}

The reduction from dictionaries (see Sec~\ref{sec:reduce}) implies
that the dictionaries used to implement the filter have a relative
sparseness $\rho = 1/\eps$. Hence, in the case of filters, we
differentiate between the \emph{high error} case, in which to
$\eps=\Omega_n(1/w)$, and the \emph{low error} case in which
$\eps=o_n(1/w)$.

\section{Reduction: Filters Based on Dictionaries}\label{sec:reduce}
In this section, we extend the reduction of
Carter~\etal~\cite{carter1978exact} to construct fully-dynamic filters out
of fully-dynamic dictionaries for random multisets. Our reduction can be
seen as a relaxation of the reduction of
Pagh~\etal~\cite{DBLP:conf/soda/PaghPR05}. Instead of requiring that
the underlying dictionary support multisets, we require that it only
supports random multisets.

\begin{claim}
Consider a random hash function
$h:\UU \rightarrow \left[\frac{n}{\eps}\right]$ and let
$\DD(t)\subseteq \UU$ denote a dynamic set whose cardinality never
exceeds $n$. Then $h(\DD(t))$ is a random multiset of cardinality at
most $n$.
\end{claim}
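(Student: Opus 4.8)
The plan is to unpack the definition of a random multiset and verify the two required properties — bounded cardinality and randomness — directly. The cardinality claim is immediate: since $\DD(t)$ has cardinality at most $n$, the image $h(\DD(t))$, counted with multiplicity (i.e. one copy of $h(x)$ for each $x\in\DD(t)$), has cardinality exactly $|\DD(t)| \le n$. So the real content is the randomness.

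First I would set up notation. Fix $t$ and write $\DD(t)=\{x_1,\dots,x_k\}$ with $k=|\DD(t)|\le n$; here $\DD(t)$ is determined by the input sequence and is therefore a fixed set, not random. The only randomness in sight is the choice of $h$, which by the hash-function assumption (stated in the Preliminaries/Hash Functions paragraph) is a \emph{perfectly random} function $\UU\to[n/\eps]$. The multiset $h(\DD(t))$ is the multiset $\{h(x_1),\dots,h(x_k)\}$, i.e. the function $y\mapsto |\{i : h(x_i)=y\}|$. Since the $x_i$ are distinct elements of $\UU$, the values $h(x_1),\dots,h(x_k)$ are independent and each is uniform over $[n/\eps]$. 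That is exactly the statement that $h(\DD(t))$ is obtained as $k\le n$ independent uniform samples (with replacement) from the universe $[n/\eps]$, which matches the definition of a random multiset (of cardinality at most $n$) over the universe $[n/\eps]$. I would remark that the relevant universe for the downstream dictionary is $[n/\eps]$, consistent with the relative sparseness $\rho=1/\eps$ noted after Definition~\ref{def:dense}.

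The one subtlety worth pausing on — and the main (mild) obstacle — is that the definition of a random multiset quantifies over \emph{every} $t$, whereas here each $\DD(t)$ is a different fixed set of a different (data-dependent) size $k_t\le n$. So I should be careful to state that for each fixed $t$ the conclusion follows from the observation above applied to the set $\DD(t)$, using that distinctness of the elements of $\DD(t)$ is what makes the hash values independent. There is no issue of correlation across different $t$: the definition only asks that the marginal distribution of $\MM_t$, for each $t$ separately, be that of independent uniform samples, and that is precisely what we have established. Finally I would note the degenerate case $k_t=0$ is trivially consistent (the empty multiset is the outcome of zero samples), so the argument covers all $t$ including $t=0$.

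Closing remark for the exposition: it is worth emphasizing \emph{why} this is the right relaxation — a random \emph{permutation} applied to an arbitrary multiset does not yield a random multiset, but a random \emph{function} applied to a set does, because collisions among the (distinct) elements of the set are exactly the source of the random multiplicities. This is the point the paper flagged in the ``Reduction Using Random Multisets'' paragraph, and it is the conceptual heart of why Theorem~\ref{thm:crate-dict} suffices to build a fully-dynamic filter via Lemma~\ref{lemma:reduce dynamic}.
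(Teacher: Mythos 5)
Your argument is correct and takes the same approach the paper intends; the paper states this claim without a formal proof, offering only an informal remark afterward that the adversary generating the sequence for $h(\DD(t))$ is oblivious because $h$ is random. The step you isolate as the key one --- that the elements of $\DD(t)$ are distinct (it is a set), so under a truly random $h$ their images are i.i.d.\ uniform on $[n/\eps]$, matching the per-$t$ marginal condition in the definition of a random multiset --- is exactly the substance of the claim, and your note that a random permutation would not do the same for a multiset correctly tracks the paper's own motivation for the relaxation.
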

Since $h$ is random, an ``adversary'' that generates the sequence of
insertions and deletions for $h(\DD(t))$ is an oblivious adversary in
the following sense. When inserting, it inserts a random element
(which may be a duplicate of a previously inserted
element\footnote{Duplicates in $h(\DD(t))$ are caused by collisions
(i.e., $h(x)=h(y)$) rather than by reinsertions.}). When deleting at
time $t$, it specifies a previous time $t'<t$ in which an insertion
took place, and requests to delete the element that was inserted at
time $t'<t$.

Let $\Dict$ denote a fully-dynamic dictionary for random multisets of
cardinality at most $n$ from the universe
$\left[\frac{n}{\eps}\right]$. The following lemma proves that Theorem~\ref{thm:crate-filter} 
is implied by Theorem~\ref{thm:crate-dict}.
\begin{lemma}\label{lemma:reduce dynamic}
For every dynamic set $\DD(t)$ of cardinality at most $n$, the
dictionary $\Dict$ with respect to the random multiset $h(\DD(t))$
and universe $\left[\frac{n}{\eps}\right]$ is a fully-dynamic filter for
$\DD(t)$ with parameters $n$ and $\eps$.
\end{lemma}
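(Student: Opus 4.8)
The plan is to verify directly that the data structure $\Dict$ applied to the dynamic multiset $h(\DD(t))$ satisfies the definition of a fully-dynamic filter for $\DD(t)$ with parameters $n$ and $\eps$. By the Claim preceding the lemma, $h(\DD(t))$ is a random multiset of cardinality at most $n$ over the universe $\left[\frac{n}{\eps}\right]$, so the premise of Theorem~\ref{thm:crate-dict} (equivalently, the hypothesis on $\Dict$ as a dictionary for random multisets) is met. It remains to (i)~translate filter operations on $\DD(t)$ into dictionary operations on $h(\DD(t))$, (ii)~check the one-sided error requirement, and (iii)~bound the false positive probability by $\eps$.

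\textbf{Operation translation.} First I would define the interface: $\ins(x)$ on the filter triggers $\ins(h(x))$ on $\Dict$; $\del(x)$ triggers $\del(h(x))$; and $\query(x)$ returns the dictionary's answer to $\query(h(x))$. One must check that this maintains the invariant that the multiset held by $\Dict$ is exactly $h(\DD(t))$: since $h$ is a fixed function, inserting $x$ adds one copy of $h(x)$ and deleting $x$ removes one copy, so $\MM_t = h(\DD(t))$ by induction on $t$. Here I would invoke Assumption~\ref{assume:del}: a $\del(x)$ is only issued when $x\in\DD(t-1)$, hence $\MM_{t-1}(h(x))\geq 1$, so the dictionary's delete precondition ($\MM_{t-1}(x_t)>0$) holds; this is exactly where the set-to-random-multiset reduction would break without the assumption, and it is the conceptual crux. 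The ``oblivious adversary'' remark before the lemma ensures that the sequence of operations fed to $\Dict$ is of the form its guarantee tolerates (random insertions possibly colliding, deletions of previously-inserted elements).

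\textbf{Correctness (one-sided error) and false positive bound.} For any $t$ with $\op_t=\query(x_t)$: if $x_t\in\DD(t)$ then $h(x_t)\in h(\DD(t))$ has multiplicity $\geq 1$, so $\Dict$ answers $1$, giving $\out_t\geq\indD(x_t)$ — the one-sided error condition. For the false positive probability, fix $t$ and $x_t\notin\DD(t)$; then $\FP_t$ occurs iff $\Dict$ answers $1$ on $\query(h(x_t))$, which (conditioned on no overflow) happens iff $h(x_t)\in h(\DD(t))$, i.e., iff $h(x_t)=h(y)$ for some $y\in\DD(t)$. Since $|\DD(t)|\leq n$ and $h$ maps into $\left[\frac{n}{\eps}\right]$ uniformly and independently, a union bound gives $\Pr{h(x_t)\in h(\DD(t))}\leq n\cdot\frac{\eps}{n}=\eps$. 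The overflow event contributes only a $1/\poly(n)$ term; I would either absorb it into the statement by noting the filter's guarantees are ``whp'' as in Theorem~\ref{thm:crate-filter}, or observe that even conditioning on overflow the bound $\Pr{\FP_t}\leq\eps$ can be argued by treating $h(x_t)$'s value as independent of the dictionary's internal randomness and of which elements landed in $\DD(t)$. The remaining items of Theorem~\ref{thm:crate-filter} — the memory-access count $4\log(1/\eps)/w+O(1)$ and the space $(1+o(1))n\log(1/\eps)+O(n)$ — follow by substituting $|\UU|=n/\eps$ (so $\log(|\UU|/n)=\log(1/\eps)$) into the corresponding bounds of Theorem~\ref{thm:crate-dict}, plus the $O(1)$-time evaluation of $h$ from the hash-function assumption.

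\textbf{Main obstacle.} The only genuinely delicate point is the interaction between the probability space of the filter and that of the dictionary. The dictionary's no-overflow guarantee for random multisets is stated over the \emph{joint} distribution of the dictionary's coins and the realization of the multiset; here the ``realization'' is determined by $h$ together with the (adversarial but oblivious) choice of $\DD(t)$. I would need to argue carefully that, because the adversary is oblivious (it fixes the operation sequence before $h$ is drawn), the induced distribution on $h(\DD(t))$ is genuinely that of a random multiset and is independent of the dictionary's internal coins, so Theorem~\ref{thm:crate-dict}(1) applies verbatim; and that the false-positive analysis, which uses only the randomness of $h$ on the fresh query point $x_t$, is not corrupted by conditioning on the (small-probability) overflow event. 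Handling both probabilistic couplings cleanly — rather than the arithmetic — is where the care goes.
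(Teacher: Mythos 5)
Your proposal is correct and follows essentially the same approach as the paper's (very terse) proof sketch: translate filter operations into dictionary operations on $h(\DD(t))$, rely on multiplicity tracking to make deletions safe, answer a query by the positivity of the multiplicity of $h(x_t)$, and bound the false-positive probability by the collision probability of $h$, which is at most $\eps$ by the choice of range $\left[\frac{n}{\eps}\right]$. You flesh out the same argument with more care (the invariant $\MM_t = h(\DD(t))$, the role of Assumption~\ref{assume:del}, the union bound); the probabilistic coupling you flag as delicate is in fact benign, since an overflow can only cause an element to be dropped (a correctness failure handled separately by guarantee~(1)), not to be spuriously reported present, so the false-positive bound depends only on the randomness of $h$ at the fresh query point.
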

\begin{proofsketch}
The $\Dict$ records the multiplicity
of $h(x_t)$ in the multiset $h(\DD(t))$ and so deletions are performed correctly. The filter outputs $1$ if
and only if the multiplicity of $h(x_t)$ is positive. False positive events are caused by
collisions in $h$. Therefore, the
probability of a false positive is bounded by $\eps$ because of the
cardinality of the range of $h$.
\end{proofsketch}

\section{The Pocket Dictionary}\label{sec:DictSmall}

In this section, we propose a fully-dynamic dictionary construction for the
special case of small multisets of elements consisting of
quotient/remainder pairs $(q_i,r_i)$. This construction works provided
that the quotients belong to a small interval. We emphasize that the
data structure works for multisets in general, without the assumption
that they are random.

Let $F\triangleq \set{(q_i,r_i)}_{i=0}^{f-1}$ denote the multiset of
quotient/remainder pairs to be stored in the dictionary, with
$q_i\in [m]$ and $|r_i|=\ell$, for every $i\in[f]$. Namely, $m$ is the
length of the interval that contains the quotients $q_i\in [m]$, $f$
is an upper bound on the number of elements $(q_i,r_i)$ that will be
stored in the dictionary, and $\ell$ is the length in bits of each
remainder $r_i$.  We store $F$ in a data structure called a
\emph{pocket dictionary} denoted by $\PD(m,f,\ell)$.

The pocket dictionary data structure uses two binary strings, denoted
by $\hheader(F)$ and $\bbody(F)$, as follows.  Let
$n_q\triangleq |\set{i \in [f] \mid q_i=q}|$ denote the number of
elements that share the same quotient $q$.  The header $\hheader(F)$
stores the vector $(n_0,\ldots,n_{m-1})$ in unary as the string
$1^{n_0}\circ 0 \circ \cdots \circ 1^{n_{m-1}}\circ 0$. The length of
the header is $m+f$.  The body $\bbody(F)$ is the concatenation of
remainders $r_i$ sorted in nondecreasing lexicographic order of
$\set{(q_i,r_i)}_{i\in[f]}$. The length of the body is
$f\cdot \ell$. We refer the reader to Figure~\ref{fig:pocket} for a
depiction.

\begin{figure}[h]

\centering
\includegraphics[width=\textwidth]{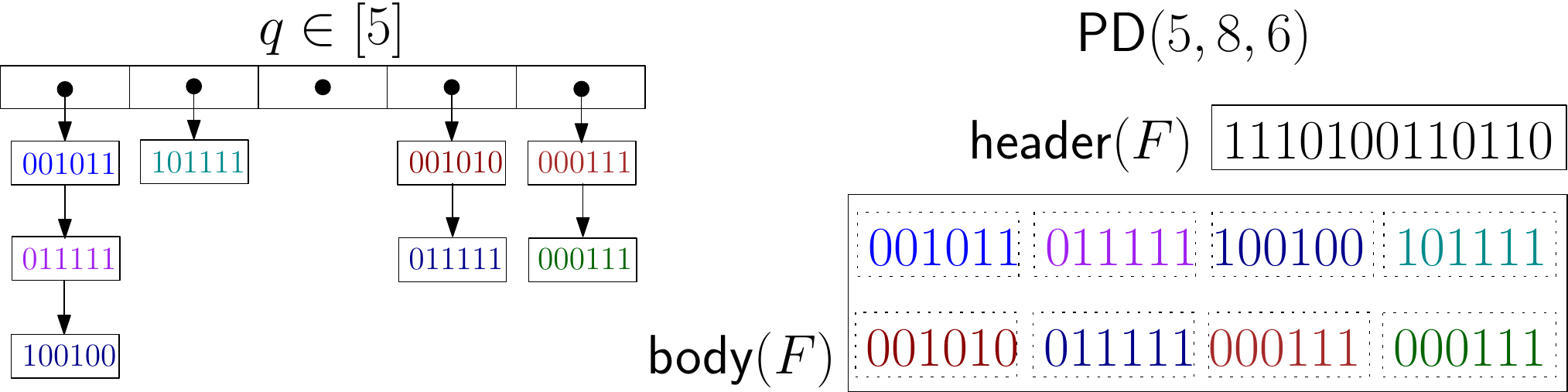}
\caption{\textbf{Pocket Dictionary Encoding.} Given $F = \{(0,001011),(0,011111),(0,100100),\\(1,101111), (3,001010),(3,011111),(4,000111),(4,000111)\}$, on the left we have its open address hash table representation and on the right, the way it is encoded in a $\PD(5,8,6)$.}
\label{fig:pocket}
\end{figure}

\medskip\noindent
Let $|\PD(m,f)|= |\hheader(F)|+|\bbody(F)|$ denote the number
of bits used to store $F$.
Recall that the word length $w$ satisfies $w=\Omega (\log n)$.

\begin{claim}\label{claim:DictSmall size} The number of bits that
$\PD(m,f,\ell)$ requires is
$m+f(1+\ell)$.  If $f=O(m)$, $\ell\leq w$ and
$m=O\parentheses{\frac{w}{\ell}}$, then
$\PD(m,f,\ell)$
fits in a constant number of words.
\end{claim}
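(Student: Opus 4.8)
The plan is to verify the two parts of Claim~\ref{claim:DictSmall size} directly from the definitions of $\hheader(F)$ and $\bbody(F)$ given just above the statement.

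First I would establish the exact bit count. By definition, $\hheader(F)$ encodes the vector $(n_0,\dots,n_{m-1})$ in unary as $1^{n_0}\circ 0\circ\cdots\circ 1^{n_{m-1}}\circ 0$, which contributes one $0$ per quotient value (that is, $m$ bits) plus $\sum_{q\in[m]} n_q = f$ bits for the unary tallies, for a total of $m+f$ bits. The body $\bbody(F)$ is the concatenation of the $f$ remainders, each of length exactly $\ell$, for a total of $f\cdot\ell$ bits. Hence $|\PD(m,f,\ell)| = (m+f) + f\ell = m + f(1+\ell)$, which is the first claim. This part is essentially bookkeeping and presents no obstacle; I would just cite the construction and Figure~\ref{fig:pocket}.

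Next I would bound the total under the stated hypotheses $f=O(m)$, $\ell\le w$, and $m=O(w/\ell)$. Substituting $f=O(m)$ into $m+f(1+\ell)$ gives $O(m) + O(m)\cdot O(\ell) = O(m\ell)$ (using $\ell\ge 1$ so that $1+\ell = O(\ell)$). Then $m=O(w/\ell)$ gives $m\ell = O(w)$, so the whole structure occupies $O(w)$ bits, i.e., a constant number of words. The only mild subtlety — and the closest thing to an obstacle — is making sure the constants compose correctly: one should be slightly careful that $\ell\ge 1$ is being used implicitly (a remainder of zero bits would make $1+\ell = 1 \ne O(\ell)$ in the degenerate sense, though $O(m\ell)$ would still need $m=O(w)$ separately); in the intended regime $\ell\ge 1$, so $m+f(1+\ell)=O(m\ell)=O(w)$ cleanly. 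I would state this in one or two sentences and conclude that $\PD(m,f,\ell)$ fits in $O(1)$ words, which is what is needed downstream so that operations on a pocket dictionary cost a constant number of memory accesses.

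Overall this is a short computational claim: part (1) is an identity read off the encoding, and part (2) is a one-line asymptotic substitution. I do not anticipate any real difficulty; the writeup should take only a few lines.
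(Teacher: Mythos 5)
Your proposal is correct and matches what the paper intends: the paper states Claim~\ref{claim:DictSmall size} without an explicit proof precisely because it follows by the direct bookkeeping you carry out (header $= m+f$, body $= f\ell$, then substitute the parameter bounds). Your side remark about needing $\ell\ge 1$ to write $1+\ell=O(\ell)$ is a fair observation, but it is harmless here since $\ell=\log\rho\ge 1$ in all regimes where a pocket dictionary is used.
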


One can decode $F$ given $\hheader(F)$ and $\bbody(F)$. Moreover,
one can modify  $\hheader(F)$ and $\bbody(F)$ to reflect
updates (insertions and deletions) to $F$. When the pocket dictionary
fits in a constant number of words, all operations on the multiset
can, therefore, be executed using a constant number of memory
accesses.

We say that a pocket dictionary $\PD(m,f,\ell)$ overflows
if more than $f$ quotient/remainder pairs are stored in it.

\subsection{Variable-Length Pocket Dictionary}\label{sec:var pocket dict}
The use of variable-length adaptive remainders for the sparse case requires a variant of the
pocket dictionary that supports variable-length values. We refer to this construction as the \emph{variable-length pocket dictionary} ($\VarPD$). This variant
shares most of its design with the pocket dictionary with fixed size remainders. We briefly describe the required modifications.

We denote this modified pocket dictionary by
$\VarPD(M,F,L)$. This new dictionary is in charge of storing at
most $F$ variable-length remainders $\set{r_i}_{i \in [F]}$ with
quotients $q_i\in[M]$, where $\sum_{i\in [F]} \size{r_i}\leq L$.  The
header of the dictionary is the same as in $\PD(M, F, L)$. The
body consists of the list of remainders separated by an
``end-of-string'' symbol. The list of remainders appears in the body
in nondecreasing lexicographic order of the strings $(q_i, r_i)$. We
suggest to simply use two bits to represent the ternary alphabet, so 
the space requirement of
the pocket dictionary for variable-length remainders at most doubles.

\begin{claim}\label{claim:varpd}
The number of bits that $\VarPD(M,F,L)$ requires is
$\size{\VarPD(M,F,L)} \leq M+3F+2L$. If
$\max\set{M,F,L} = O\parentheses{w}$, then $\VarPD(M,F,L)$ fits in
a constant number of words.
\end{claim}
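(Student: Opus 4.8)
The plan is to account separately for the two binary strings that make up $\VarPD(M,F,L)$ — its header and its body — mirroring the analysis of $\PD(m,f,\ell)$ in Claim~\ref{claim:DictSmall size}.

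First I would bound the header. Since the header of $\VarPD(M,F,L)$ is described as being identical to the header of $\PD(M,F,L)$, namely the unary encoding $1^{n_0}\circ 0\circ\cdots\circ 1^{n_{M-1}}\circ 0$ of the counts $(n_0,\ldots,n_{M-1})$, its length is $M+F$ bits exactly as in Claim~\ref{claim:DictSmall size}, and this bound is independent of the (now variable) remainder lengths. Next I would bound the body. The body is the concatenation, in nondecreasing lexicographic order of the pairs $(q_i,r_i)$, of the at most $F$ remainders $\set{r_i}_{i\in[F]}$, each followed by an end-of-string separator, written over a ternary alphabet. There are at most $F$ separator symbols, and the remainder content contributes $\sum_{i\in[F]}\size{r_i}\leq L$ symbols, so the body consists of at most $L+F$ ternary symbols. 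Encoding each ternary symbol with two bits, the body occupies at most $2(L+F)$ bits. Summing, $\size{\VarPD(M,F,L)}\leq (M+F)+2(L+F)=M+3F+2L$, which is the claimed bound.

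For the second statement, I would observe that if $\max\set{M,F,L}=O(w)$ then $M+3F+2L=O(w)$, so the whole structure fits within a constant number of words of $w$ bits each; consequently, just as for the fixed-length pocket dictionary, decoding $F$ and performing insertions and deletions can be done with $O(1)$ memory accesses.

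There is no real obstacle here; the only points needing a touch of care are confirming that the header bound of Claim~\ref{claim:DictSmall size} transfers verbatim to the variable-length case (it depends only on $M$ and $F$), and that introducing the end-of-string symbol to delimit variable-length remainders inflates the body by at most a factor of two when the ternary alphabet is packed two bits per symbol.
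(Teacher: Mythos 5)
Your proof is correct and follows exactly the accounting the paper intends: the paper gives no explicit proof of this claim, but Section~4.1's description of the $\VarPD$ (header identical to $\PD(M,F,L)$ giving $M+F$ bits; body as $L+F$ ternary symbols packed two bits per symbol giving $2(L+F)$ bits) leads to precisely your $M+3F+2L$ bound, and the $O(w)$ conclusion is immediate.
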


We say that a variable-length pocket dictionary
$\VarPD(M, F, L)$ overflows if the cardinality of the multiset
exceeds $F$ or the total length of the remainders exceeds $L$.

\subsection{Auxiliary  Data Structures}

We employ two additional data structures for storing elements whose
pocket dictionaries overflow. The first data structure is a
space-inefficient variant of the pocket dictionary, called a
\emph{counting set dictionary} ($\CSD$). Its variable-length
counterpart is called a \emph{variable-length counting set dictionary}
($\VarCSD$). These data structures need not be space-efficient since,
across all distributed spares, they store only a sublinear number of elements. A key property of the
$\CSD$ is that its failure probability (i.e. overflow) is
exponentially small in its capacity. The implementation details of
these auxiliary data structures appear in Appendix~\ref{sec:aux}.

\section{Crate Dictionary - Dense Case}\label{sec:cratefilter}

In this section, we present a fully-dynamic dictionary for sets and random
multisets for the case in which the relative sparseness satisfies
$w/\rho=\Omega_n(1)$. We refer to this construction as the Crate
Dictionary for the dense case. 

\subsection{Structure}

\begin{table}[t]
\centering
\begin{tabular}{|c|c|l|}
\hline
symbol& value & role \\
\hline\hline 
$\mu$& $\mumacro$ & extra capacity (over the average) per $\PD$
\rule{0pt}{3.6ex}\\
$\ell$& $\log \rho$&  length of $r(x)$ \\
$m$&  $\frac{w}{\ell}$& interval of quotients per $\PD$\\
$f$& $(1+\mu) m$& capacity of $\PD$\\
$\beta$& $\betaval$& exponent affecting number of crates, for $\delta>0$\\
$\beta'$& $\betapval$ & exponent affecting size of $\SID$
\\
$\CC$ & $w^{\beta}$ & $n/\CC$ is the number of crates.\\
$\CC'$ & $w^{\beta'}$ & capacity of the $\SID$\\
\hline
\end{tabular}
\caption{Internal parameters of a Crate Dictionary with parameter
$n$. Recall that $\rho = \size{U}/n.$}
\label{tbl:crate}
\end{table}

The Crate Dictionary is parametrized by the cardinality $n$
of the dynamic random multiset. The dictionary consists of
two levels of dictionaries.  The dictionaries in the first level are
called \emph{crates}.  Each crate consists of two main parts: a set of
pocket dictionaries (in which most of the elements are stored) and a
space-inefficient dictionary ($\SID$) that stores elements whose
corresponding pocket dictionary is full. The internal parameters of
the Crate Dictionary are summarized in Table~\ref{tbl:crate}.  See
Fig.~\ref{fig:crate} for a depiction of the
structure of each crate.

The number of crates is $n/\CC$. 
The number of pocket dictionaries in each crate is
$\CC/{m}$, and each pocket dictionary is a
$\PD(m,f,l)$. The space inefficient dictionary $\SID(\tf,\tl)$
stores at most $\tf$ elements, each of length $\tl$ bits.  Each $\SID$ is
implemented using counting set dictionaries (Sec.~\ref{sec:count set
dict}). We elaborate on the structure and functionality of the
$\SID$ separately in Sec.~\ref{sec:SID}.

\begin{figure}[h]
\centering
\includegraphics[width=\textwidth]{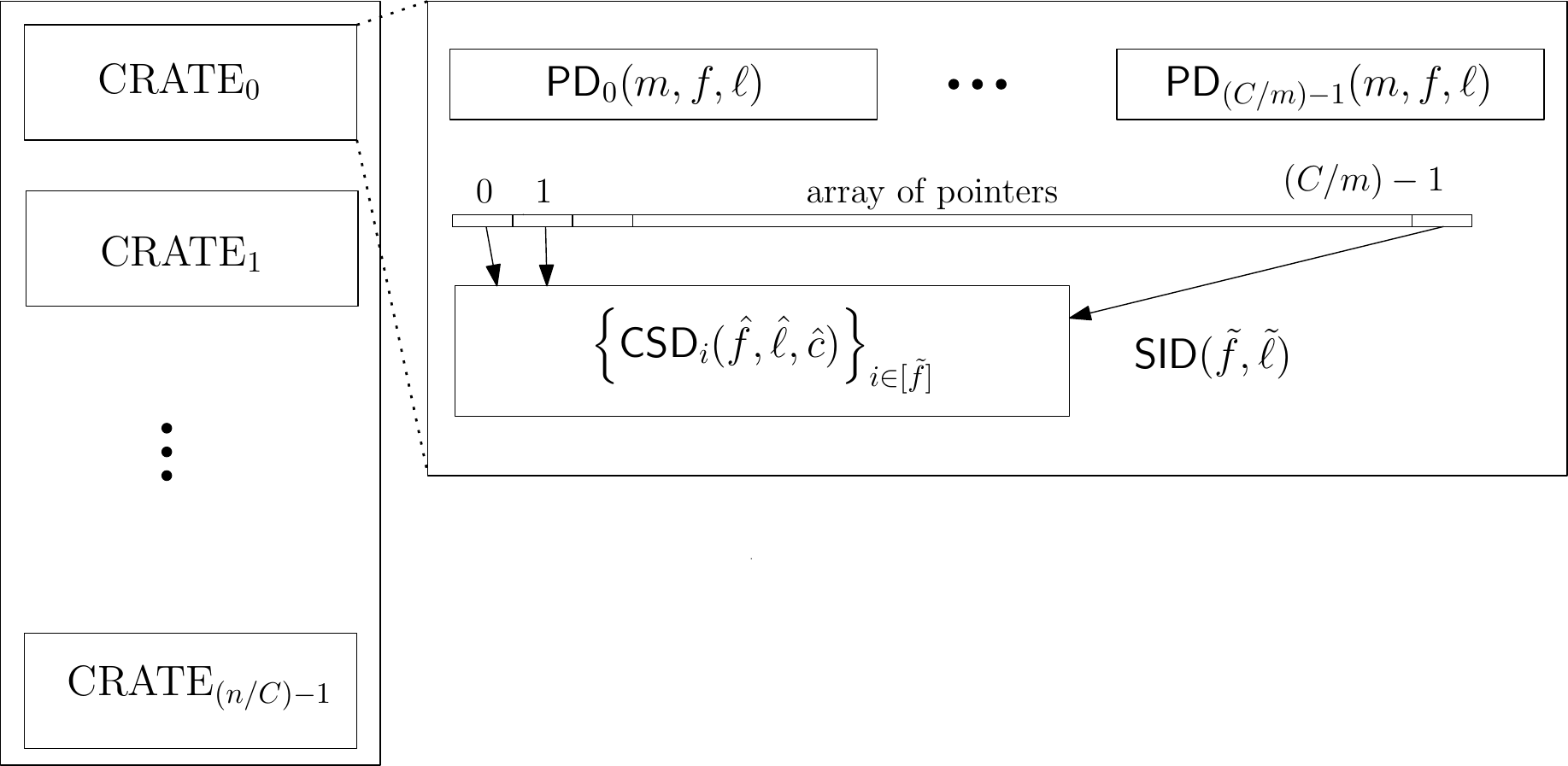}
\caption{\textbf{Crate Structure.}  Each crate has $\CC/m$
$\PD$'s, an array of $\CC/m$ pointers (i.e, one
pointer per $\PD$) and a $\SID(\tf, \tl)$. The
$\SID$ consists of $\tf$ $\CSD$'s that  store
elements and pointers.}
\label{fig:crate}
\end{figure}
\subsection{Element Representation}

Without loss of generality, we have that $\UU=[\rho n]$.  Since the multiset is
random, its elements are chosen with replacements u.a.r. and independently from
$[\rho n]$.  We represent each $x\in [\rho n]$ by a $4$-tuple
$(\hc(x),\hb(x),q(x),r(x))$, where
\begin{align*}
h^{c}(x)&\triangleq\floor*{\frac{x}{\ceil*{\rho\cdot \CC}}}\in \left[\frac{n}{\CC}\right] &\text{(crate index)}\\
h^{b}(x)&\triangleq \floor*{\frac{x(\bmod\ceil*{\rho\cdot \CC})}{\ceil*{\rho m}}}
\in \left[\frac{\CC}{m}\right] &\text{($\PD$ index)}\\
q(x)&\triangleq \floor*{\frac{
x(\bmod\ceil*{\rho\cdot \CC}) \bmod(\ceil{\rho m})}
{\ceil{\rho}}} \in [m] & \text{($\PD$ quotient)}\\
r(x)&\triangleq
x(\bmod\ceil*{\rho\cdot \CC}) \bmod(\ceil{\rho m}) \bmod(\ceil{\rho})
\in\left[\rho\right]&\text{($\PD$ remainder)}
\end{align*}

Informally, consider the binary representation of $x$, then $r(x)$ is
the least significant $\log \rho$ bits, $q(x)$ is the next $\log m$
bits, and so on. 

An element $x$ is stored in the crate of index $\hc(x)$.  Within its
crate, the element $x$ is stored in one of two places: (1)~in the pocket
dictionary of index $\hb(x)$ that stores $(q(x),r(x))$, or  (2)~in the
$\SID$ that stores $(\hb(x),q(x),r(x))$.  (Storing $\hb(x)$ in the
$\SID$ avoids spurious collisions since elements from all the full
pocket dictionaries in the crate are stored in the same $\SID$.)

\subsection{Functionality}
An operation on element $x\in\UU$ is forwarded to the crate whose
index is $\hc(x)$. We elaborate here on how a crate $\hc(x)$ supports
queries, insertions, and deletions involving an element
$x$. Operations to the $\SID$ are described in Sec.~\ref{sec:SID}.

A $\query(x)$ is implemented by searching for $(q(x),r(x))$ in the
pocket dictionary of index $\hb(x)$ in the crate. If the pocket
dictionary is full and the element has not been found, the query
is forwarded to the $\SID$.

An $\ins(x)$ operation first attempts to insert $(q(x),r(x))$ in the
pocket dictionary of index $\hb(x)$ in the crate. If the pocket
dictionary is full, it forwards the insertion to the $\SID$.

To make sure that the $\SID$ does not overflow whp in the fully-dynamic setting,
we maintain the following invariant.
\begin{invariant}\label{inv:SID}
An element $y$ is stored in the $\SID$ only if
the pocket dictionary of index $\hb(y)$ in crate $\hc(y)$ is full.
\end{invariant}

As a consequence, an element stored in the $\SID$ must be moved to its
corresponding pocket dictionary whenever this pocket dictionary is no
longer full due to a deletion. We refer to the operation of retrieving
such an element from the $\SID$ as a \emph{pop} operation
(Sec.~\ref{sec:pop}).

To maintain Invariant~\ref{inv:SID}, a $\del(x)$ differentiates
between the following cases depending on where  $x$
is stored and whether its pocked dictionary is full:
\begin{enumerate}
\item If $(q(x),r(x))$ is in the pocket dictionary of index $\hb(x)$ and
the pocket dictionary was not full before the deletion, we simply
delete $x$ from the pocket dictionary and return.
\item If $(q(x),r(x))$ is in the pocket dictionary of index $\hb(x)$
and the pocket dictionary was full before the deletion, we delete
$(q(x),r(x))$ from the pocket dictionary and issue a $\pop(\hb(x))$
to the $\SID$. The pop operation returns a triple
$(\hb(y), q(y), r(y))$, where $\hb(y) = \hb(x)$ (if any). We then
insert $(q(y), r(y))$ into the pocket dictionary of index $\hb(y)$
and return.
\item If $(\hb(x),q(x),r(x))$  is found in the $\SID$ of crate $q_c(x)$, we
delete it from the $\SID$.
\end{enumerate}
Note that duplicate elements must be stored to support random
multisets. In particular, a deletion should erase only one duplicate
(or decrease the counter in the $\SID$ appropriately). This
description allows for copies of the same element to reside both in
the $\SID$ and outside the $\SID$. Precedence is given to storing
elements in the pocket dictionaries because the $\SID$ is used only for
elements that do not fit in their pocket dictionaries due to
overflow. (One could consider a version in which deletions and
insertions of duplicates are first applied to the $\SID$.)

\subsection{A Space-Inefficient Dictionary ($\SID$)}\label{sec:SID}
The space-inefficient dictionary ($\SID$) is a fully-dynamic dictionary for
arbitrary dynamic multisets. It is used for storing elements whose pocket
dictionaries are full. A $\SID$ supports queries, insertions,
deletions, and pop operations (see Sec.~\ref{sec:pop} for a
specification of pop operations).  A key property of a $\SID$ is that
the probability of overflow is at most $1/\poly(n)$. Note that this
bound on the probability of overflow can be exponentially small in the
cardinality of the multiset stored by a $\CSD$ in the $\SID$.%
\footnote{Maybe previous dictionary designs such
as~\cite{patrascu2014dynamic, demaine2006dictionariis} can be parameterized to work with
exponentially small failure probabilities.}%
\footnote{The probability space over which overflow is bounded depends
only on the random hash function of the $\SID$.}
The explicit bits of an element $x$ stored in the $\SID$ are
$(\hb(x),q(x),r(x))$.  See Fig.~\ref{fig:SID} for a depiction.

\paragraph{Counting Set Dictionaries.}
The basic building block of the $\SID$ is a data structure called the \emph{counting set dictionary} ($\CSD$).
Each $\CSD$ stores a multiset of elements. The data structure is parametrized by the cardinality of the
support set of the multiset and by the maximum multiplicity of elements in the multiset.
The $\CSD$ stores $(\hb(x),q(x),r(x))$ and pointers of length $O(\log w)$ used to support $\pop$ operations.
More details on the $\CSD$ can be found in Appendix~\ref{sec:count set
dict}.

\paragraph{Structure.} The parameters used for the design of the
$\SID$ are summarized in Table~\ref{tbl:sid parameters}.
\begin{table}
\centering
\begin{tabular}{|c|c|l|}
\hline
symbol& value & role \\
\hline\hline 
$\tf$& $O(C')$& bound on the cardinality of the multiset and\rule{0pt}{2.6ex}\\
	&& the number of $\CSD$s per $\SID$\\
$\tl$& $O(\log w)$& length of $(\hb(x),q(x),r(x))$\\
$\hf$&  $cw/\hl$& cardinality of the support of the multiset in $\CSD$\\
$\hl$& $\tl+O(\log w)$ & length element plus pointers\\
$\hatc$& $\tf$ & bound on multiplicity of element in $\CSD$\\
c& $O(1)$ & constant affecting exponent in overflow probability
\\
\hline
\end{tabular}
\caption{Parametrization of the $\SID$.}
\label{tbl:sid parameters}
\end{table}
A $\SID$ consists of $\tf$ counting set dictionaries
$\CSD(\hf,\hl,\hatc)$.\footnote{We pessimistically set the upper bound $\hatc$
on the  maximum multiplicity of any element in a $\CSD$ to be the
cardinality $\tf$ of the dynamic multiset stored in the $\SID$.}  By
Claim~\ref{claim:CSD}, each $\CSD$ fits in a constant number of words.
An independent fresh random hash function
$\thash:{\UU} \rightarrow [\tf]$ is utilized to map each element to
one of the $\CSD$s in the $\SID$. We note that
$\thash$ must be independent of $\hb$ for the analysis of the
maximum loads in each $\CSD$ to work (the same hash
function $\thash$ may be used for all $\SID$s).\footnote{We note that if
$(\hb(x), q(x),r(x)) = (\hb(y), q(y),r(y)) $ are stored in the same
$\SID$, then $x=y$.}

\begin{figure}[h]
\centering
\includegraphics[width=\textwidth]{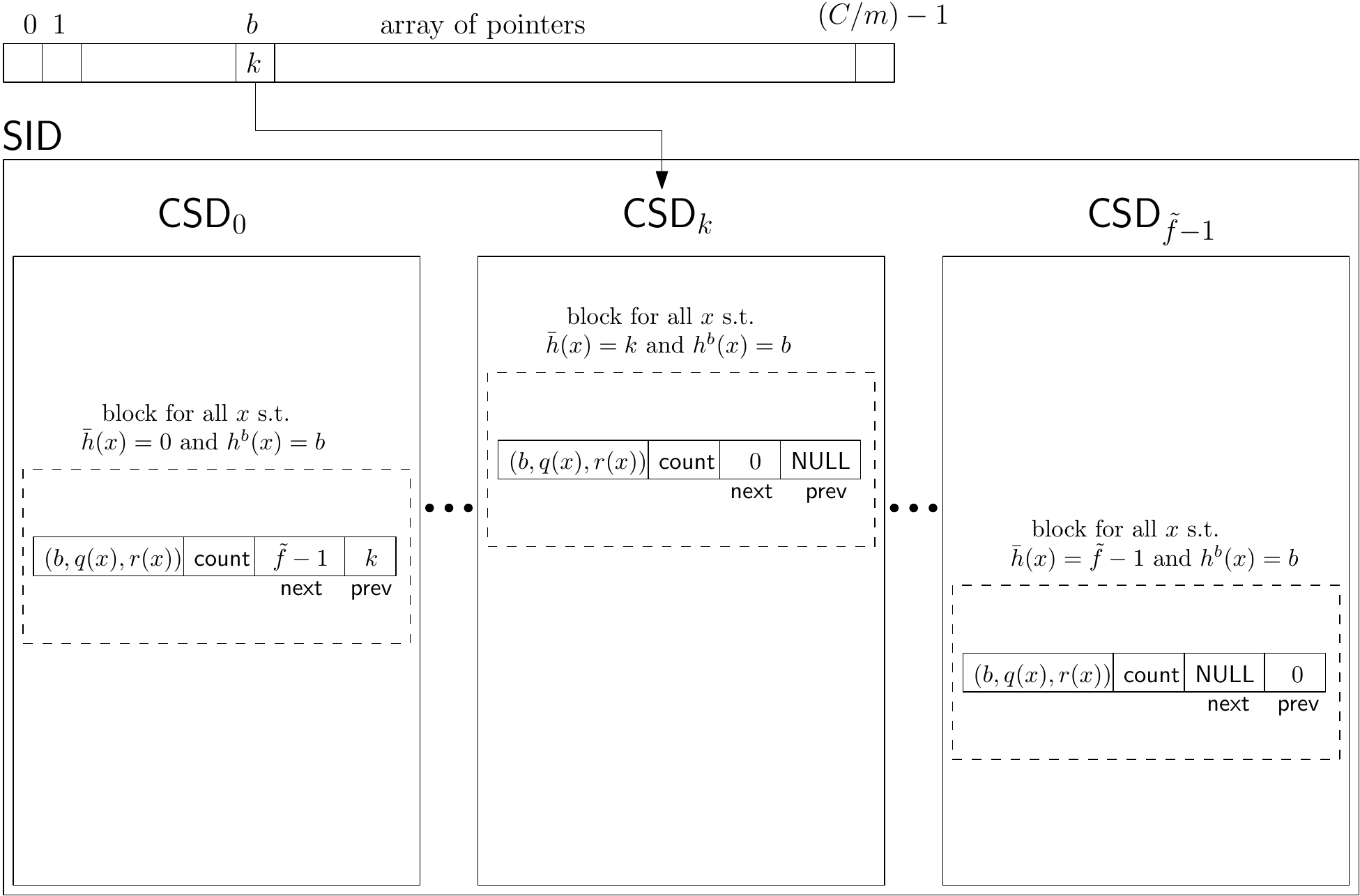}
\caption{\textbf{$\SID$ Structure.} Each $\CSD_i$ explicitly stores $(\hb(x),q(x),r(x))$
contiguously    in non-decreasing lexicographic order for
every element $x$ with $\thash(x)=i$ that is not stored in the $\PD$s. We remark that no attempt
to save space is made since the $\SID$ stores a sublinear number of elements. 
Each distinct element stored in the $\CSD$ has an associated counter and $\textsf{next}$ and
$\textsf{prev}$ pointers for maintaining doubly-linked lists. The pointers refer to indices of $\CSD$s.
The same pointers are stored across
all elements in a given $\CSD$ that belong to the same $\PD$. In the figure, the doubly linked list
for $\PD_b$ starts with $\CSD_k$, goes to $\CSD_0$ and ends at $\CSD_{\tf-1}$.}
\label{fig:SID}
\end{figure}

\paragraph{Functionality.}
Each insert, delete and query operation on an element $x$ is forwarded
to the $\CSD$ of index $\thash(x)$. Multiplicities are counted using
counters in the $\CSD$. If the counter becomes zero after a deletion,
the element is lazily removed from the $\CSD$.  The
pop operation is described separately in Sec.~\ref{sec:pop}.  Since
each $\CSD$ fits in a constant number of words, all these operations
require only a constant number of memory accesses.

\subsubsection{Pop Operations}\label{sec:pop}
To maintain Invariant~\ref{inv:SID} after $\del(x)$, a $\pop$
operation is issued to the $\SID$ of crate $\hc(x)$ whenever the
pocket dictionary of index $\hb(x)$ transitions from full to
non-full. The input is the index $\hb(x)$ of the pocket dictionary and
the output $(\hb(y), q(y),r(y)$ is the explicit part of an element $y$
stored in the $\SID$ with the property that $\hb(y) = \hb(x)$. A copy
of the element $y$ is then deleted from the $\SID$ and inserted into
the pocket dictionary of index $\hb(x)$.

\paragraph{Structure.}
In order to support $\pop$ operations, the $\SID$ keeps a doubly
linked list per $\PD_i$ (see Fig.~\ref{fig:SID}). The doubly linked list $L_i$ of $\PD_i$ contains
the indexes of the $\CSD$s that store elements $x$ such that
$\hb(x)=i$.
The doubly linked list $L_i$ is implemented by storing pointers to $\CSD$s
alongside the elements $\set{\hb(x),q(x),r(x))\mid \hb(x)=i}$ stored
in the $\CSD$s.

The fact that each crate has its own $\SID$ implies
that the pointers we store require only $\log \tf =O(\log w)$ bits. A
separate array is used to store  the list heads.

\paragraph{Functionality.}
Insertion of a new element $x$ to the $\SID$ is implemented as
follows.  Recall that $x$ is mapped to the $\CSD$ of index $\thash(x)$.
If $\CSD_{\thash(x)}$ already contains an element $y$ such that
$\hb(y)=\hb(x)$, then $x$ is inserted using the same pointers of $y$.
Otherwise, $x$ is inserted to the $\CSD$, its pointer is set to
the head of $L_{\hb(x)}$, and the head of the list is updated to $\thash(x)$.

If a duplicate element is inserted, the counter is merely
incremented. Deletion of $x$ from the $\SID$ decrements the counter.
If there is no other remaining element $y$ in $\CSD_{\thash(x)}$ with
$\hb(y)=\hb(x)$, then we update pointers in $L_{\hb(x)}$ to skip over
$\CSD_{\thash(x)}$.

A pop operation with input $i$ returns the first element $y$ from the
$\CSD$ pointed to by the head $L_i$ with $\hb(y)=i$. The counter of
$y$ is decremented. If $y$'s counter reaches zero, we continue as if
$y$ is deleted. The overhead in memory accesses required to
support the doubly linked lists is constant.

\subsubsection{Analysis: Proof of Lemma~\ref{lemma:crate-sid} in the dense case}
Even if the cardinality of the dynamic multiset
stored in the $\SID$ is at most $\tf$, there are two possible
causes for an overflow of the $\SID$:~(1) more than $\hf$ distinct
elements are stored in a $\CSD$ or,~(2) more than $\hatc$ copies of
the same element are stored.

Two technical challenges arise when we try to bound the probability of
overflow in the $\SID$:~(1) The triples $(\hb(x),q(x),r(x))$
stored in the $\SID$ are neither independent nor distributed
u.a.r. and, ~(2) The number of elements stored in the $\SID$ is
$\tf$, yet we seek an upper bound on the overflow probability that is
$1/\poly(n)$.

Consider a dynamic multiset stored in the $\SID$ over a polynomial
number of insertions and deletions to the Crate Dictionary.
\begin{claim}\label{crate-cpd-ovf}
If the cardinality of the dynamic multiset stored in the $\SID$ is at
most $\tf$, then the counting set dictionaries in the
$\SID$ do not overflow whp.
\end{claim}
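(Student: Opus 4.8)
The plan is to bound the overflow probability of a single $\CSD$ within a single $\SID$ and then take a union bound over all $\CSD$s (there are $\tf=O(C')=\poly(w)$ per crate and $n/C=n/w^\beta$ crates, so $\poly(n)$ in total), and also over all polynomially many time steps. Since the tolerated failure probability is $1/\poly(n)$, it suffices to show that a fixed $\CSD$ overflows at a fixed time with probability $e^{-\Omega(w)}$, because $w=\Omega(\log n)$ and we can inflate the constant in $\Omega(w)$ by raising $\beta$. The two overflow causes are (1) more than $\hf=cw/\hl$ distinct elements land in one $\CSD$, and (2) some element accumulates more than $\hatc=\tf$ copies. Cause (2) cannot happen at all, since the whole $\SID$ has cardinality at most $\tf$, so $\hatc=\tf$ is never exceeded; I would dispense with it in one sentence. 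So the real content is cause (1).

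For cause (1), first I would fix the dynamic multiset $S$ of elements currently stored in the $\SID$ (as a set of $\le\tf$ triples, possibly with multiplicity); by the discussion around Invariant~\ref{inv:SID} these are exactly the elements whose pocket dictionaries are full. The key observation is that the hash function $\thash:\UU\to[\tf]$ is fresh, perfectly random, and \emph{independent of $\hb$} (as emphasized in the $\SID$ structure paragraph), and that distinct triples $(\hb(x),q(x),r(x))$ correspond to distinct universe elements $x$. Hence, \emph{conditioned on the set of distinct triples that end up in the $\SID$}, the values $\thash$ assigns to them are independent and uniform over $[\tf]$. So the number of distinct triples mapped to a fixed $\CSD$ index $i$ is stochastically dominated by $\mathrm{Bin}(\tf,1/\tf)$, which has mean $1$. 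A standard Chernoff bound then gives that this count exceeds $\hf=cw/\hl=\Theta(w/\log w)$ with probability at most $e^{-\Omega(w/\log w)}$ — actually better, since the deviation is from mean $1$ to $\Theta(w/\log w)$, the Chernoff tail for the upper deviation of a sum of indicators gives roughly $(e/\hf)^{\hf}\le 2^{-\Omega(w)}$ once $\hf=\omega(1)$, which is more than enough.

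The subtlety I expect to be the main obstacle is the dependence issue flagged in the excerpt: the triples stored in the $\SID$ are neither independent nor uniform, and, more worryingly, \emph{which} elements end up in the $\SID$ depends on the history of insertions/deletions and on $\hb$ and on the pocket-dictionary capacities. The clean way around this is to argue that the \emph{identity} of the set of distinct triples in the $\SID$ is determined by $\hc$, $\hb$, $q$, $r$ and the operation sequence, and is therefore independent of $\thash$; then condition on that set and invoke freshness of $\thash$. One must be slightly careful that the adversary/operation sequence is oblivious (which holds here, since in the filter reduction the multiset is random and in the dictionary case the sequence is fixed in advance), so that conditioning on the realized set of $\SID$-triples does not leak information about $\thash$. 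Once this conditioning is set up correctly, the remaining Chernoff computation is routine, and I would only sketch it. I would close by noting that the union bound over $\CSD$s, crates, and the $\poly(n)$ time steps costs only a $\poly(n)$ factor, absorbed by choosing the constant in $\hf$ (equivalently $c$) large enough relative to the degree of the polynomial, which in turn only affects the $o(1)$ term in the space bound.
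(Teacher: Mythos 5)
Your proposal is correct and follows essentially the same approach as the paper: fix the set of distinct triples in the $\SID$, exploit the freshness and independence of $\thash$ to treat the $\CSD$ assignments as i.i.d.\ uniform, and bound the tail of the per-$\CSD$ count by $(e/\hf)^{\hf}=2^{-\Omega(w)}$ (the paper phrases this as $\binom{k}{L}\tf^{-L}\le (e/L)^L$, which is the same first-moment/Chernoff computation), before union-bounding over $\CSD$s, crates, and time steps. The only difference is that you spell out the conditioning argument explicitly, whereas the paper leaves it implicit.
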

\begin{proof}
Since we set $\hatc=\tf$, the maximum multiplicity of an element in
a $\CSD$ never exceeds the cardinality of the multiset in the
$\SID$. To complete the proof, we show that no more than $\hf$ distinct
elements are stored per $\CSD$.

Let $k$ be the number of distinct elements that are stored in the
$\SID$. The probability that more than $L$ distinct elements are
mapped to a specific $\CSD$ is bounded by
$\binom{k}{L}\cdot \tf^{-L} \leq \parentheses{\frac{e}{L}}^L$.
Setting $L= \hf \geq c\cdot w/O(\log(w))$ gives a probability that
is at most a polynomial fraction of $n$ (the exponent is linear in
$c$). The claim follows by applying a union bound over all the
$\CSD$s in a $\SID$ and all time steps.
\end{proof}
\begin{claim}\label{claim:sid mem}
If no $\CSD$ overflows, then every query, insert, delete, and pop to the
$\SID$ requires a constant number of memory accesses.
\end{claim}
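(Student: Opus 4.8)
The plan is to show that each of the four operations — query, insert, delete, pop — touches only a constant number of words of the $\SID$, given that no $\CSD$ overflows. The key structural facts to invoke are: (i) by Claim~\ref{claim:CSD}, each $\CSD(\hf,\hl,\hatc)$ fits in a constant number of words when it does not overflow, since $\hf = cw/\hl$ and $\hl = O(\log w)$ give $\hf\cdot\hl = O(w)$, plus the counters and pointers contribute $O(\hf\log w) = O(w)$ more bits; (ii) the hash function $\thash$ can be evaluated in $O(1)$ time (perfect-hashing assumption), so locating the relevant $\CSD$ index is free in the memory-access model; and (iii) the list-head array for the doubly linked lists $L_i$ is a single array indexed by $\PD$-index, so reading or writing a head costs $O(1)$ memory accesses.

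First I would handle query, insert, and delete together: each forwards the operation on $x$ to $\CSD_{\thash(x)}$. Computing $\thash(x)$ is $O(1)$ work (not a memory access). Reading $\CSD_{\thash(x)}$ takes $O(1)$ memory accesses since it fits in $O(1)$ words; searching it for $(\hb(x),q(x),r(x))$, decoding the header, incrementing/decrementing a counter, or inserting/lazily deleting an element are all word-level manipulations of these $O(1)$ words, hence $O(1)$ memory accesses to write back. For insert and delete there is the additional linked-list bookkeeping: on an insert that creates the first element of $\PD_{\hb(x)}$ in this $\CSD$, we read the head of $L_{\hb(x)}$ (one access), set the new element's pointer, and update the head (one access); on a delete that removes the last such element, we splice $\CSD_{\thash(x)}$ out of $L_{\hb(x)}$, which requires reading the predecessor and successor $\CSD$s (given by the stored $\textsf{next}$/$\textsf{prev}$ pointers), updating their pointers, and possibly the head array — all a constant number of $\CSD$s and one array slot, so $O(1)$ accesses.

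For the pop operation with input $i$: we read the head of $L_i$ ($O(1)$ access), read the $\CSD$ it points to ($O(1)$ access), extract the first stored element $y$ with $\hb(y)=i$, decrement its counter, and if the counter hits zero proceed exactly as in the delete case (splicing that $\CSD$ out of $L_i$), which we have already argued is $O(1)$ accesses. Finally I would remark that the claim is only about $\SID$-internal cost; the subsequent reinsertion of the popped element into a pocket dictionary is charged to the pocket dictionary and is itself $O(1)$ by Claim~\ref{claim:DictSmall size}.

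The only mild subtlety — and the thing worth stating explicitly rather than the main obstacle, since there really is no hard step here — is justifying that a $\CSD$ together with its auxiliary data (per-element counters bounded by $\hatc = \tf = O(C') = O(w^{\beta'})$, so $O(\log w)$ bits each, and the $O(\log \tf) = O(\log w)$-bit $\textsf{next}$/$\textsf{prev}$ pointers) still fits in $O(1)$ words: the support size is $\hf = O(w/\log w)$, so the total auxiliary overhead is $O(\hf \log w) = O(w)$ bits, matching the $O(w)$ bits of the $\CSD$ body itself, hence $O(1)$ words. Once that accounting is in place, every operation is a constant number of reads and writes of $O(1)$-word blocks, and the claim follows.
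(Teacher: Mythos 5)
Your proof is correct and follows the same route as the paper's: cite Claim~\ref{claim:CSD} for the constant-word size of a $\CSD$, and then observe that linked-list maintenance touches only the head array and a constant number of $\CSD$s (the one holding the element, its predecessor, and its successor). You provide more explicit bookkeeping (the $O(\hf\log w)=O(w)$-bit overhead of counters and pointers, the per-operation case analysis) than the paper's brief proof, but the underlying argument is the same.
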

\begin{proof}
Every insert, delete, or query operation is executed by accessing
the corresponding $\CSD$. The parametrization of the $\CSD$ implies
that it fits in a constant number of words (Claim~\ref{claim:CSD})
and therefore these operations require a constant number of memory
accesses. Maintaining the doubly linked lists accesses one entry in
the array and affects at most three different $\CSD$s: one $\CSD$
that stores the element, its predecessor, and its successor.
\end{proof}

\begin{claim}\label{claim:SID size}
The number of bits required for storing a $\SID$ is
$O(\tf w) + o(\CC)$.
\end{claim}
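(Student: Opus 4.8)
The plan is to account separately for the three kinds of storage that make up a $\SID$: the array of counting set dictionaries, the array of list heads, and the pointer overhead already absorbed into the $\CSD$s. First I would count the $\CSD$s. A $\SID$ consists of $\tf = O(C')$ counting set dictionaries, and by Claim~\ref{claim:CSD} each $\CSD(\hf,\hl,\hatc)$ fits in a constant number of words, i.e. occupies $O(w)$ bits. Multiplying gives $O(\tf w)$ bits for the whole array of $\CSD$s. This is the dominant term and it matches the first summand in the claim.

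Next I would handle the array of list heads. There is one doubly-linked list $L_i$ per $\PD_i$ in the crate, and the crate has $\CC/m$ pocket dictionaries, so there are $\CC/m$ list heads. Each head is an index into the $\CSD$ array, hence $\log\tf = O(\log w)$ bits. The total is therefore $O\!\left(\frac{\CC}{m}\cdot\log w\right)$ bits. Since $m = w/\ell$ with $\ell=\log\rho\geq 1$, we have $\CC/m = \CC\ell/w \leq \CC$, so this term is $O(\CC \log w / w)$, which is certainly $o(\CC)$; this gives the second summand. I would also remark that the $\textsf{next}$ and $\textsf{prev}$ pointers stored alongside the elements are pointers of length $O(\log w)$ each and are already included in the length $\hl = \tl + O(\log w)$ of an entry in the $\CSD$, so they do not contribute a new term — they are folded into the $O(\tf w)$ bound via Claim~\ref{claim:CSD}.

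Putting the pieces together, the space of a $\SID$ is $O(\tf w) + O(\CC \log w / w) = O(\tf w) + o(\CC)$, as claimed. I do not anticipate a serious obstacle here: the proof is essentially bookkeeping, and the only subtlety is making sure the pointer lengths are genuinely $O(\log w)$ — which is exactly the payoff of the \emph{distributed} spares design, since each $\SID$ indexes only $\tf = O(C') = O(w^{\beta'})$ counting set dictionaries, so $\log\tf = O(\log w)$ rather than $\log n$. The one place to be slightly careful is to confirm that $\CC/m = o(w^{-1}\CC\cdot w) = o(\CC)$ regardless of whether $\ell$ is constant or growing; since $\ell \leq w$ always, $\CC/m = \CC\ell/w \leq \CC$, and the extra $\log w$ factor keeps it comfortably $o(\CC)$ (indeed $O(\CC\log w/w)$). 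A short paragraph stating these three counts and summing them should suffice.
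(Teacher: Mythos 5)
Your proof takes essentially the same bookkeeping route as the paper: $\tf$ $\CSD$s at $O(w)$ bits each give $O(\tf w)$, and the $\CC/m$ list-head pointers at $O(\log w)$ bits each give the $o(\CC)$ addend. The extra observation that the $\textsf{next}$/$\textsf{prev}$ pointers are already folded into $\hl$ and hence into Claim~\ref{claim:CSD} is correct and a useful clarification that the paper leaves implicit.

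There is one arithmetic slip in the second summand. You write $\CC/m = \CC\ell/w$ (correct, since $m = w/\ell$) and then assert that the list-head term $(\CC/m)\cdot\log w$ is $O(\CC\log w/w)$. That bound drops the factor of $\ell$: the actual quantity is $\CC\,\ell\log w / w$, and $O(\CC\log w/w)$ would only follow if $\ell = O(1)$, which is not assumed. Your later remark that ``$\CC/m = \CC\ell/w \leq \CC$'' establishes only $(\CC/m)\cdot\log w \leq \CC\log w$, which is not $o(\CC)$. What actually saves the day is that in the dense case (where this $\SID$ is parametrized) one has $\rho = O(w)$, hence $\ell = \log\rho = O(\log w)$, giving $(\CC/m)\cdot\log w = O(\CC\,(\log w)^2/w)$; equivalently, one needs $m = w/\ell = \omega(\log w)$, which holds precisely because $\ell = O(\log w)$. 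So the conclusion $o(\CC)$ stands, but the justification ``$\ell \leq w$ always'' is too weak to reach it — you need the dense-case bound $\ell = O(\log w)$ (or, in the sparse case, the fact that the $\SID$ stores pointers of length $O(\log w)$, so effectively the same $\ell$ is used). Tightening that one step makes the argument complete and matches what the paper states without proof.
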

\begin{proof}
The $\SID$ has $\tf$ counting set dictionaries, each of which fits
in a constant number of words. Hence, the total space occupied by
the $\CSD$s is $O(\tf\cdot w)$ bits.
The array of list head pointers requires
$O\parentheses{\log w \cdot \CC/m}=o(\CC)$
bits because each pointer takes $\log (\tf)=O(\log w)$ bits.%
\footnote{A shared datastructure for all the elements that do not
fit in the ``bins'' appears in~\cite{arbitman2010backyard} and
also~\cite{bender2018bloom}). However, in our case, such a shared
$\SID$ would require $\Omega(n\log n)$ bits just for the array of
pointers. This would render the Crate Dictionary
space-inefficient.}
\end{proof}

\subsection{Analysis: Proof of Theorem~\ref{thm:crate-dict} in the dense case}
Consider a dynamic random multiset of cardinality at most $n$ specified by a
sequence of $\poly(n)$ insertions and deletions. We
bound the number of memory accesses of each operation:

\begin{claim}
If none of the components of the Crate Dictionary for the dense case overflow, then every insert, delete, and query operation requires a constant number of memory accesses.
\end{claim}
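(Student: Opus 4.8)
The plan is to trace a single operation through the Crate Dictionary and observe that, in the dense case, every component it touches has constant size in words and is reached by $O(1)$ pointer hops. First I would recall that an operation on $x$ is routed to crate $\hc(x)$, which costs $O(1)$ accesses (computing $\hc(x)$ is $O(1)$ and the crates are laid out in a flat array). Inside the crate, the operation touches the pocket dictionary $\PD_{\hb(x)}$: by Claim~\ref{claim:DictSmall size}, since $f=O(m)$, $\ell=\log\rho\le w$ in the dense case (here $m=w/\ell$ so $m\ell=w$, i.e. $m=O(w/\ell)$), this $\PD$ fits in a constant number of words, so reading it, searching $(q(x),r(x))$, and rewriting the header/body on an insertion or deletion all take $O(1)$ memory accesses. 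The array of $\CC/m$ pointers in the crate is accessed in $O(1)$ as well.

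Next I would handle the cases where the operation spills into the $\SID$. A $\query$ or $\ins$ may forward to the $\SID$ when $\PD_{\hb(x)}$ is full; a $\del$ may, by the three-case analysis, additionally issue a $\pop(\hb(x))$. In every subcase we invoke the already-proved results about the $\SID$: assuming no $\CSD$ overflows (which is part of the ``no component overflows'' hypothesis), Claim~\ref{claim:sid mem} gives that each query, insert, delete, and pop to the $\SID$ costs $O(1)$ memory accesses — this is where the distributed-spare design pays off, since the doubly linked lists use $O(\log w)$-bit pointers, so list updates touch only a constant number of $\CSD$s, each fitting in $O(1)$ words. I would then just add up: the $\del$ case~2 performs one $\PD$ deletion, one $\pop$, and one $\PD$ insertion, which is still $O(1)$; the other cases are cheaper.

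Finally I would assemble the bound. Each operation performs: $O(1)$ accesses to locate and read/update one $\PD$, $O(1)$ accesses to the pointer array, and at most one $\pop$ plus at most one $\SID$ insert/delete/query, each $O(1)$ by Claim~\ref{claim:sid mem}. Summing the constants gives the stated constant number of memory accesses. (I would note that the sharper ``$4\log(\size{\UU}/n)/w + O(1)$'' form of Theorem~\ref{thm:crate-dict}(2) is vacuous in the dense case, where $\log(\size{\UU}/n)=\log\rho=\ell\le w$, so the leading term is itself $O(1)$; the refined count matters only in the sparse case treated later.) I do not expect a genuine obstacle here: the work has been front-loaded into Claims~\ref{claim:DictSmall size}, \ref{claim:sid mem}, and~\ref{claim:CSD}, and the only thing to be careful about is the bookkeeping in the $\del$ case~2, making sure that restoring Invariant~\ref{inv:SID} via $\pop$ followed by a $\PD$-insert is still a constant number of accesses — which it is, since it is a fixed-length chain of constant-cost steps.
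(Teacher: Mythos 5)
Your proof is correct and follows essentially the same route as the paper's: decompose each operation into at most one pocket dictionary access (constant by Claim~\ref{claim:DictSmall size}) plus at most one $\SID$ operation, including $\pop$ (constant by Claim~\ref{claim:sid mem}), and sum the constants. Your extra care in spelling out the three-case $\del$ analysis and the $\pop$-then-reinsert chain is just a more explicit rendering of the paper's one-line observation that each operation ``accesses one pocket dictionary, forwards the operation to the $\SID$, or issues a pop operation to the $\SID$.''
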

\begin{proof}
Each operation accesses one pocket dictionary, forwards the
operation to the $\SID$, or issues a pop operation to the $\SID$. By
Claim~\ref{claim:DictSmall size}, each pocket dictionary fits in a
constant number of words, and hence each operation on a pocket
dictionary requires only a constant number of memory
accesses. Operations on the $\SID$ require only a constant number
of memory accesses by Claim~\ref{claim:sid mem}.
\end{proof}

We now show that the Crate Dictionary in the dense case is
space-efficient. We set $\mu \triangleq \mumacro$. Since $m=w/\log(\rho)$,
 by the assumption that $w=\Omega(\log n)$  and given that 
we are in the dense case), we have that
$\mu= \sqrt{\frac{\ln(w)\log(\rho)}{w}}=o(1)$. We first observe that only a polylogarithmic fraction of
the bins in a crate will overflow whp.

\begin{claim}\label{claim:fullbins}
For every crate, at most $\frac{w^{2\beta/3}}{m}$ pocket
dictionaries are full in the crate whp.
\end{claim}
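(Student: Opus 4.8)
The plan is to use a balls-into-bins concentration argument applied to each crate separately, then take a union bound over all crates and all time steps. Fix a crate; it contains $\CC/m = w^\beta/m$ pocket dictionaries, and by Claim~\ref{claim:fullbins}'s setting each $\PD$ has capacity $f = (1+\mu)m$. The key point is that a crate receives elements of a random multiset, so the number of elements hashing into a fixed crate is a sum of independent indicators (one per inserted element), and conditioned on that number, the elements are distributed independently and uniformly over the $w^\beta/m$ pocket dictionaries via $\hb$; moreover within a $\PD$ the $m$ quotient slots are again uniform. So first I would fix the number $N$ of elements currently routed to the crate (which is $O(\CC) = O(w^\beta)$ whp by a Chernoff bound on the crate load, since the expected crate load is $\CC$), and then argue about how many $\PD$s overflow.

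The central estimate: for a fixed $\PD$ in the crate, the number of elements landing in it is a $\mathrm{Binomial}$ with mean roughly $m$ (at most $(1+o(1))m$ once we condition on $N = O(\CC)$). A $\PD$ is full exactly when it holds more than $f = (1+\mu)m$ elements, i.e. when its load exceeds its mean by a $\mu$-fraction. By a Chernoff bound, the probability a fixed $\PD$ overflows is at most $\exp(-\Omega(\mu^2 m)) = \exp(-\Omega(\ln(w^{\beta/3}))) = w^{-\Omega(\beta/3)}$ — this is precisely why $\mu$ is chosen as $\sqrt{(3/m)\ln(6w^{\beta/3})}$, so that $\mu^2 m = 3\ln(6w^{\beta/3})$ and the per-$\PD$ overflow probability is $\le (6w^{\beta/3})^{-3} \le \frac{1}{6}w^{-\beta}$ or so. Let $p$ denote this per-$\PD$ overflow probability; then the expected number of full $\PD$s in the crate is at most $p \cdot w^\beta/m$, which is roughly $w^{2\beta/3}/m$ up to the constant factors hidden in $\mu$'s definition. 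Then I would apply a second concentration bound — the events "$\PD_j$ is full" are not independent (they are negatively associated, being functions of a balls-in-bins process), so I would use negative association / a Chernoff bound for sums of negatively associated indicators, or a direct union-bound-over-subsets estimate like in Claim~\ref{crate-cpd-ovf}: the probability that more than $L = w^{2\beta/3}/m$ of the $w^\beta/m$ pocket dictionaries are simultaneously full is at most $\binom{w^\beta/m}{L} p^{L} \le (e\, p\, w^\beta/(mL))^{L} \le (O(1)/w^{\beta/3})^{L}$, which for $L = \Omega(w^{2\beta/3}/m) = \Omega(w^{2\beta/3}/(w/\log\rho))$ is super-polynomially small in $n$ (since $w = \Omega(\log n)$ and $\beta > 6$).

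Having established that a fixed crate has more than $w^{2\beta/3}/m$ full $\PD$s with probability $n^{-\omega(1)}$ at a fixed time, I would finish by a union bound over the $n/\CC \le n$ crates and over the $\poly(n)$ time steps, absorbing both polynomial factors into the super-polynomially small per-event bound, which still yields failure probability $1/\poly(n)$ with the polynomial degree controllable via the constant $\delta$ in $\beta = 6+\delta$.

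The main obstacle I anticipate is handling the dependencies cleanly. Two layers need care: (a) the multiset is random and dynamic, so the set of elements "currently in the crate" at a given time is itself random and depends on the whole insert/delete history — here I would use Assumption/observation that $h(\DD(t))$ being a random multiset means that at each fixed time the current multiset is an i.i.d.\ uniform sample, so I can just condition on the current cardinality and forget the history; and (b) even at a fixed time, "$\PD_j$ full" events across $j$ in the same crate are dependent. For (b) I would either invoke negative association of balls-in-bins occupancy counts (so Chernoff-type bounds apply to the sum of the full-$\PD$ indicators) or, more elementarily, use the subset union bound $\binom{\CC/m}{L}\prod$ style estimate that only needs an upper bound on the probability that $L$ \emph{specified} $\PD$s are all full, and that probability is at most $p^L$ by a short conditioning argument (conditioning on the first $L-1$ being full only decreases the number of remaining balls, hence does not increase the probability the $L$-th is full — a monotonicity/coupling argument). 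I would also need to be slightly careful that conditioning on $N = O(\CC)$ keeps each $\PD$'s mean load below $(1 + \mu/2)m$ say, so that the Chernoff exponent $\mu^2 m$ is not eaten up by the conditioning slack; this is where the exact constants in $\mu$ and the choice $\beta = 6+\delta$ versus the $2\beta/3$ exponent in the claim do their work.
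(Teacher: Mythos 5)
Your proposal follows essentially the same two-stage Chernoff argument as the paper: a per-$\PD$ tail bound on the load $Z_i$, then a concentration bound on the count $Y = \sum_i X_i$ of full $\PD$s using negative association (the paper invokes a Chernoff bound for negatively associated indicators, citing Dubhashi--Ranjan; your alternative subset union bound with the coupling/monotonicity argument is a manual reproof of the same fact), followed by a union bound over crates and time steps. Two small discrepancies are worth noting. First, the paper does \emph{not} condition on the crate load $N$: it treats each $Z_i$ directly as a sum of at most $n$ independent Bernoulli indicators each with success probability $m/n$, so $E[Z_i]\le m$ immediately and there is no ``conditioning slack'' to worry about; your two-level conditioning works but is unnecessary overhead. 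Second, you dropped the $1/3$ in the Chernoff exponent: $\Pr[Z_i\ge(1+\mu)m]\le\exp(-m\mu^2/3)$, and with $\mu^2 m = 3\ln(6w^{\beta/3})$ this gives the per-$\PD$ full probability $p=(6w^{\beta/3})^{-1}$, not $(6w^{\beta/3})^{-3}$. Your slip makes the line ``$p\cdot w^{\beta}/m\approx w^{2\beta/3}/m$'' arithmetically inconsistent (with $p\approx w^{-\beta}$ the product is $\approx 1/m$); the correct $p=(6w^{\beta/3})^{-1}$ yields $E[Y]\le w^{2\beta/3}/(6m)$, which is exactly why the paper sets the threshold at $w^{2\beta/3}/m$ (six times the mean) before applying the second Chernoff bound. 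None of this changes the correctness of the overall plan, only the bookkeeping.
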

\begin{proof}
Fix a crate. Let $Z_i$ denote the number of elements that hash to
the pocket dictionary of index $i$ and
let $X_i$ be the indicator random variable that is $1$ if
$Z_i \geq (1+\mu)\cdot m$ (i.e. the pocket dictionary of index $i$
is full) and $0$ otherwise. The total number of full pocket
dictionary is denoted by $Y= \sum X_i$.

Since each element $x$ is drawn independently and u.a.r., we have
that $\hb(x)$ and $\hc(x)$ are values also chosen independently and
u.a.r. We get that $E[Z_i] \leq m$.  By Chernoff's bound, we have
that
$\Pr{Z_i \geq (1+\mu)\cdot m} \leq \exp\parentheses{-m\cdot
\mu^2/3}$. Indeed,
\begin{align*}
\Pr{Z_i \geq (1+\mu)\cdot m} &= \Pr{Z_i \geq (1+\mu) \frac{m}{E[Z_i]}\cdot E[Z_i]}\\ 
&\leq \Pr{Z_i \geq \parentheses{1+\mu \frac{m}{E[Z_i]}}\cdot E[Z_i]} \\
&\leq \exp\parentheses{-E[Z_i]\cdot \frac{\mu^2 m^2}{3 E[Z_i]^2}}  = \exp\parentheses{-m \mu^2/3 \cdot \frac{m}{ E[Z_i]}}  \\
&\leq \exp\parentheses{-m\mu^2/3}  \;.
\end{align*}

\medskip\noindent The expected number of full pocket dictionaries
satisfies:
\begin{align*}
E[Y] &\leq \frac{\CC}{m} \cdot \exp\parentheses{-m \mu^2/3} \;.
\end{align*}
Note that by our choice of $\mu$, we have that
\begin{align*}
6\cdot E[Y]&\leq  6\cdot \frac{\CC}{m} \cdot
\exp\parentheses{-m\mu^2/3} =\frac{ w^{2\beta/3}}{m} .
\end{align*}
The random variables $X_i$ are negatively associated, hence by Chernoff's bound~\cite{dubhashi1998balls},
\begin{align*}
\Pr{Y \geq \frac{w^{2\beta/3}}{m}} &\leq 2^{-w^{2\beta/3}/m}\;.
\end{align*}
For $\beta\geq\betaval$, we have $2\beta/3\geq 4+2\delta/3$, hence
$\Pr{Y \geq \frac{w^{2\beta/3}}{m}}$ is at most
$2^{-(3+\frac{2}{3}\delta)w}$.  Since $w=\Omega(\log n)$, the claim
follows.
\end{proof}
\medskip
\noindent
In the following claim we bound the maximum number of elements that
are assigned to the same pocket dictionary in a crate.
Recall the notation of the proof of Claim~\ref{claim:fullbins},
where $Z_i$ denotes the number of elements assigned to pocket
dictionary $i$.
\begin{claim}\label{claim:binocc}
For every $i$ and any random multiset set $\MM$ such
that $|\MM|\leq n$,
\begin{align*}
\Pr{Z_i\geq m \log n}\leq n^{-\omega(1)}\;.
\end{align*}
\end{claim}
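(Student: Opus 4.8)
The plan is to treat $Z_i$ as a sum of independent indicator variables and apply a Chernoff bound with a logarithmically large multiplicative deviation. First I would fix a crate and the pocket-dictionary index $i$, and recall that the random multiset $\MM$ consists of $k \triangleq \size{\MM}\le n$ elements drawn independently and uniformly at random from $\UU=[\rho n]$. For each such element $x$, the pair $(\hc(x),\hb(x))$ selects its crate and its pocket dictionary within that crate, and, exactly as in the proof of Claim~\ref{claim:fullbins}, the values $\hc(x)$ and $\hb(x)$ are (essentially) independent and uniform over $[n/\CC]$ and $[\CC/m]$. Hence the indicator $B_j$ that the $j$-th sampled element lands in the fixed pocket dictionary has $\Pr{B_j=1}\le m/n$, the $B_j$ are independent, and $Z_i=\sum_{j=1}^k B_j$ satisfies $E[Z_i]\le k\cdot (m/n)\le m$.

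Next I would invoke the multiplicative Chernoff bound in the form $\Pr{Z_i\ge a}\le (e\, E[Z_i]/a)^{a}$, which is valid for $a\ge E[Z_i]$. Taking $a= m\log n$ --- which exceeds $E[Z_i]\le m$ since $\log n\ge 1$ --- and bounding $E[Z_i]\le m$, this yields
\[
\Pr{Z_i\ge m\log n}\;\le\;\parentheses{\frac{e\,m}{m\log n}}^{m\log n}\;=\;\parentheses{\frac{e}{\log n}}^{m\log n}.
\]
Finally, since $m\ge 1$ (indeed $m=w/\ell=\Omega(w/\log w)$ in the dense case), the right-hand side is at most $(e/\log n)^{\log n}=2^{\log n\cdot(\log e-\log\log n)}=n^{-(\log\log n-\log e)}$, and because $\log\log n\to\infty$ this is $n^{-\omega(1)}$, which is exactly the claimed bound.

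I do not expect a genuine obstacle here; the one point that deserves a line of care is that the floor/ceiling rounding in the definitions of $\hc$ and $\hb$ makes the assignment distribution only near-uniform rather than exactly uniform, but this is handled precisely as in the proof of Claim~\ref{claim:fullbins}, and even the crude bound $\size{\set{x\in\UU:(\hc(x),\hb(x))=(c,i)}}\le\ceil{\rho m}\le 2\rho m$ (hence $\Pr{B_j=1}\le 2m/n$ and $E[Z_i]\le 2m$) leaves the final estimate $n^{-\omega(1)}$ unchanged.
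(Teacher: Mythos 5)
Your proposal is correct and takes essentially the same approach as the paper: both establish $E[Z_i]\le m$ as in Claim~\ref{claim:fullbins} and apply a multiplicative Chernoff bound at threshold $m\log n$. The only cosmetic difference is the Chernoff variant used --- you invoke $\Pr{Z_i\ge a}\le (e\,E[Z_i]/a)^a$ and crudely bound $m\ge 1$ to get $n^{-(\log\log n - O(1))}$, whereas the paper uses the form $\Pr{Z_i\ge a}\le 2^{-a}$ (valid once $a$ exceeds a constant multiple of $E[Z_i]$, which holds since $\log n$ is large) to get $n^{-m}$ and then implicitly relies on $m=\omega(1)$; both conclusions are $n^{-\omega(1)}$.
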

\begin{proof}
The random variable $Z_i$ satisfies $E[Z_i] \leq m$. By
Chernoff's bound, we have that
\begin{align*}
\Pr{Z_i\geq m\log n}&\leq 2^{-m\log n} = n^{-m}\;,
\end{align*}
and the claim follows.
\end{proof}
\medskip\noindent Combining Claim~\ref{claim:fullbins} and
Claim~\ref{claim:binocc}, we get that:
\begin{claim}\label{claim:crate ovf}
The maximum number of elements stored in the $\SID$ of a crate is
$O(w^{2\beta/3} \cdot \log n)$ whp. Therefore, whp, no more than $\tf$
elements get stored in a specific $\SID$ if $\beta'>1+2\beta/3$.
\end{claim}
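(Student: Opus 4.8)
The plan is to derive Claim~\ref{claim:crate ovf} by conditioning on the high-probability events furnished by the two preceding claims and then bounding the occupancy of a $\SID$ deterministically. First I would invoke Invariant~\ref{inv:SID}: every element stored in the $\SID$ of a crate has a full pocket dictionary, so at any time $t$ the number of elements (counted with multiplicity) in the $\SID$ is at most $\sum_{i:\,\PD_i\text{ full}} Z_i$, where $Z_i$ is the number of elements hashing to $\PD_i$ in that crate. This is valid even with duplicates, since each of the $Z_i$ copies routed to bin $i$ sits either in $\PD_i$ or in the $\SID$, so the $\SID$ holds at most $Z_i$ of them.

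Next, I would fix a crate and a time step $t$ of the (polynomial-length) operation sequence and intersect the event of Claim~\ref{claim:fullbins} (at most $w^{2\beta/3}/m$ full pocket dictionaries in the crate) with the event of Claim~\ref{claim:binocc} applied to every bin (every $Z_i\le m\log n$). Both events depend only on the random hash values $\hc(\cdot),\hb(\cdot)$ on the current random multiset, so they can be union-bounded together. On their intersection the $\SID$ holds at most $(w^{2\beta/3}/m)\cdot m\log n = w^{2\beta/3}\log n = O(w^{2\beta/3}\log n)$ elements. Since the per-$(t,\text{crate})$ failure probability is $n^{-\omega(1)}$ from Claim~\ref{claim:binocc} and polynomially small with a designer-controlled exponent from Claim~\ref{claim:fullbins}, a further union bound over the $\poly(n)$ time steps, the $n/\CC$ crates, and the $\CC/m$ bins shows the bound holds simultaneously for all $\SID$s at all times whp; this is the first assertion.

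For the second assertion I would use $w=\Omega(\log n)$, so $\log n = O(w)$ and hence $w^{2\beta/3}\log n = O(w^{1+2\beta/3})$, together with $\tf = O(\CC') = O(w^{\beta'})$. When $\beta' > 1 + 2\beta/3$ the quantity $w^{\beta'}$ asymptotically dominates this bound, so for all sufficiently large $n$ no $\SID$ ever stores more than $\tf$ elements, whp. (With the concrete choices $\beta=\betaval$ and $\beta'=\betapval$ one checks $\beta' > 1+2\beta/3$, since $5+\tfrac34\delta > 5+\tfrac23\delta$.)

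The step I expect to be the main obstacle is not the probabilistic estimate itself --- that is essentially handed to us by Claims~\ref{claim:fullbins} and~\ref{claim:binocc} --- but making the conditioning rigorous across the whole polynomial-length sequence: the set of full pocket dictionaries and the loads $Z_i$ evolve over time, so the union bound must range over all time steps (and crates and bins), which forces one to verify that the per-step tail bounds are strong enough (polynomially small with a controllable exponent, respectively $n^{-\omega(1)}$) to absorb that blow-up. One must also be careful to count $\SID$ elements with multiplicity so that the reduction to $\sum_i Z_i$ remains correct in the presence of duplicates.
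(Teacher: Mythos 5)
Your proof is correct and is exactly the combination the paper intends: the paper gives no written proof of this claim beyond the remark that it follows by ``combining Claim~\ref{claim:fullbins} and Claim~\ref{claim:binocc},'' and your argument --- reducing the $\SID$ occupancy via Invariant~\ref{inv:SID} to $\sum_{i:\,\PD_i\text{ full}} Z_i$, bounding it by $(w^{2\beta/3}/m)\cdot m\log n$ on the intersection of the two high-probability events, union-bounding over time steps, crates and bins, and then using $\log n = O(w)$ together with $\tf=\Theta(w^{\beta'})$ --- supplies precisely that combination.
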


\medskip\noindent
We summarize the space requirements of the Crate Dictionary in the dense case.  
\begin{claim}\label{claim:dense space}
If $\beta-1>\beta'>1+2\beta/3$, then the number of bits required for
storing the Crate Dictionary in the dense case with parameters $n$
and $\rho$ is
\begin{align*}
(1+o(1))\cdot n\log(\rho) + (2+o(1)) n\;.
\end{align*}
\end{claim}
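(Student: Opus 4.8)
The plan is to account separately for the three kinds of storage in the Crate Dictionary — the pocket dictionaries, the array of per-$\PD$ pointers inside each crate, and the $\SID$s — and to show that only the pocket dictionaries contribute to the leading term, while the rest fits into the $o(1)\cdot n\log\rho$ and $O(n)$ slack. First I would count the pocket dictionaries. There are $n/\CC$ crates, each with $\CC/m$ pocket dictionaries, so $n/m$ pocket dictionaries in total. By Claim~\ref{claim:DictSmall size}, each $\PD(m,f,\ell)$ uses $m+f(1+\ell)$ bits, and with $f=(1+\mu)m$ and $\ell=\log\rho$ this is $m + (1+\mu)m(1+\log\rho) = (1+\mu)m\log\rho + (2+\mu)m$ (after absorbing the standalone $m$). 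Multiplying by $n/m$ pocket dictionaries gives $(1+\mu)\,n\log\rho + (2+\mu)\,n$. Since $\mu=\mumacro=o(1)$ in the dense case (as observed just before the claim, using $w=\Omega(\log n)$ and $w/\rho=\Omega_n(1)$), this is exactly $(1+o(1))\,n\log\rho + (2+o(1))\,n$, i.e.\ the claimed bound already comes entirely from the pocket dictionaries, which is the whole point of the parametrization.

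Next I would bound the remaining components and argue they are lower-order. Each crate carries an array of $\CC/m$ pointers, one per $\PD$, each of $\log\tf = O(\log w)$ bits (these are the $\SID$ list heads / pocket-dictionary pointers), contributing $O((\CC/m)\log w)$ bits per crate and hence $O((n/m)\log w) = O(n\log w/m)$ bits overall; since $m = w/\log\rho$ and $w=\Omega(\log n)$, this is $O(n\log w\log\rho/w) = o(n)$, or more carefully $o(n\log\rho)$, so it is absorbed. Then, by Claim~\ref{claim:SID size}, each $\SID$ takes $O(\tf w) + o(\CC)$ bits; with $\tf = O(\CC') = O(w^{\beta'})$ this is $O(w^{\beta'+1}) + o(\CC)$ bits per $\SID$, and multiplying by the $n/\CC$ crates gives $O(n w^{\beta'+1}/\CC) + o(n) = O(n w^{\beta'+1-\beta}) + o(n)$. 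The hypothesis $\beta - 1 > \beta'$, i.e.\ $\beta'+1 < \beta$, is exactly what makes $w^{\beta'+1-\beta}\to 0$, so this term is $o(n)$ as well. (The condition $\beta' > 1 + 2\beta/3$ is needed for Claim~\ref{claim:crate ovf} to guarantee no $\SID$ overflows, so it is fine to invoke it as part of the standing hypotheses, but it is the upper bound $\beta' < \beta-1$ that drives the space accounting here.)

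Finally I would add the three contributions: $(1+o(1))\,n\log\rho + (2+o(1))\,n$ from the pocket dictionaries, plus $o(n\log\rho)$ from the head-pointer arrays, plus $o(n)$ from the $\SID$s, which collapses to $(1+o(1))\,n\log\rho + (2+o(1))\,n$ as claimed. I do not expect any real obstacle: the work is bookkeeping, with the only subtle point being to confirm that $\mu=o(1)$ in the dense case (already done in the text) so that $(1+\mu)$ and $(2+\mu)$ legitimately become $(1+o(1))$ and $(2+o(1))$, and to check that the exponent arithmetic $\beta'+1-\beta<0$ genuinely follows from $\beta-1>\beta'$. The mild care needed is just to make sure every stray additive $O(n)$ or $O(\mu n\log\rho)$ term is comfortably dominated; none of them threaten either the leading constant $1$ on $n\log\rho$ or the additive constant $2$ on $n$.
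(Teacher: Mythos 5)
Your proof is correct and follows the same route as the paper's: count the $n/m$ pocket dictionaries via Claim~\ref{claim:DictSmall size} to get $(1+\mu)n\log\rho + (2+\mu)n$ with $\mu=o(1)$, then show the per-crate $\SID$s and pointer arrays contribute only $o(n)$ using Claim~\ref{claim:SID size} and the exponent arithmetic $\beta' + 1 < \beta$. The paper's proof is simply terser; you spell out the role of the two inequalities on $\beta'$ (the lower bound for Claim~\ref{claim:crate ovf}, the upper bound for the space accounting), which is a nice clarification but not a different argument.
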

\begin{proof}
We have $n/m$ pocket dictionaries $\PD(m,(1+\mu)m,\log(1/\eps))$,
each of which takes
\begin{align*}
m + f(1+\log(\rho)) &= m\cdot \parentheses{ 2+\mu + (1+\mu)\log(\rho)}
\end{align*} bits (Claim~\ref{claim:DictSmall size}). In total, the pocket dictionaries  of the crates take $(1+\mu) n\log(\rho) + (2+\mu) n$ bits.
There are $n/C$ $\SID$'s. By Claim~\ref{claim:SID size}, all the $\SID$s 
require $O(\frac{n}{C}\cdot (\tf w + o(\CC))=o(n)$ bits.  The
claim follows.
\end{proof}
\medskip\noindent
Note that $\beta=\betaval$ and $\beta'=\betapval$ satisfy the premise
of the claim. 

\medskip\noindent
This completes the proof of Theorem~\ref{thm:crate-dict} in the dense case.

\subsection{The Case of Sets}
In this section we discuss modifications in the design and analysis of
a Crate Dictionary so that it supports sets rather than random
multisets. (Note that this case is not relevant for the design of a
filter based on a dictionary.)

One can simplify the design by replacing the counting set dictionaries
with set dictionaries (simply omit the counters). This modification
does not affect the number of memory accesses or the space efficiency.

The representation of the elements needs to be modified as follows.
Following~\cite{arbitman2010backyard}, we employ a random invertible
permutation $\tau:\UU\rightarrow\UU$. This transformation implies that
instead of maintaining the set $\DD$, we maintain the set
$\tau(\DD)$. Moreover, for every $\DD$, the set $\tau(\DD)$ is a
random subset of $\UU$ of cardinality $|\DD|$.  Invoking $\tau$
reduces the worst-case dataset to an average case one, as summarized
by the following assumption.
\begin{assumption}\label{assume:rnd}
By applying a random permutation, one may assume that the dataset
$\DD$ in the case of a dictionary for sets is a random subset of at
most $n$ elements from $\set{0,1}^u$.
\end{assumption}

Using Assumption~\ref{assume:rnd}, we can now proceed with the analysis as in the case of random multisets. The same guarantees follow since we obtain random hash values $\hc(x)$ and $\hb(x)$ that are uniformly distributed without replacements. The random variable $Z_i$ is a sum of negatively associated random variables and the standard Chernoff bounds apply~\cite{dubhashi1998balls}.

\section{The Crate Dictionary - Sparse Case}\label{sec:crate-dict-sparse}
In this section, we present a fully-dynamic dictionary for sets and random
multisets for the case that the relative sparseness $\rho$ satisfies
$w/\rho =o_n(1)$. We refer to this construction as the Crate
Dictionary for the sparse case. 

\subsection{Adaptive Remainders}\label{sec:adapt finger}

Recall that the element $x$ can be represented as the $4$-tuple
$(\hc(x),\hb(x),q(x),r(x))$, where each component is a disjoint block
of bits of $x$. The main challenge in the sparse case comes from
the fact that the $\PD$ remainder $r(x)$ can
no longer be stored by a pocket dictionary or might not even
fit in a word.

To overcome the fact that $r(x)$ is potentially too long, we compute a prefix
$\alpha(x)$ of $r(x)$, called the \emph{adaptive remainder} (see Fig.~\ref{fig:explicitbits}).  The
\emph{adaptive fingerprint} of $x$ is $(\hc(x),\hb(x),q(x),\alpha(x))$
and it satisfies :~(1)
distinct elements have distinct adaptive fingerprints that are prefix-free,
~(2) a pointer to the location of $r(x)$ is associated with the
adaptive fingerprint of $x$, and~(3) the adaptive remainder $\alpha(x)$ is short
enough, on average, so that it can be stored in a variable-length pocket
dictionary\footnote{Hence, $\alpha(x)$ is stored
twice: one time as an adaptive remainder and a second time as part
of $r(x)$.}.

To ensure that the adaptive fingerprints of distinct elements are
distinct and prefix-free, the adaptive remainders are variable-length;
we adaptively lengthen and shorten them (see
also~\cite{bender2018bloom}).  Specifically, we maintain the following
invariant for every subset $S\subseteq\DD$ of elements that either:
(1)~reside in the same $\PD$ and share a common $\PD$ quotient $q(x)$,
or (2)~reside in the same $\CSD_i$ and share the same $\PD$ index
$h^b(x)$ and $\PD$ quotient $q(x)$.
\begin{invariant}\label{inv:min}
For every $S\subseteq \DD$ defined above, the sum of the lengths of
the adaptive remainders is minimal subject to the constraint that
the set of adaptive fingerprints of distinct elements in $S$ is
prefix-free.\footnote{We note that the invariant does not exclude
the possibility that multiple copies of the same adaptive
fingerprint are stored. Two adaptive fingerprints are equal if and
only if they correspond to the same element.}
\end{invariant}

\begin{figure}[h]

\centering
\includegraphics[width=0.5\textwidth]{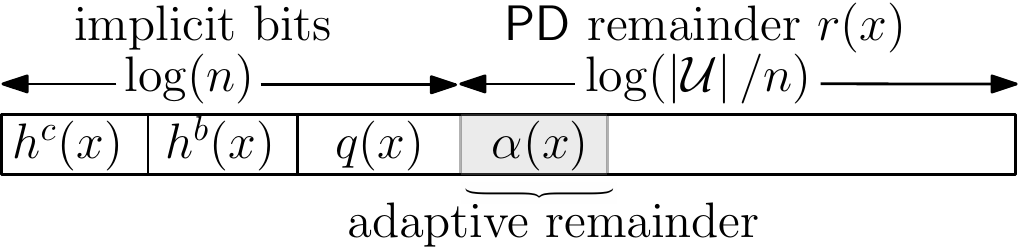}
\caption{\textbf{Adaptive Fingerprints.} The adaptive fingerprint for the Crate Dictionary is $(\hc(x),\hb(x),q(x),\alpha(x))$. The adaptive remainder $\alpha(x)$ is a prefix of $r(x)$.}

\label{fig:explicitbits}
\end{figure}

\paragraph{Maintaining Invariant~\ref{inv:min}.}

During insertions, adaptive remainders are minimally
extended until the adaptive fingerprints of
distinct elements are prefix-free.\footnote{Distinctness is always
achievable by taking all of the bits of an element.}  Consider the
following greedy procedure : when a new element $x$ is inserted, we
start with $\alpha(x)$ as the empty string and extend it as long as
the adaptive fingerprint of $x$ is a prefix of another adaptive
fingerprint of an element $y\neq x$. Additionally, we may also need to
extend the adaptive remainder of another element.  Since the set of
adaptive remainders before the insertion is prefix-free
(Invariant~\ref{inv:min}), at most one such adaptive fingerprint needs
to be extended. When duplicate elements are inserted, we store
duplicate adaptive remainders. See Appendix~\ref{sec:compute adaptive
remainders} for a detailed description of the procedure for
computing the adaptive remainders. We note that this procedure is
applied separately to subsets of fingerprints that are stored in the
same $\PD$ or $\CSD$ and that share the same $\PD$-quotient.  Upon deletions, adaptive remainders are
greedily shrunk while maintaining the prefix-freeness property.

\paragraph{Pointers and Pocket Motels.}
Since the adaptive remainder $\alpha(x)$ may be a proper prefix of
$r(x)$, we employ a separate data structure for maintaining a dynamic
multiset of $\PD$-remainders values. We refer to this data structure as a
\emph{pocket motel}. The pocket motel supports insertions, deletions,
and direct access via a pointer. See Appendix~\ref{sec:motel} for a
more detailed description of the implementation.

Let $\ptr(x)$ denote the pointer to the entry in the pocket motel that
stores $r(x)$. There is one pocket motel for each $\PD$ in the Crate
Dictionary (similarly, there is one pocket motel for each $\CSD$ in
the $\SID$). The cardinality of the multiset stored in the pocket
motel is equal to the cardinality of the multiset stored in the
corresponding $\PD$ or $\CSD$.  Since each $\PD$ and $\CSD$ is
responsible for less than $w$ elements, the length of $\ptr(x)$ is at
most $\log w$ bits.

\subsection{Structure}

The Crate Dictionary for the sparse case employs a variant of the
Crate Dictionary for the dense case to store the set of pointers
$\set{\ptr(x) \mid x\in\DD}$.  The dictionary is augmented to handle
adaptive remainders and pocket motels. The basic building block of the
Crate Dictionary for the sparse case consists of three synchronized
components: ~(1) $\PD$s and $\CSD$s that store pointers $\ptr(x)$,
~(2) variable-length $\PD$s and $\CSD$s for adaptive remainders
(Sec.~\ref{sec:var pocket dict}) and,~(3) pocket motels for storing
$\PD$-remainders $r(x)$. We describe how these components are designed
and how they are synchronized.

We follow the same parametrization that we use in the dense case. We
have $n/C$ crates, each consisting of $C/{m}$ pocket dictionaries
$\PD(m,f,\ell)$.  Each pocket dictionary is used for explicitly
storing the pointers $\ptr(x)$ and has its own pocket motel for
the $\PD$-remainders $r(x)$.

\paragraph{$\PD$-super-intervals.}\label{sec:super-interval structure.}
An alternative view of the structure is that the range $[n]$ is
partitioned into $n/m$ intervals such that one pocket dictionary is
assigned to each interval.  \emph{$\PD$-super-intervals} are obtained
by partitioning the set of intervals into groups of $\log w$
consecutive intervals.\footnote{For the sake of readability, we write
$\log w$ instead of $\ceil{\log w}$.} With each 
$\PD$-super-interval, we associate one $\VarPD$ for storing adaptive
remainders. See Fig.~\ref{fig:VarPD} for a depiction.

\begin{figure}[h]

\centering
\includegraphics[width=\textwidth]{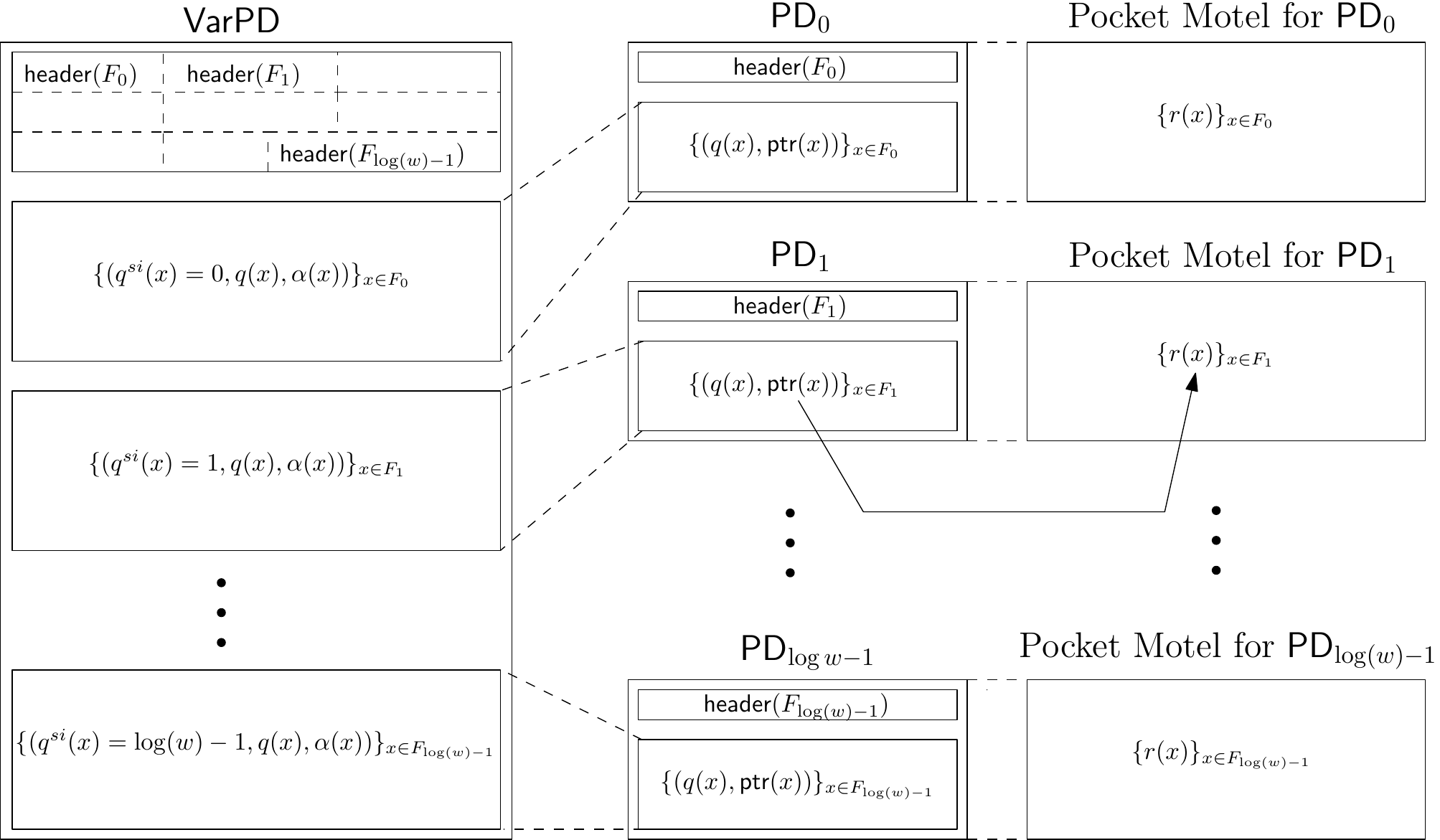}
\caption{\textbf{The components in a $\PD$-super-interval.} We denote
by $F_i$ the set of elements that correspond to $\PD_i$. Namely,
these are the elements $x$ with $\hb(x)=i$ that are not stored in
the $\SID$. The pocket dictionary $\PD_i$ stores $(q(x),\ptr(x))$ for
all $x\in F_i$, where $q(x)$ is stored implicitly. The $\VarPD$
stores $(\qsi(x), q(x), \alpha(x))$ for all
$x\in\set{F_i}_{i\in[\log w]}$, where $(\qsi(x),q(x))$ are stored
implicitly. The pocket model for $\PD_i$ stores $r(x)$ at location
$\ptr(x)$ in the pocket motel.}
\label{fig:VarPD}
\end{figure}

\paragraph{Storing pointers and $\PD$ remainders.}
The pocket dictionary of index $h^{b}(x)$ in crate $\hc(x)$ stores
pairs of the form $(q(x),\ptr(x))$. The order in which these pairs are
stored in a pocket dictionary is important.  The order is
non-decreasing lexicographic order according to $(q(x),r(x))$. (Note
the difference between the order and the stored pairs.) The value
$r(x)$ is stored in the pocket motel associated with the $\PD$ of
index $h^b(x)$, at the location within the pocket motel indicated by
$\ptr(x)$.

\begin{table}
\centering
\begin{tabular}{|c|c|l|}
\hline
symbol& value & role \\
\hline\hline 
$M$& $w$& the length of the super-interval \rule{0pt}{2.6ex}\\
$F$& $(1+o(1))\cdot M$& cardinality of the  multiset stored in a $\VarPD$  \\
$L$&  $O(F)$&  total length of adaptive remainders stored in a $\VarPD$\\
$\tilde{F}$& $ w$& cardinality of the set stored in a $\VarCSD$  \\
$\tilde{L}$&  $O(\tilde{F})$& total length of adaptive remainders stored in a $\VarCSD$\\
$\qsi(x)$& $\hb(x) \pmod{\log w}$ &
index of the $\PD$ within the $\PD$-super-interval\\
\hline
\end{tabular}
\caption{Parametrization of $\VarPD$ and $\VarCSD$}
\label{tbl:varpd parameters}
\end{table}

\paragraph{Storing adaptive remainders.}
For each $\PD$-super-interval, we allocate a variable-length pocket
dictionary $\VarPD(M,F,L)$. The parameters for the $\VarPD$ are listed
in Table~\ref{tbl:varpd parameters}. Note that the quotient of an element
inside its $\VarPD$ is given by the pair $(\qsi(x),q(x))$, where $\qsi(x) = \hb(x) \pmod{\log w}$.
The elements within a $\VarPD$ are stored in non-decreasing lexicographic
order of $(\qsi(x),q(x),r(x))$. For more details on what is stored in each
component of the $\PD$-super-interval, see Fig.~\ref{fig:VarPD}.

\paragraph{Synchronization in the $\PD$-super-interval.}
The fact that the pointers in the $\PD$s and the adaptive remainders
in the $\VarPD$s are stored in the same order allows for
synchronization. 
\begin{claim}\label{claim:sync PD}

The position of $\ptr(x)$ in the $\PD$ is determined by the position
of $\alpha(x)$ in the $\VarPD$. 
\end{claim}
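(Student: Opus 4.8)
The plan is to exhibit an explicit order-preserving correspondence between the adaptive remainders stored in a $\VarPD$ and the pointers stored in the $\PD$s of the same $\PD$-super-interval, and then to check that this correspondence is computable from the headers of these structures. First I would record three structural facts. (i)~Within a fixed $\PD$-super-interval the $\PD$ indices $\hb(x)$ range over a contiguous block of $\log w$ consecutive values, so $\qsi(x)=\hb(x)\bmod\log w$ together with the (known) base index of the super-interval recovers $\hb(x)$. (ii)~The $\VarPD$ stores an adaptive remainder $\alpha(x)$ only for elements $x\in\bigcup_i F_i$, i.e.\ only for elements that reside in the $\PD$s of the super-interval and never for elements in the $\SID$; hence the multiset stored in the $\VarPD$ coincides, element for element, with the disjoint union of the multisets stored in the $\PD$s of the super-interval. (iii)~The elements of $\PD_i$ are listed in non-decreasing lexicographic order of $(q(x),r(x))$, and the elements of the $\VarPD$ in non-decreasing lexicographic order of $(\qsi(x),q(x),r(x))$; consequently the sublist of the $\VarPD$ made up of the elements with a fixed value of $\qsi(x)$ appears in exactly the same order as $\PD_{\hb(x)}$, and inside this sublist the elements with a fixed $q(x)$ form a contiguous block ordered by $r(x)$, matching the corresponding $q(x)$-block of $\PD_{\hb(x)}$.

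Given these facts, the correspondence is as follows. Suppose $\alpha(x)$ is the $p$-th adaptive remainder stored in the $\VarPD$ (the rank $p$ and the byte offset of $\alpha(x)$ inside $\bbody(\VarPD)$ determine each other, using the $\VarPD$ header and the end-of-string separators). Forming prefix sums of the $\VarPD$ header --- which records, for every pair $(\qsi,q)$, the number of stored elements with that pair --- one recovers the pair $(\qsi(x),q(x))$ of the $p$-th element, hence $\hb(x)$ by~(i), together with the offset $k$ of $x$ inside its $(\qsi(x),q(x))$-block. The header of $\PD_{\hb(x)}$ then yields the number of pointers stored there with quotient smaller than $q(x)$, i.e.\ the start of the $q(x)$-block; by~(iii), $x$ is the $k$-th element of this block, so $\ptr(x)$ occupies position $(\text{block start})+k$ inside $\bbody(\PD_{\hb(x)})$. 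Hence the position of $\ptr(x)$ is a function of the position of $\alpha(x)$ (and of the headers, which are part of the data-structure state), which is exactly the claim; the evaluation moreover touches only $O(1)$ words, since a $\VarPD$ and a $\PD$ each fit in $O(1)$ words (Claims~\ref{claim:varpd} and~\ref{claim:DictSmall size}).

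The point that I expect to require the most care --- rather than a genuine obstacle --- is fact~(iii), because the ordering key $(q(x),r(x))$ is stored literally in neither structure: $\PD_i$ stores only $\ptr(x)$ and the $\VarPD$ stores only the prefix $\alpha(x)$ of $r(x)$. Two things must be verified. First, that the stored order really is the claimed $(q(x),r(x))$-order; this is maintained as an invariant by the insert and delete procedures, and for duplicate elements (where $r(x)=r(y)$) ties must be broken consistently in $\PD_i$ and in the $\VarPD$. Second, that the prefix $\alpha(x)$ suffices to pin down the within-block order: for distinct $x,y$ residing in the same $\PD_i$ with $q(x)=q(y)$, Invariant~\ref{inv:min} guarantees that $\alpha(x)$ and $\alpha(y)$ are prefix-free, and since $\alpha(x)\subseteq r(x)$ and $\alpha(y)\subseteq r(y)$, the first bit at which $\alpha(x)$ and $\alpha(y)$ disagree is also the first bit at which $r(x)$ and $r(y)$ disagree; therefore $\alpha(x)$ precedes $\alpha(y)$ lexicographically iff $r(x)$ precedes $r(y)$, so the within-block order in the $\VarPD$ coincides with that in $\PD_{\hb(x)}$, which completes the argument.
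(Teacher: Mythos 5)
Your proposal is correct and follows the same route as the paper's proof: observe that the $\VarPD$ lists adaptive remainders in $(\qsi(x),q(x),r(x))$ order, so the sub-block with a fixed $\qsi$ is contiguous and listed in the same $(q(x),r(x))$ order as the pointers in $\PD_{\hb(x)}$, whence corresponding ranks give corresponding positions. The one genuine addition you make is the observation that, because the adaptive remainders of distinct elements sharing a $\PD$ and a quotient are prefix-free by Invariant~\ref{inv:min}, lexicographic comparison of the $\alpha$'s agrees with lexicographic comparison of the full $r$'s — this justifies that the stated sort order can actually be maintained and verified from what is stored, a point the paper's three-sentence proof leaves implicit; your remark that duplicates only need a consistent tie-break (since any matching $\ptr$ retrieves the same $r$) is the right way to close that small gap.
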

\begin{proof}
Since adaptive remainders in the $\VarPD$ are stored in the order
$(\qsi(x),q(x),r(x))$,  adaptive remainders from the same $\PD$
map to a contiguous block within the body of the $\VarPD$. Within
that block of the $\VarPD$, they are stored in the order
$(q(x),r(x))$, which is the same order that $\ptr(x)$ is stored in
the $\PD$.
\end{proof}

\paragraph{$\CSD$-super-intervals.}

\begin{figure}[h]
\centering
\includegraphics[width=\textwidth]{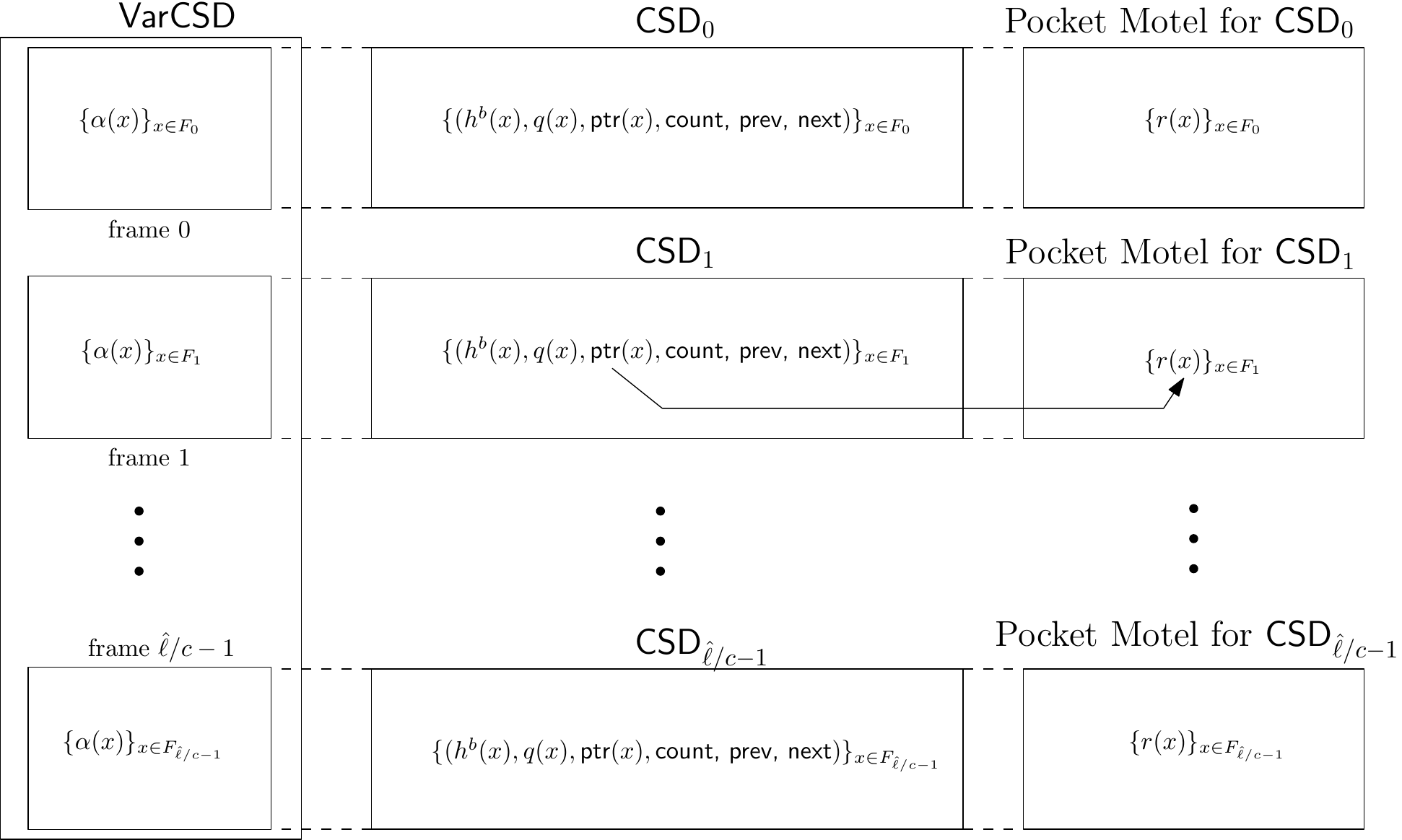}
\caption{\textbf{The components in a $\CSD$-super-interval.}
In this figure we depict the organization related to the
first $\CSD$-super-interval. We denote by $F_i$ the set of
elements that correspond to $\CSD_i$. Namely, these are the
elements $x$ with $\thash(x)=i$ that are not stored in the
$\PD$s. Each $\CSD_i$ explicitly stores
$(\hb(x),q(x),\ptr(x))$ alongside a multiplicity counter and doubly
linked list pointers. The order is non-decreasing
lexicographic order of $(\hb(x),q(x),r(x))$. The $\VarCSD$
stores $\alpha(x)$ for all $x\in\set{F_i}_{i\in[\hl/c]}$ in
non-decreasing lexicographic order of
$(\thash(x),\hb(x),q(x),r(x))$. The pocket model for
$\CSD_i$ stores $r(x)$ at location $\ptr(x)$.}
\label{fig:VarCSD}
\end{figure}

The $\SID$ mimics the structure of synchronized triples of pocket
dictionaries, pocket motels, and variable-length pocket
dictionaries (See Fig.~\ref{fig:VarCSD}).
We  group $\hl/c$
consecutive $\CSD$s and pocket motels into a \emph{$\CSD$-super-interval}. 
In each distributed $\SID$, there are $\tf$ $\CSD$s, each storing a
multiset whose support is of cardinality at most $\hf = cw/\hl$. We
attach a pocket motel to each $\CSD$.

Each $\CSD$-super-interval is assigned a variable-length counting set
$\VarCSD(\tilde{F},\tilde{L})$ for storing adaptive remainders. The $\VarCSD$ 
stores a set of elements since counters are already maintained in the $\CSD$.
The data structure is parametrized by the cardinality of the set and the total length
of the adaptive remainders. More details on the $\VarCSD$ can be found in Appendix~\ref{sec:var count set dict}.

Each $\CSD$ stores triples $(\hb(x),q(x),\ptr(x))$, which are stored
in non-decreasing lexicographic order of $(\hb(x), q(x),r(x))$. The
pocket motel associated with the $\CSD$ stores $r(x)$ at location
$\ptr(x)$. Each $\VarCSD$  stores
$\alpha(x)$ in non-decreasing lexicographic order of
$(\thash(x), \hb(x), q(x),r(x))$ . To achieve this, the $\VarCSD$ 
stores adaptive remainders from the same $\CSD$ in one contigous block called a \emph{frame}.
We have that the $\CSD$ and the
$\VarCSD$ are synchronized. 

\begin{claim}\label{claim:sync SID}
One can deduce the position of $\alpha(x)$ in the $\VarCSD$ from the
position of $(\hb(x),q(x),\ptr(x))$ in the $\CSD$.
\end{claim}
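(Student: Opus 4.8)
The plan is to mirror the argument used for Claim~\ref{claim:sync PD}, exploiting the fact that within a $\CSD$-super-interval, all three data structures (the $\CSD$s, their pocket motels, and the single $\VarCSD$) lay out their contents in compatible lexicographic orders. First I would recall the global ordering convention: the $\VarCSD$ stores every adaptive remainder $\alpha(x)$ for elements $x$ in the super-interval in non-decreasing lexicographic order of the key $(\thash(x),\hb(x),q(x),r(x))$. Since $\thash(x)$ is the most significant component of this key, all adaptive remainders coming from a single $\CSD$ of index $i$ (i.e.\ those with $\thash(x)=i$) occupy one contiguous block of the $\VarCSD$'s body — this is precisely the \emph{frame} of $\CSD_i$. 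Hence, given $x$, one first locates the frame of $\CSD_{\thash(x)}$ by summing the sizes of the frames of $\CSD_0,\ldots,\CSD_{\thash(x)-1}$, which can be read off from the header of the $\VarCSD$ (the header records, in unary, how many elements map to each $(\thash,\hb,q)$ triple within the super-interval, so frame boundaries are computable in a constant number of memory accesses).

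Second, I would argue that within the frame of $\CSD_i$, the order of the adaptive remainders matches the order of the triples $(\hb(x),q(x),\ptr(x))$ stored in $\CSD_i$ itself. Inside the frame the key $(\thash(x),\hb(x),q(x),r(x))$ has its first component fixed, so the remainders are sorted by $(\hb(x),q(x),r(x))$; but by construction $\CSD_i$ also stores its triples in non-decreasing lexicographic order of $(\hb(x),q(x),r(x))$. Therefore the $j$-th element of the frame corresponds to the $j$-th triple of $\CSD_i$, and the position of $\alpha(x)$ within the frame equals the position of $(\hb(x),q(x),\ptr(x))$ within $\CSD_i$ (counting distinct elements, with multiplicities handled by the counters stored alongside the $\CSD$ triples and duplicate adaptive remainders stored alongside in the $\VarCSD$, exactly as in the $\PD$ case). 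Combining the frame offset from the first step with the in-frame offset from the second step yields the position of $\alpha(x)$ in the $\VarCSD$ as a function of the position of $(\hb(x),q(x),\ptr(x))$ in $\CSD_{\thash(x)}$, which is what the claim asserts.

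The one subtlety I would be careful about — and which I expect to be the main thing to get right — is the bookkeeping of frame boundaries: unlike the $\VarPD$ case, here the body is partitioned into frames indexed by $\thash$, and one must be sure that the $\VarCSD$ maintains enough header information to recover where each frame begins in $O(1)$ memory accesses, and that this information is updated consistently on insertions and deletions (when an element is inserted into or removed from $\CSD_i$, the frame of $\CSD_i$ grows or shrinks by one, shifting all later frames). This is a routine consequence of the unary header encoding inherited from $\PD$/$\VarPD$ (Sec.~\ref{sec:var pocket dict}), but it is the place where the ``mimics the structure'' phrase has to be cashed out carefully; everything else is a direct transcription of the proof of Claim~\ref{claim:sync PD}.
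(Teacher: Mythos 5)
Your argument is the same as the paper's: you decompose the position of $\alpha(x)$ into a frame offset (all adaptive remainders with a given $\thash$ form one contiguous frame) and an in-frame offset (within a frame, both the $\VarCSD$ and the $\CSD$ are sorted by $(\hb,q,r)$, so relative positions coincide), exactly mirroring Claim~\ref{claim:sync PD}. One small implementation correction: the paper's $\VarCSD$ (Appendix~\ref{sec:var count set dict}) recovers frame boundaries from explicit frame separators in its body (a four-symbol alphabet with separate frame and element separators), not from a unary header indexed by $(\thash,\hb,q)$ as you suggest, but this does not change the logic of the correspondence.
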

\begin{proof}
All the adaptive remainders from elements assigned to the same $\CSD$
map to one contiguous block of entries in the $\VarCSD$ (i.e. a frame). 
Within the same $\CSD$, we have that the set of triples $(\hb(y),q(y),\ptr(y))$ such that 
$(\hb(y),q(y)) = (\hb(x), q(x))$ is stored in a block of contiguous entries.
This block corresponds to a similar block in the $\VarCSD$ because the adaptive remainders in the same
frame of the $\VarCSD$ respect the same lexicographic order. In other words, the relative
position of $\alpha(x)$ in the frame is the same as the relative position of $\ptr(x)$ within the $\CSD$.

\end{proof}

\subsection{Functionality}

One important property of our construction is that it allows us to
find the ``location'' of an element $x$ (i.e. $\ptr(x)$) using a constant number of
memory accesses, regardless of the length of $x$.
Consider an element $x\in \UU$, an element $y_1\in \DD$ stored in the
$\PD$, and an element $y_2$ stored in the $\CSD$. Note that,
in the following definition, the conditions for $y_i$ to match $x$ depend on whether
$y_i$ is stored in a $\PD$ or in a $\CSD$.
\begin{definition}
We say that $y_1$ \emph{matches} $x$ if
$(\hc(y_1),\hb(y_1),q(y_1)) = (\hc(x),\hb(x),q(x))$, and
$\alpha(y_1)\subseteq r(x)$.  We say that $y_2$ \emph{matches} $x$ if
$(\thash(y_2), \hc(y_2),\hb(y_2),q(y_2)) = (\thash(x),
\hc(x),\hb(x),q(x))$, and $\alpha(y_2)\subseteq r(x)$.  We say that
$y$ \emph{fully-matches} $x$ if $y$ matches $x$ and $r(y)=r(x)$.
\end{definition}

\medskip\noindent
The implementation of $\query(x)$ is based on the following
claim.
\begin{claim}\label{claim:ptr}
For every $x\in \UU$, within a constant number of memory accesses,
one can find at most two pointers ($\ptr(y_1)$ in a $\PD$ and
$\ptr(y_2)$ in a $\CSD$) that point to matches with $x$. Moreover,
$x\in\DD$ if and only if at least one of the matches is also a full
match.
\end{claim}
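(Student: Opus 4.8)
The plan is to show that the adaptive fingerprint machinery, together with the synchronization claims, lets us locate $\ptr(y_1)$ in the relevant $\PD$ and $\ptr(y_2)$ in the relevant $\CSD$ in $O(1)$ memory accesses, and then to argue the membership characterization from prefix-freeness of adaptive fingerprints.

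First I would handle the $\PD$ side. Given $x$, compute $\hc(x),\hb(x),q(x),r(x)$ in $O(1)$ memory accesses (each component is a contiguous block of bits of $x$, and $r(x)$ may be long, but we only need $O(\log\rho / w)$ accesses to read it, which is absorbed into the claimed bound via the $\VarPD$; more precisely we read it once). Access the $\VarPD$ of the $\PD$-super-interval containing $\hb(x)$: since $\max\{M,F,L\}=O(w)$, the $\VarPD$ fits in a constant number of words (Claim~\ref{claim:varpd}), so it can be read and searched in $O(1)$ accesses. Within it, restrict to the frame of adaptive remainders with super-interval-quotient $\qsi(x)$ and $\PD$-quotient $q(x)$; by Invariant~\ref{inv:min} this set is prefix-free, so \emph{at most one} stored adaptive remainder $\alpha(y_1)$ satisfies $\alpha(y_1)\subseteq r(x)$. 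If such a $y_1$ exists, Claim~\ref{claim:sync PD} tells us its position within the $\VarPD$ frame equals the position of $\ptr(y_1)$ within the contiguous block of $\PD$ entries with quotient $q(x)$; combined with the header of the $\PD$ (which gives the offsets of each quotient block), this pins down $\ptr(y_1)$ in the $\PD$ with $O(1)$ further accesses.

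Next I would repeat the argument for the $\CSD$ side, with the extra routing step through $\thash(x)$: forward to $\SID$ of crate $\hc(x)$, then to $\CSD_{\thash(x)}$, then to the $\VarCSD$ of the $\CSD$-super-interval containing $\thash(x)$; this is again $O(1)$ words by the $\VarCSD$ size bound. Within the frame of that $\VarCSD$ corresponding to $\thash(x)$, restrict to the block with $(\hb(x),q(x))$ matching $x$; prefix-freeness (Invariant~\ref{inv:min}, case~(2)) again gives at most one $\alpha(y_2)\subseteq r(x)$, and Claim~\ref{claim:sync SID} converts its position in the $\VarCSD$ frame to the position of $\ptr(y_2)$ in $\CSD_{\thash(x)}$. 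Finally, for the membership characterization: if $x\in\DD$ then $x$ is stored either in its $\PD$ or in the $\SID$, and in either case the stored copy $y$ of $x$ has $\alpha(y)\subseteq r(x)$ (so $y$ matches $x$) and $r(y)=r(x)$ (so $y$ fully-matches $x$). Conversely, if some match $y_i$ is a full match, then $r(y_i)=r(x)$ and the higher-order components agree, so $y_i$ and $x$ have the same $4$-tuple, hence $y_i=x$ and $x\in\DD$.

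The main obstacle I expect is making the position-translation step fully rigorous: one must verify that reading the $\PD$/$\CSD$ header, locating the quotient-block boundaries, and using the synchronized relative offset together require only $O(1)$ memory accesses, and that all of this is consistent when the $\PD$ is full (so some matches may live only in the $\SID$) versus not full. A secondary subtlety is that $r(x)$ itself has length $\log\rho$, which in the sparse case exceeds $w$; one has to note that reading $r(x)$ and comparing a length-$O(1)$-on-average adaptive remainder against it still costs only the accesses already accounted for, and that the claim is about \emph{finding the pointers}, after which reading the full $r(y_i)$ to check for a full match costs an additional $O(\log\rho/w)$ accesses — which is exactly the slack in the $4\cdot\frac{\log(\size{\UU}/n)}{w}+O(1)$ bound of Theorem~\ref{thm:crate-dict}, not in this claim's $O(1)$. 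I would state the claim's $O(1)$ as referring to locating the two pointers, and handle the full-match verification's extra accesses when assembling the $\query$ bound.
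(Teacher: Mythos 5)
Your proof is correct and follows essentially the same route as the paper: locate the candidate block in the $\VarPD$/$\VarCSD$ (both $O(1)$ words), invoke Invariant~\ref{inv:min} for prefix-freeness to get at most one match per side, translate positions via Claims~\ref{claim:sync PD} and~\ref{claim:sync SID}, and characterize membership by noting a stored copy of $x$ is a full match while a full match forces equality of $4$-tuples. Your closing observation — that the claim's $O(1)$ bound is for locating the two pointers, with the extra $O(\log\rho/w)$ accesses for reading $r(y_i)$ to verify a full match deferred to the $\query$ bound — is a correct and slightly more careful reading than the paper makes explicit, but it is a refinement, not a different argument.
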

\begin{proof}
Given an element $x$, we look for it in the $\PD$ and potentially
also in the $\CSD_{\thash(x)}$. This search gives us a set of
candidate elements
$S = \set{y \mid (\hc(x),\hb(x),q(x)) = (\hc(y),\hb(y),q(y))}$.  We
then locate the corresponding block of adaptive remainders
$\set{\alpha(y) \mid y\in S}$ in a constant number of memory
accesses (Claim~\ref{claim:sync PD} and Claim~\ref{claim:sync
SID}). If none of the adaptive remainders match, then we respond
that $x$ is not in the dataset. Otherwise, we have that at most one
match within the $\PD$ and one match within the $\CSD_{\thash(x)}$
(Invariant~\ref{inv:min}). Both the $\PD$, $\CSD$ and their variable
length counterparts fit in a constant number of words. Hence finding
a match requires only a constant number of memory accesses.  A full match
means $x$ is in $\DD$, otherwise $x$ is not in $\DD$.
\end{proof}

We discuss separately how insertions and
deletions are managed in the $\PD$-super-intervals and in the
$\CSD$-super-intervals.

\paragraph{Insertions and deletions in the $\PD$-super-interval.}
An $\ins(x)$ first checks whether the $\PD_{h_b(x)}$ in crate $\hc(x)$
is full.  If it is not full, the following steps are performed:
(1)~Insert $r(x)$ into the pocket motel, get $\ptr(x)$ and insert
$(q(x),\ptr(x))$ to $\PD_{\hb(x)}$. (2)~Locate the block of entries in
the $\VarPD$ that corresponds to $(\hb(x),q(x))$.  Compute the
adaptive remainder $\alpha(x)$ and insert $(\bh(x), q(x),\alpha(x))$
to $\VarPD_{\floor{\hb(x)/\log w}}$ (this may require extending an
existing adaptive remainder of another element). If $\PD_{\hb(x)}$ is full, 
then the insertion is forwarded
to the $\SID$.

A $\del(x)$ is processed by first locating $x$ by a $\query(x)$
operation. If the adaptive fingerprint of $x$ is found in the $\PD$,
deletion is similar to the dense case with the additional steps of
deleting the element's $\PD$-remainder from the pocket motel and its
adaptive remainder from the $\VarPD$ (which might require shortening
an adaptive remainder of another element in the $\VarPD$). If, in
addition, the $\PD$ was full prior to deletion, then a $\pop(\hb(x))$
operation must follow to maintain Invariant~\ref{inv:SID}. The element
returned by the $\pop$ is then inserted into $\PD_{h_b(x)}$.

\paragraph{Insertions and deletions in the $\CSD$-super-interval.}
An $\ins(x)$ to the $\SID$ first checks if $x$ is already stored in
the corresponding $\CSD$ interval. If $x$ is already stored, then it
increments its counter in the $\CSD$ and returns. If $x$ is a new element, then
$r(x)$ is first inserted into the pocket motel for $\CSD_{\thash(x)}$
and $(\hb(x),q(x),\ptr(x))$ is inserted into the
$\CSD_{\thash(x)}$. The $\VarCSD$ frame corresponding to $\CSD_{\thash(x)}$
is read and the adaptive remainder $\alpha(x)$ is
calculated. Finally, $\alpha(x)$ is inserted in the appropriate position in the
$\VarCSD$ based on the location of $\ptr(x)$ in the $\CSD$.
The linked lists are maintained as in the dense case.

A $\del(x)$ operation proceeds like a $\query(x)$ and updates the
counters and the linked list as in the dense case. In addition, if the counter reaches zero,
the element's $\PD$-remainder must be freed from the pocket dictionary and its adaptive 
remainder removed from the $\VarCSD$.  This might lead to another 
adaptive remainder in the $\VarCSD$ requiring shortening.

A $\pop(\hb(x))$ operation accesses the $\CSD$ pointed to by the head of the
linked list, deletes an element that belongs to $\PD_{\hb(x)}$, and inserts this element to $\PD_{\hb(x)}$.

\subsection{Analysis: Proof of Theorem~\ref{thm:crate-dict} in the sparse case}

We begin by proving that an overflow does not occur whp.
The probability of overflow of the $\PD$s, $\CSD$s and their pocket motels is summarized in the following claim.

\begin{claim}\label{claim:pocket-ovf-sparse}
An overflow of a pocket dictionary or a counting set dictionary does
not occur whp.
\end{claim}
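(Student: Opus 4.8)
The plan is to reuse the balls-into-bins analysis from the dense case essentially verbatim for the pocket dictionaries, and then do an analogous (but easier) computation for the counting set dictionaries and the pocket motels attached to them. First I would observe that the pocket dictionaries $\PD(m,f,\ell)$ in the sparse case receive elements in exactly the same way as in the dense case: an element $x$ lands in $\PD_{\hb(x)}$ of crate $\hc(x)$, and since the multiset is random (or, via Assumption~\ref{assume:rnd}, a random set), $\hb(x)$ and $\hc(x)$ are independent uniform values. Hence the random variable $Z_i$ counting the elements assigned to $\PD_i$ has $E[Z_i]\le m$ and is a sum of negatively associated indicators, so Claim~\ref{claim:fullbins} and Claim~\ref{claim:binocc} apply unchanged. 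In particular, no pocket dictionary receives more than $f=(1+\mu)m$ elements except with probability $1/\poly(n)$, after a union bound over all $n/m$ pocket dictionaries and all $\poly(n)$ time steps; the same union bound also shows that no pocket motel overflows, since a pocket motel for $\PD_i$ stores exactly as many $\PD$-remainders as $\PD_i$ stores pointers, and each remainder occupies $\log\rho$ bits, which is the space the pocket motel is provisioned for.

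Next I would handle the counting set dictionaries and their pocket motels. Claim~\ref{claim:crate ovf} (which only uses the same balls-into-bins facts) already bounds the number of elements in any single $\SID$ by $\tf=O(w^{2\beta/3}\log n)$ whp, and the argument of Claim~\ref{crate-cpd-ovf} then shows that no $\CSD$ inside a $\SID$ receives more than $\hf=cw/\hl$ distinct elements, again whp: the hash $\thash$ is independent of $\hb$, so conditioned on which (at most $\tf$) triples are stored in the $\SID$, the number mapped to a fixed $\CSD$ is dominated by a binomial with mean at most $\tf/\tf=O(1)$ (up to constants), and $\binom{\tf}{\hf}\tf^{-\hf}\le(e/\hf)^{\hf}$, which is $1/\poly(n)$ because $\hf=\Omega(w/\log w)=\Omega(\log n/\log\log n)$ and the exponent is linear in $c$. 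A union bound over all $\CSD$s, all crates, and all time steps finishes this. The associated pocket motels store $r(x)$ for at most $\hf$ elements, which is exactly their provisioned capacity, so they inherit the same guarantee.

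The one genuinely new ingredient in the sparse case is the variable-length structures, so I would close by noting that the $\VarPD$s and $\VarCSD$s are \emph{not} covered by this claim — they are the subject of a separate analysis — and that the present claim is only about the fixed-length $\PD$s and $\CSD$s and their pocket motels. The main obstacle, such as it is, is simply bookkeeping: making sure that the load on a pocket motel is governed by precisely the same random variable $Z_i$ as its $\PD$ (so that no independent argument is needed), and making sure the cardinality bound $\tf$ fed into the $\CSD$ analysis is the high-probability bound from Claim~\ref{claim:crate ovf} rather than the worst-case $n$. Once these identifications are made, every step is a restatement of an already-proved dense-case claim, so I expect the proof to be short and to consist mostly of pointers to Claims~\ref{claim:fullbins}, \ref{claim:binocc}, \ref{crate-cpd-ovf}, and \ref{claim:crate ovf}, together with the observation that a pocket motel's occupancy equals its $\PD$'s (resp. $\CSD$'s) occupancy.
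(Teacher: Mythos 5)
Your proposal is correct and takes essentially the same route as the paper's (very brief) proof sketch: both reduce the claim to the dense-case load-balancing arguments (Claims~\ref{claim:fullbins}, \ref{claim:binocc}, \ref{crate-cpd-ovf}, \ref{claim:crate ovf}), observing that only the distribution of elements into bins matters for overflow, that the $\PD$s and $\CSD$s in the sparse case now hold short pointers/triples so the same parametrization (effectively relative sparseness $w$) applies, and that pocket-motel occupancy is identical to that of its associated $\PD$ or $\CSD$, with the $\VarPD$/$\VarCSD$ analysis deferred to Claim~\ref{claim:varpd-ovf}. You simply flesh out what the paper leaves implicit.
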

\begin{proofsketch}
We employ the same load balancing arguments that we used to bound
the probability of overflow for the dictionaries in the dense case
in which the relative sparseness equals $w$. Indeed, the $\PD$s and
$\CSD$s now store $\ptr(x)$ (instead of $r(x)$), where the length of
$\ptr(x)$ is at most $\log w$. 
\end{proofsketch}
\medskip\noindent
Note that a pocket motel overflows if and only if its corresponding
$\PD$ (or $\CSD$) overflows.

\begin{claim}\label{claim:varpd-ovf}
The variable-length dictionaries $\VarPD(M,F,L)$ and
$\VarCSD(\tilde{F},\tilde{L})$ do not overflow whp.
\end{claim}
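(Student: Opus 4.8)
The plan is to bound, for each variable-length dictionary, the two distinct failure modes separately: exceeding the cardinality bound ($F$, resp.\ $\tilde F$), and exceeding the total-adaptive-remainder-length bound ($L$, resp.\ $\tilde L$).

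For the cardinality bound I would argue that essentially no randomness is needed. A $\VarPD$ of a $\PD$-super-interval stores one adaptive remainder per element that resides in one of the $\log w$ pocket dictionaries of that super-interval (elements forwarded to the $\SID$ contribute nothing), and each such $\PD$ holds at most $f$ elements by construction; hence the $\VarPD$ holds at most $(\log w)\cdot f = (1+o(1))\cdot w = (1+o(1))M = F$ elements, using that in the sparse case the pocket dictionaries are parametrized as if the remainder length were $\log w$ (the pointer length). The same counting gives that a $\VarCSD$ stores at most $(\hl/c)\cdot\hf = w = \tilde F$ adaptive remainders, conditioned on the event that no $\CSD$ in the $\SID$ overflows its support bound $\hf$; that event holds whp by Claim~\ref{claim:pocket-ovf-sparse} (equivalently, by the argument of Claim~\ref{crate-cpd-ovf} with $\log w$ in place of $\ell$).

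The substantive part is the total-length bound. Here I would use that, within each group $S$ of elements sharing the same $\PD$ index within the super-interval and the same $\PD$-quotient (resp.\ the same $\thash$-index, $\PD$-index and quotient), the adaptive remainders are, by Invariant~\ref{inv:min}, the leaf labels of the pruned binary trie over the full remainders $\set{r(x)\mid x\in S}$. Thus the total length stored in a super-interval is at most $\sum_{\text{groups }g}\mathrm{ExtPath}(g)$, and $\mathrm{ExtPath}(g)\le 2\sum_{d\ge 1}(\text{\# pairs of }S_g\text{ whose remainders agree on the first }d\text{ bits})$. Since a random multiset consists of i.i.d.\ uniform samples and the relevant bit-blocks ($\PD$ index within the super-interval, quotient, leading remainder bits) are disjoint blocks of a uniform element, the number of same-group pairs agreeing on $d$ leading remainder bits equals the number of colliding pairs when at most $F$ balls fall uniformly into $G\cdot 2^d$ cells, where $G=\Theta(w)$ is the number of groups; its expectation is $O(F^2/(w2^d))=O(w\,2^{-d})$. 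I would then (i) show that for $d=\Theta(\log n)$ this count is already $0$ whp, so only $O(\log n)$ depths matter; (ii) for each such $d$, prove that the collision count is $O(w\,2^{-d})$ whp via a Chernoff/Bernstein-type bound (using negative association of the cell loads, with the whp maximum-load bound controlling the per-cell contribution); and (iii) sum the geometric series to get total length $O(F)\le L$ whp. The $\VarCSD$ case is identical once one observes that, conditioned on the set of elements routed into a given $\CSD$ (an event determined by the $\hc,\hb$ bits and by which $\PD$s are full), the quotients and remainders of those elements are still independent and uniform, so the same trie/balls-into-bins estimate applies with the parameters of Table~\ref{tbl:varpd parameters}.

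The main obstacle is step~(ii): obtaining a $1/\poly(n)$ tail bound on the per-depth collision counts that survives a union bound over the $\Theta(\log n)$ relevant depths, all super-intervals, and all $\poly(n)$ time steps, while still summing to $O(F)$ rather than $O(F\cdot\polylog(n))$. The delicate point is that bounding a collision count merely through the maximum load loses a $\polylog(n)$ factor per depth; instead one must exploit that the per-group trie path lengths have sub-exponential tails and are negatively associated across groups, so that a Bernstein-type inequality concentrates the whole sum around its $O(F)$ expectation — which is where $w=\Omega(\log n)$ and the fact that $F,\tilde F=\Omega(w)$ are used.
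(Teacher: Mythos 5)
Your cardinality argument matches the paper's exactly: both condition on no $\PD$ (resp.\ $\CSD$) overflowing, and then count that a $\VarPD$ holds at most $(\log w)\cdot f=(1+o(1))w=F$ adaptive remainders and a $\VarCSD$ at most $(\hl/c)\cdot\hf=w=\tilde F$, using that in the sparse case the $\PD$s/$\CSD$s are parametrized for $\log w$-bit pointers. Where you diverge is the total-length bound. The paper dispatches it in one sentence by citing Lemma~15 of~\cite{bender2017bloom}, which states that the average length of $\Omega(\log n)$ prefix-free adaptive remainders over random strings is constant whp; that lemma is precisely the statement your steps (i)--(iii) are trying to re-derive. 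Your outline --- decomposing $\sum_g \mathrm{ExtPath}(g)$ into per-depth collision-pair counts, noting the $O(w\,2^{-d})$ expectation, truncating at $d=\Theta(\log n)$, and appealing to negative association plus a Bernstein-type inequality to concentrate the geometric sum around $O(F)$ --- is a plausible from-scratch route and you correctly identify the concentration step as the crux (naive per-depth union bounds would lose a $\polylog n$ factor). But you leave that step as a sketch rather than closing it, which is exactly the gap that citing Bender \emph{et al.}'s lemma is designed to avoid; the paper's choice buys brevity and offloads the delicate concentration argument to prior work, while your version, if completed, would make the claim self-contained at the cost of reproducing a nontrivial balls-into-bins/trie analysis. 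For a referee's purposes your argument is directionally correct but not yet a proof; for the paper's purposes, recognize that the intended reading of this claim is ``condition on Claim~\ref{claim:pocket-ovf-sparse}, count, then invoke~\cite[Lemma~15]{bender2017bloom} for the lengths.''
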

\begin{proofsketch}
If none of the $\PD$s overflow, then each $\VarPD$ and stores at
most $(1+o(1))\cdot w$ adaptive remainders.  Similarly, if none of
the $\CSD$s overflow, then each $\VarCSD$ stores at most $w$
distinct adaptive remainders.  Hence, with high probability, the
$\VarPD$s and $\VarCSD$s also do not overflow due to too many
elements.  The average length of $\Omega(\log n)$ adaptive remainders in the
$\VarPD$s and $\VarCSD$s is constant whp~\cite[Lemma
$15$]{bender2017bloom}.
Hence, an overflow does not occur whp, as required.
\end{proofsketch}
\medskip\noindent
Combining Claim~\ref{claim:pocket-ovf-sparse} with Claim~\ref{claim:varpd-ovf} gives us:

\begin{claim}
For every $\poly(n)$ sequence of insertions, deletions and queries, the
components of the Crate Dictionary for the sparse case do not overflow whp.
\end{claim}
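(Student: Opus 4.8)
The plan is to obtain the statement as a corollary of the two sparse-case overflow bounds already proved, glued together by a single union bound whose slack is absorbed into the tunable degree of the polynomial in the ``with high probability'' guarantee.

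First I would list the objects that can overflow in the sparse-case construction: the $n/m$ pocket dictionaries that store the pointers $\ptr(x)$, their attached pocket motels, the $n/\CC$ distributed $\SID$s (each a collection of $\tf$ counting set dictionaries together with their own pocket motels), the $\VarPD(M,F,L)$ structures attached to the $\PD$-super-intervals, and the $\VarCSD(\tilde F,\tilde L)$ structures attached to the $\CSD$-super-intervals. The key point is that the number of such objects is $\poly(n)$, and the operation sequence has length $\poly(n)$, so there are only $\poly(n)$ ``bad events'' to rule out. Each object's overflow probability is at most $1/n^{c}$ for a constant $c$ that the designer controls by tuning $\mu$, $\beta$, $\beta'$, and the constant in the $\CSD$ parametrization; this tuning only affects the $o(1)$ term in the space bound. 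Hence it suffices to bound, for each object type, the per-step overflow probability and then union-bound.

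Next I would invoke Claim~\ref{claim:pocket-ovf-sparse} for the $\PD$s and $\CSD$s (a pocket motel overflows exactly when its host $\PD$ or $\CSD$ does), noting that its proof reuses the dense-case load-balancing argument at relative sparseness $w$, since the $\PD$s and $\CSD$s now store pointers of length at most $\log w$. The same transfer applies to Claim~\ref{claim:crate ovf}: the number of elements routed to a single $\SID$ is $O(w^{2\beta/3}\log n)$ whp, so by $\beta'>1+2\beta/3$ no $\SID$ receives more than $\tf$ elements whp, and then the $\CSD$-overflow bound (as in Claim~\ref{crate-cpd-ovf} and Lemma~\ref{lemma:crate-sid}) applies. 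Finally I would invoke Claim~\ref{claim:varpd-ovf}: conditioned on no $\PD$ or $\CSD$ overflowing, each $\VarPD$ (resp.\ $\VarCSD$) holds at most $(1+o(1))\cdot w$ (resp.\ $w$) adaptive remainders, and among $\Omega(\log n)$ adaptive remainders the total length is $O(w)$ whp by the prefix-free length bound of Bender~\etal, so neither overflow mode---too many remainders, or too much total length---occurs whp.

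There is no genuine obstacle here beyond the bookkeeping of the union bound: the only thing to be careful about is that every constituent claim already states its ``whp'' with a polynomial degree that can be pushed past the product of the number of components and the number of time steps. Choosing that degree large enough concludes that, over any $\poly(n)$ sequence of operations, no component of the sparse-case Crate Dictionary overflows whp.
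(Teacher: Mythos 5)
Your argument matches the paper's: the paper states this claim as the immediate consequence of combining Claim~\ref{claim:pocket-ovf-sparse} (and the observation that a pocket motel overflows exactly when its host $\PD$/$\CSD$ does) with Claim~\ref{claim:varpd-ovf} and the earlier $\SID$ analysis, implicitly via a union bound over the $\poly(n)$ components and $\poly(n)$ time steps with the tunable polynomial degree absorbing the slack. Your proposal makes this union-bound bookkeeping explicit but is otherwise the same proof.
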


We proceed by bounding the number of memory accesses per operation.
Recall that the
word length satisfies $w=\Omega(\log n)$.

\begin{claim}
If none of the components of the Crate Dictionary for the sparse
case overflow, then every query and insert operation requires at most
$2\cdot\frac{\log (|\UU|/n)}{w}+O(1)$ memory accesses, and every 
delete operation requires at most
$4\cdot \frac{\log (|\UU|/n)}{w}+O(1)$ memory accesses.
\end{claim}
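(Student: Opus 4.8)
The plan is to classify the memory accesses made by each operation into two types. A \emph{light} access touches one of the $O(1)$-word components: a pocket dictionary $\PD(m,f,\ell)$ (which in the sparse case stores only pointers $\ptr(x)$ of length $O(\log w)$, so it fits in $O(1)$ words by Claim~\ref{claim:DictSmall size}), a counting set dictionary $\CSD$ (Claim~\ref{claim:CSD}), a variable-length dictionary $\VarPD$ or $\VarCSD$ (Claim~\ref{claim:varpd} and Table~\ref{tbl:varpd parameters}, using that their loads are $O(w)$ and their adaptive remainders have constant average length whenever no component overflows), the list-head array, or the doubly-linked-list pointers; each such access, and indeed each complete operation on such a component, costs $O(1)$ memory accesses. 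A \emph{heavy} access reads or writes a single $\PD$-remainder $r(y)$ in a pocket motel; since a pocket motel supports direct access via a pointer and a remainder has $\lceil\log\rho\rceil$ bits with $\rho=|\UU|/n$, one such read/write costs $\frac{\log(|\UU|/n)}{w}+O(1)$ memory accesses. Everything then reduces to counting heavy accesses: I will show a query or an insert makes at most two, and a delete at most four.

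For $\query(x)$, Claim~\ref{claim:ptr} already gives us, in $O(1)$ light accesses, at most two candidate elements — one match $y_1$ in $\PD_{\hb(x)}$ and one match $y_2$ in $\CSD_{\thash(x)}$, identified through their adaptive remainders in the $\VarPD$/$\VarCSD$; deciding membership amounts to reading $r(y_1)$ and $r(y_2)$ and testing equality with $r(x)$, i.e.\ at most two heavy accesses. For $\ins(x)$, if $\PD_{\hb(x)}$ is not full we write $r(x)$ into its pocket motel (one heavy access) and insert $(q(x),\ptr(x))$ into the $\PD$; locating the insertion position and installing $\alpha(x)$ in the $\VarPD$ uses only the adaptive remainders already stored (which live in $O(1)$ words) together with the bits of $r(x)$ that we already hold, and by Invariant~\ref{inv:min} it forces lengthening at most one other element's adaptive remainder, which requires reading that element's full remainder once (a second heavy access). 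If $\PD_{\hb(x)}$ is full, the insertion is forwarded to $\CSD_{\thash(x)}$: we spend one heavy access to read the full remainder of the unique candidate there whose adaptive remainder is a prefix of $r(x)$ (to decide whether $x$ is a duplicate, in which case we merely bump its counter), and, if $x$ is new, a second heavy access to write $r(x)$ into that $\CSD$'s pocket motel; the $\CSD$, the $\VarCSD$, and the linked list are then updated with $O(1)$ light accesses. Either way, at most two heavy accesses, giving $2\cdot\frac{\log(|\UU|/n)}{w}+O(1)$.

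For $\del(x)$ we first run the search of $\query(x)$. If $x$ is found in $\CSD_{\thash(x)}$, we decrement its counter, and if it reaches zero we free $r(x)$, shorten at most one other adaptive remainder in the $\VarCSD$ (shortening never needs additional remainder bits), and splice the linked list — all light — for a total of at most two heavy accesses. If $x$ is found in $\PD_{\hb(x)}$, then by Invariant~\ref{inv:min} the unique $\PD$-candidate whose adaptive remainder is a prefix of $r(x)$ is $x$ itself, so the search costs a single heavy access and the $\CSD$ is never consulted; we delete one copy of $x$ from the $\PD$, free $r(x)$, and shorten at most one adaptive remainder in the $\VarPD$ (light). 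If this $\PD$ was full before the deletion, Invariant~\ref{inv:SID} forces a $\pop(\hb(x))$: this reads $r(y)$ of the popped element $y$ from the head-$\CSD$'s pocket motel (one heavy access), removes $y$ from that $\CSD$/$\VarCSD$/linked list ($O(1)$, since $r(y)$ is already in hand), writes $r(y)$ into $\PD_{\hb(x)}$'s pocket motel (a second heavy access), and inserts $(q(y),\ptr(y))$ into the $\PD$, which — exactly as in the insert analysis — may lengthen one adaptive remainder in the $\VarPD$ and hence read one more full remainder $r(z)$ (a third heavy access). With the single heavy access of the search this totals at most four heavy accesses, i.e.\ $4\cdot\frac{\log(|\UU|/n)}{w}+O(1)$.

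I expect the delete-with-pop case to be the only delicate point, since the bound there is tight: a careless count would charge two heavy accesses to the search and three to the pop, exceeding the stated $4\cdot\frac{\log(|\UU|/n)}{w}$. The key observation keeping it at four is that a pop is triggered \emph{only} when $x$ was stored in a $\PD$, and then Invariant~\ref{inv:min} guarantees the search stops after inspecting one full remainder (the one of $x$) without ever reading a remainder from the $\CSD$. The remaining ingredients — that each of $\PD$, $\CSD$, $\VarPD$, $\VarCSD$, the list-head array, and the linked-list maintenance is manipulated in $O(1)$ memory accesses, and that an insertion lengthens at most one foreign adaptive remainder — are routine consequences of Claims~\ref{claim:DictSmall size}, \ref{claim:varpd}, \ref{claim:CSD}, \ref{claim:sync PD}, \ref{claim:sync SID}, \ref{claim:ptr} and Invariant~\ref{inv:min}, together with the parametrization tables, under the standing assumption that no component overflows.
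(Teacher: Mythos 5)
Your proof is correct and follows essentially the same approach as the paper: count memory accesses into constant-cost accesses to the $O(1)$-word components ($\PD$, $\CSD$, $\VarPD$, $\VarCSD$, linked lists) and the $\frac{\log(|\UU|/n)}{w}$-cost reads/writes of a full $\PD$-remainder in a pocket motel, then bound the number of the latter by $2$ (query, insert) or $4$ (delete). Your explicit ``light/heavy'' classification is a cleaner packaging of the same accounting, and you correctly pinpoint and justify the one delicate step that the paper leaves implicit (``find $x$ as in a query in a $\PD$''): when a pop is triggered, $x$ was necessarily in the $\PD$, so the search may stop after a single remainder read and never touches the $\CSD$, which is what keeps the total at $4$ rather than $5$.
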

\begin{proof}
For $\query(x)$, Claim~\ref{claim:ptr} states that
locating $\ptr(y)$ with the desired properties can be done in a
constant number of memory accesses. Reading $r(y)$ requires an
additional $\frac{\log (|\UU|/n)}{w}$ memory accesses because $r(y)$
is $\log (|\UU|/n)$ bits long. Since we may need to follow two
pointers, one from the $\PD$ and the second from the $\CSD$, the
total number of memory accesses is at most
$2\cdot\frac{\log (|\UU|/n)}{w}+O(1)$.

An insert operation $\ins(x)$ attempts an insert in the $\PD$. If
the $\PD$ is not full, we consider two cases.  (1)~A previous copy
of $x$ is not already in the $\PD$. In this case, we insert $x$ and
may need to extend an adaptive remainder of another element. This
requires $2\cdot\frac{\log (|\UU|/n)}{w}+O(1)$ memory accesses.
(2)~Previous copies of $x$ are already in the $\PD$. In this case,
they all have the same adaptive remainder. We need
$\frac{\log (|\UU|/n)}{w}$ memory accesses to compare the
$\PD$-remainder, and also need to insert an additional copy of the
$\PD$-remainder of $x$ into the pocket motel.  If the $\PD$ is full,
then we proceed with an insertion to the $\CSD$, which also takes at
most $2\cdot\frac{\log (|\UU|/n)}{w}+O(1)$ memory accesses.

The worst case for a delete operation $\del(x)$ is when the $\PD$ is
full and $x$ is stored in the $\PD$. In this case, we need to:
(1)~find $x$ (as in a query in a $\PD$) and delete it, (2)~pop an
element from the $\SID$ (which requires reading its
$\PD$-remainder), and (3)~insert the popped element to the
$\PD$. Note that shortening the adaptive remainder of $y$ does not
require reading $r(y)$. In total, the delete operation, in this case,
requires at most $4\cdot \frac{\log (|\UU|/n)}{w}+O(1)$ memory accesses.
\end{proof}

\medskip\noindent
The space requirement of the Crate Dictionary is summarized in the
following claim.
\begin{claim}
The Crate Dictionary for the sparse case requires
$(1+o(1))\cdot n \log \rho$ bits whp.
\end{claim}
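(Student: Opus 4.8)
The plan is a component-by-component space accounting that isolates the single family of components responsible for the leading $(1+o(1))\,n\log\rho$ term --- the pocket motels that store the $\PD$-remainders $r(x)$ --- and shows that everything else contributes only $o(n\log\rho)$ bits. Throughout I condition on the no-overflow event, which holds whp by Claims~\ref{claim:pocket-ovf-sparse} and~\ref{claim:varpd-ovf}, so that all capacity parameters are in force ($f$ elements per $\PD$, $\tf=O(C')$ per $\SID$, and at most $F=O(w)$ adaptive remainders of total length $L=O(w)$ per $\PD$-super-interval, and likewise for the $\CSD$-super-intervals); the final size bound then holds whp as well.

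First, the dominant term: each of the $n/m$ pocket dictionaries holds at most $f=(1+\mu)m$ elements, so its pocket motel stores at most $f$ remainders of $\ell=\log\rho$ bits each, which by the motel implementation (Appendix~\ref{sec:motel}) costs $f\ell+O(f)$ bits; summing over all $n/m$ motels and using $\mu=\mumacro=o(1)$ (since $w=\Omega(\log n)$) gives $(1+o(1))\,n\log\rho+O(n)$ bits. Next, in the sparse case each $\PD$ stores $q(x)$ implicitly in its header and a pointer $\ptr(x)$ of length $\lceil\log f\rceil=O(\log w)$ in its body, so by Claim~\ref{claim:DictSmall size} each is $m+f\,(1+O(\log w))$ bits, for a total of $O(n)+O(n\log w)=o(n\log\rho)$ bits. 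For the variable-length pocket dictionaries, Claim~\ref{claim:varpd} gives $\size{\VarPD(M,F,L)}\le M+3F+2L$; with $M=w$ and $F,L=O(w)$, each $\VarPD$ fits in $O(w)$ bits, and there are only $n/(m\log w)=n\log\rho/(w\log w)$ of them, contributing $O\!\parentheses{n\log\rho/\log w}=o(n\log\rho)$ bits. Finally, the $n/C$ distributed $\SID$s --- together with their $\CSD$s, the $\CSD$-pocket-motels, the $\VarCSD(\tilde{F},\tilde{L})$'s (each $O(w)$ bits by the same argument as for the $\VarPD$s), and the arrays of doubly-linked-list heads --- are bounded exactly as in the dense case (Claims~\ref{claim:SID size} and~\ref{claim:dense space}, under the same constraints $\beta-1>\beta'>1+2\beta/3$ satisfied by $\beta=\betaval$, $\beta'=\betapval$), contributing $o(n\log\rho)$ bits. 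Adding the four contributions and using that $\log\rho\to\infty$ in the sparse case (so the $O(n)$ is absorbed) yields $(1+o(1))\,n\log\rho$ bits.

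Given the earlier claims, every individual component bound above is immediate, so I expect the ``obstacle'' to be entirely in the bookkeeping --- making sure the counts line up. The two points that genuinely need care are: (i) that the parameter $L$ (and the analogous $\tilde L$) can legitimately be taken to be $O(w)$, which is exactly what Claim~\ref{claim:varpd-ovf} buys us and which ultimately rests on the $O(1)$-average-length property of $\Omega(\log n)$ minimal prefix-free adaptive remainders over uniformly random strings (\cite[Lemma $15$]{bender2017bloom}) together with a union bound over the $O(n/w)$ super-intervals and all $\poly(n)$ time steps --- without this, the $\VarPD$/$\VarCSD$ contribution is not controlled; and (ii) that the $O(n\log w)$ pointer overhead in the $\PD$s is genuinely lower order, i.e.\ that $\log w=o(\log\rho)$ in the sparse regime (equivalently, that this term is folded into the stated $o(1)$ factor), which is where the hypothesis $w/\rho=o_n(1)$ is used beyond merely selecting the sparse construction. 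Everything else is routine summation.
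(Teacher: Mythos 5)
Your component-by-component accounting follows the same decomposition as the paper's proof (pocket motels dominate; $\PD$s, $\VarPD$s, and $\SID$s are lower order), and you correctly isolate the two places where the argument needs care. Two specific comments.

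First, a parametrization mismatch: you take $m=w/\log\rho$ from Table~\ref{tbl:crate}, but in the sparse case the $\PD$s store pointers $\ptr(x)$ of length $\Theta(\log w)$, not $\log\rho$-bit remainders, and the paper's proof accordingly reparametrizes them ``with relative sparseness equaling $w$,'' i.e.\ $m=w/\log w$. This is also what makes Table~\ref{tbl:varpd parameters}'s $M=w$ consistent (a $\PD$-super-interval of $\log w$ consecutive $\PD$s has $m\log w = w$ quotient values). With the correct $m$ there are $n/w$ $\VarPD$s of $O(w)$ bits each, giving the paper's tighter $O(n)$ rather than your $O(n\log\rho/\log w)$. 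Your looser bound is still $o(n\log\rho)$, so the conclusion is unaffected, but the arithmetic should be fixed.

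Second, and more substantively, your point (ii) identifies exactly the right obstacle --- that the $\Theta(n\log w)$ pointer space in the $\PD$s must be $o(n\log\rho)$ --- but the implication you assert, that $w/\rho=o_n(1)$ gives $\log w=o(\log\rho)$, is false. The sparse hypothesis says $\rho/w\to\infty$, which is equivalent to $\log\rho-\log w\to\infty$, not to $\log w/\log\rho\to 0$: take $\rho=w^2$, which is sparse, yet $\log\rho=2\log w$, so the $\PD$s alone cost $(1/2+o(1))n\log\rho$ and the total is roughly $(3/2)n\log\rho$, not $(1+o(1))n\log\rho$. The paper's own proof makes the same silent move (``Since \ldots $w=o(\rho)$ the required bound follows''), so this gap is inherited rather than introduced by you; nonetheless, the implication as you wrote it does not hold, and either an extra hypothesis such as $\log\rho=\omega(\log w)$ is needed, or the claim should carry an additive $O(n\log w)$ term. (In the motivating filter regime $\eps=2^{-\Theta(w)}$ one has $\log\rho=\Theta(w)$, so $\log w=o(\log\rho)$ does hold there.)
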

\begin{proof}
The $\PD$s and $\CSD$s used for storing pointers are parametrized
as in the Crate Dictionary with relative sparseness equaling $w$. Therefore,
they require $(1+o(1))\cdot n  \log w + O(n)$ bits (Claim~\ref{claim:dense space}).
The variable-length pocket
dictionaries require $O(n)$ bits whp (Claim~\ref{claim:varpd-ovf}).  Per element, the length of
the $\PD$ remainder is $\log(|\UU|/ n)$.  Hence, the total space
consumed by the pocket motels is
$(1+o(1))\cdot n\log(\size{\UU}/ n) + o(n\log w)$ (Claim~\ref{claim:pocket-motel}). Since
$|\UU|/n=\rho$ and $w=o(\rho)$ the required
bound follows.
\end{proof}
\medskip\noindent
This completes the proof of Theorem~\ref{thm:crate-dict} in the sparse case.

\section{Application: Static Retrieval}\label{sec:retrieval}
In the static retrieval problem, we are given a (static) set $\DD\subseteq \UU$ and a function
$f:\DD\rightarrow \set{0,1}^k$ (one can think of $f(x)$ as the label of $x$). We would like to support queries of the form:
\begin{align*}
\query(x)&\triangleq
\begin{cases}
f(x) & \text{if $x\in \DD$}\\
\text{arbitrary value in $\set{0,1}^k$}&\text{if $x\notin\DD$.}
\end{cases}
\end{align*}

The retrieval problem has mainly been discussed in the static case,
although fully-dynamic constructions also
exist~\cite{demaine2006dictionariis,mortensen2005dynamic}. Several constructions rely on
perfect hashing
techniques~\cite{botelho2007simple,botelho2013practical,
genuzio2016fast, hagerup2001efficient}. The data structures obtained
this way require $pk+cn$ number of bits, where $p$ is the size of the
range of the perfect hash function and $(c+o(1))n$ is the number of bits
the perfect function occupies. For these constructions, either $p/n$
is a constant strictly greater than $1$ ($\approx 1.23$) or the bounds
depend on the size of the universe (which can be arbitrarily
large). A different line of approach is
based on representing the keys $x$ as binary rows in a system of
linear equations~\cite{DBLP:conf/stacs/DietzfelbingerW19,
chazelle2004bloomier,dietzfelbinger2008succinct}. These
constructions generally assume that $k$ is constant. We refer to the
paper of Dietzfelbinger and
Walzer~\cite{DBLP:conf/stacs/DietzfelbingerW19} for a thorough
comparison of the different data structure parameters.

\subsection{Sketch of the Proof of Theorem~\ref{thm:retrieval}}
The design of the retrieval data structure is very similar to
that of the Crate Dictionary. We briefly describe the modifications. For every $x\in\DD$,
we compute its adaptive remainder $\alpha(x)$. This adaptive remainder
will be stored alongside $f(x)$ in the following sense.

When $k = O(\log \log n)$, we use a variant of the Crate Dictionary
for the sparse case in which $f(x)$ is stored instead of $\ptr(x)$ in
the $\PD$ or the $\CSD$.
The location of $f(x)$ in the data structure is computed in the same way
in which the location of $\ptr(x)$ is computed in the dictionary. We omit
pocket motels since we no longer need to store $r(x)$.  Since the total length of the adaptive remainders is $O(n)$ with high
probability, we get that the total space occupied by this construction
is $(1+o(1))nk+O(n)$.

When $r=\omega(\log \log n)$, we use the construction for the dense
case directly and store $f(x)$ in the pocket motel instead of
$r(x)$. The total space occupied by this construction is
$(1+o(1))nk$.

To construct the data structure, we can insert the elements 
incrementally. Since each such inserts takes $O(1)$, we get that the runtime
of inserting the whole set is $O(n)$. With very low probability, the data structure
might overflow. In that case, we pick a new hash function and redo the process.
This leads to an overall construction runtime of $O(n)$ whp. 

\section*{Acknowledgments}
The authors would like to thank Michael Bender, Martin Farach-Colton,
and Rob Johnson for introducing this topic to us and for interesting
conversations.

\bibliography{main}

\appendix
\section{Pocket Motels}\label{sec:motel}
The pocket motel is a data structure that supports insertions, deletions, and
direct access via a pointer. It stores $\PD$-remainders of fixed
length.  An $\ins(r(x))$ operation returns a pointer $\ptr(x)$ to the entry
where $r(x)$ is stored. A $\del(\ptr(x))$ frees the entry. The pointer
$\ptr(x)$ can be used to directly access the element.

The pocket motel can be implemented using an array. See Fig.~\ref{fig:motel}
for a depiction. Each entry in the
array stores either the $\PD$ remainder $r(x)$ (i.e. it is
\emph{occupied}) or a pointer to the next free entry in the array
(i.e. it is \emph{vacant}). A separate $\textsf{head}$ pointer is
maintained for the beginning of the linked list of vacant entries.

\begin{figure}[h]
\begin{center}
\includegraphics[width=0.3\textwidth]{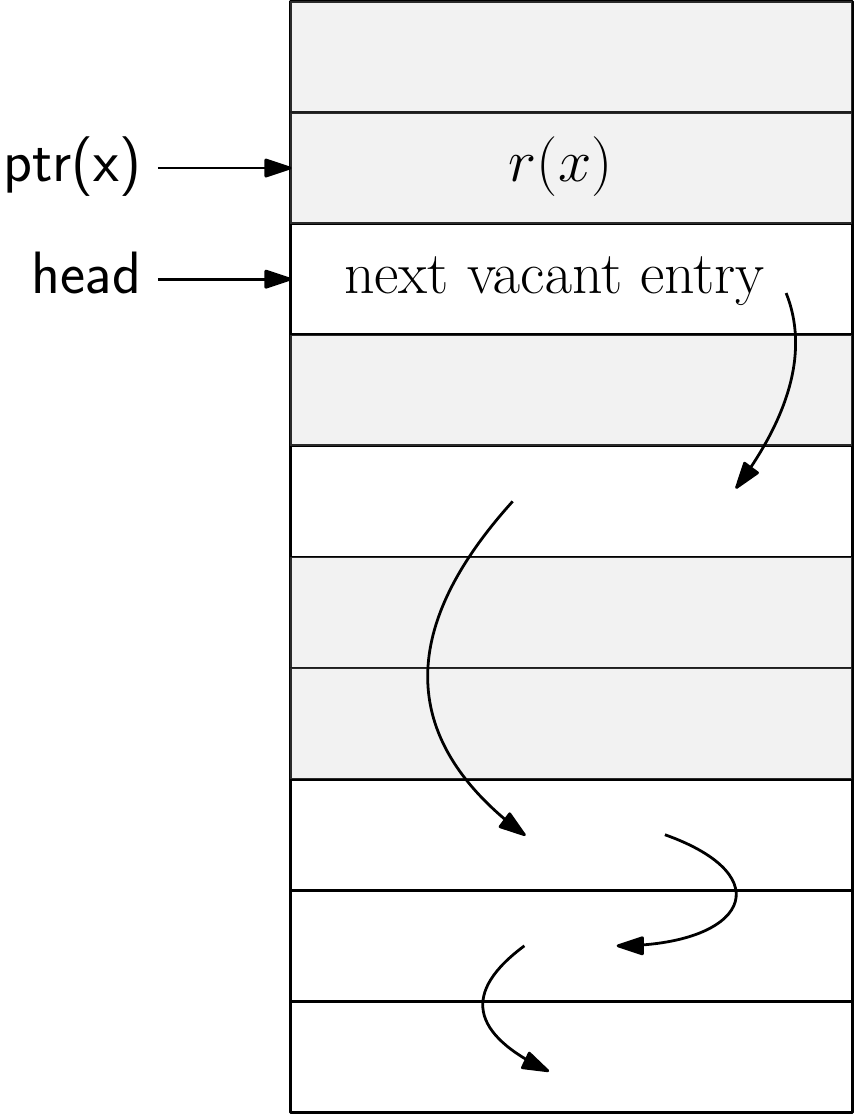}
\end{center}
\caption{\textbf{The pocket motel.} Occupied entries are shown in grey, vacant
entries are in white and form a linked list of vacant entries.}
\label{fig:motel}
\end{figure}

Insertion to the pocket motel is implemented by finding the first
vacant entry in the list (pointed to by the \textsf{head}), updating
the \textsf{head} pointer and writing the $\PD$ remainder of the
element in the array at that entry. Deletion is implemented by adding
the entry to the linked list of vacant entries. We get that:
\begin{claim}\label{claim:pocket-motel}
A pocket motel storing the $\PD$ remainders of $k \leq |U|/n$
elements requires
$k\cdot\parentheses{1+\log(\size{\UU}/n)} + \log k$ bits of space.
\end{claim}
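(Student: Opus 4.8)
The plan is to account for the bits used by the array representation of the pocket motel described above, cell by cell, and then add the overhead of the single \textsf{head} pointer.

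First I would fix the number of array cells. Since the pocket motel is only ever asked to hold the $\PD$-remainders of at most $k$ elements, an array of exactly $k$ cells suffices: occupied cells hold the live remainders, and the remaining cells form the linked list of vacant slots. Next I would determine the width of a single cell. A cell is in one of two states, occupied or vacant, so it carries a $1$-bit status flag. When occupied, it stores a $\PD$-remainder $r(x)$, which is $\log(\size{\UU}/n)$ bits long; when vacant, it stores an index into the same $k$-cell array, i.e.\ a pointer of $\log k$ bits. Here is where the hypothesis $k\le \size{\UU}/n$ does the work: it guarantees $\log k\le \log(\size{\UU}/n)$, so a field of $\log(\size{\UU}/n)$ bits simultaneously accommodates either a remainder or a vacancy pointer, and no cell needs to be wider than $1+\log(\size{\UU}/n)$ bits. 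Multiplying by the $k$ cells gives $k\cdot\parentheses{1+\log(\size{\UU}/n)}$ bits for the array.

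Finally I would add the $\log k$ bits for the \textsf{head} pointer that marks the front of the vacant-slot list (it too indexes into the $k$-cell array), yielding the claimed total of $k\cdot\parentheses{1+\log(\size{\UU}/n)}+\log k$ bits. There is no real obstacle here beyond bookkeeping; the only subtlety worth flagging is that one must not provision separate fixed-width sub-fields for ``remainder'' and ``pointer'' inside a cell — overlaying them in a single $\log(\size{\UU}/n)$-bit field, legal precisely because $k\le\size{\UU}/n$, is exactly what keeps the per-element overhead down to a single status bit.
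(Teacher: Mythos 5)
Your accounting matches the paper's: an array of $k$ cells plus a single $\log k$-bit head pointer, with each cell wide enough to hold either a $\PD$-remainder or a free-list index, and the hypothesis $k\le|\UU|/n$ ensuring the index fits. The one small divergence is that the paper does not posit a per-cell status flag---in the free-list design, occupied cells are only ever reached via $\ptr(x)$ and vacant cells only via the list starting at \textsf{head}, so no cell is ever examined to decide its state, and the paper instead takes each entry to be $\ceil{\log(|\UU|/n)}$ bits with the ``$+1$'' in the claim simply absorbing the ceiling; your status-bit reading yields the same total and is harmless, but the extra bit is not actually required by the mechanics.
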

\begin{proof}
We need a pointer to point to the first vacant entry in the motel.
Each entry is $\ceil{\log(\size{U}/n)}$ bits long.
\end{proof}

\section{$\SID$ Building Blocks}\label{sec:aux}

In this section, we describe auxiliary data structures ($\CSD$ and
$\VarCSD$) that we use in the design of the $\SID$.  The $\CSD$ has a
role that is similar to that of a $\PD$, and the $\VarCSD$ has a role
that is similar to that of a $\VarPD$.  The design of $\CSD$s and
$\VarCSD$s need not be space-efficient because the $\SID$ stores a
sublinear number of elements.

\subsection{Counting Set Dictionaries ($\CSD$s)}\label{sec:count set
dict}
The counting set dictionary ($\CSD$) is a dictionary for
small general multisets.  The parameters of a $\CSD$ are: $\hf$~-~an upper bound on
the cardinality of the support of the stored multiset, $\hatc$~-~an
upper bound on the maximum multiplicity of elements in the multiset,
$\hl$~-~the length of the elements in bits. Counting set dictionaries
are used in the implementation of the $\SID$ of each crate in the Crate
Dictionary.

We denote the data structure storing such a multiset by
$\CSD(\hf,\hl,\hatc)$. We use a brute force implementation in
which elements with their multiplicity (using $\ceil{\log \hatc}$
bits) are packed contiguously. Elements are stored in non-decreasing lexicographic order.

\begin{claim}\label{claim:CSD} The number of bits that
$\CSD(\hf,\hl,\hatc)$ requires is at most
$\hf(\hl+\log\hatc+1)$.  If  $\max \set{\hl,\log \hatc}<w$, and 
$\hf=O\parentheses{w/\max \set{\hl,\log \hatc}}$, then
$\CSD(\hf,\hl,\hatc)$ fits in a constant number of
words.
\end{claim}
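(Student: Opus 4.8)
The plan is to argue directly from the brute-force layout described above. Each distinct element of the support is stored explicitly using $\hl$ bits, and its multiplicity is stored alongside it using $\ceil{\log\hatc}$ bits. Since the support of the multiset has cardinality at most $\hf$, the total number of bits is at most $\hf\cdot(\hl+\ceil{\log\hatc})$. Using $\ceil{\log\hatc}\leq \log\hatc+1$ gives the claimed bound of $\hf(\hl+\log\hatc+1)$ bits. I would also note that any small bookkeeping needed for decoding (e.g., recording how many distinct elements are currently stored, or a sentinel separating occupied from free slots) costs only $O(\log\hf)$ additional bits, which is subsumed by the bound once we observe $\hf\leq w$ in the regime of interest; alternatively one can simply reserve all $\hf$ slots so that the count is implicit.

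For the second statement, the point is to show $\hf(\hl+\log\hatc+1)=O(w)$. First bound $\hl+\log\hatc+1\leq 2\max\set{\hl,\log\hatc}+1\leq 3\max\set{\hl,\log\hatc}$, valid whenever $\max\set{\hl,\log\hatc}\geq 1$. Then multiply by the hypothesis $\hf=O\parentheses{w/\max\set{\hl,\log\hatc}}$ to conclude that the storage is $O(w)$ bits. Since a machine word has $w$ bits, the $\CSD$ occupies $O(1)$ words, as required; all operations (search, insert, delete, counter update) then touch only these $O(1)$ words and hence use $O(1)$ memory accesses.

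There is essentially no substantive obstacle here: the statement is a bookkeeping bound on a concrete data layout, and the only mild care needed is in the edge cases $\hatc=1$ (so that $\ceil{\log\hatc}\in\set{0,1}$ is still $\leq\log\hatc+1$ under the convention that the multiplicity field may be omitted or a single bit) and $\max\set{\hl,\log\hatc}$ possibly being smaller than $1$, which one handles by treating each field as taking at least one bit. Everything else follows by the two inequalities above and the stated asymptotic hypothesis on $\hf$.
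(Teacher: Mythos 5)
Your proof is correct and is the natural argument: the paper states Claim~\ref{claim:CSD} without an explicit proof, relying on the preceding description of the brute-force layout (each of the at most $\hf$ distinct elements stored in $\hl$ bits with a $\ceil{\log\hatc}$-bit counter, packed contiguously), and your counting of $\hf(\hl+\ceil{\log\hatc})\le \hf(\hl+\log\hatc+1)$ followed by $\hl+\log\hatc+1\le 3\max\set{\hl,\log\hatc}$ and the hypothesis on $\hf$ is exactly what that description supports. Your asides about bookkeeping overhead and the $\hatc=1$ edge case are minor refinements beyond what the paper addresses, but they do not change the approach.
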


We say that a $\CSD(\hf,\hl,\hatc)$ overflows whenever
the cardinality of the support of the stored multiset exceeds $\hf$
or the maximum multiplicity exceeds $\hatc$.

\subsection{Variable-Length Counting Set Dictionaries
($\VarCSD$s)}\label{sec:var count set dict}
The variable-length counting set dictionary ($\VarCSD$) is a
dictionary that stores sets of adaptive remainders of elements from a
$\CSD$-super-interval. The parameters of a $\VarCSD$ are: $\tilde{F}$-
an upper bound on the cardinality of the stored set and $\tilde{L}$-
an upper bound on the total length of the elements of the set. We note
this data structure by $\VarCSD(\tilde{F},\tilde{L})$.

The $\VarCSD$ is maintained synchronized with the $\CSD$s in the same $\CSD$-super-interval. Specifically, all the adaptive remainders of elements stored in the same $\CSD$ are stored in one contiguous block of entries in the corresponding $\VarCSD$. We call such a block a \emph{frame}. Within a frame, the $\VarCSD$ stores $\alpha(x)$ in non-decreasing lexicographic order of $(\hb(x),q(x), r(x))$. To enforce this, at insertion time, $\alpha(x)$ is stored in the same relative position within a frame as $(\hb(x),q(x),\ptr(x))$ is stored in the $\CSD$.

The $\VarCSD$ uses an alphabet of four symbols that includes two types
of separators: one separator distinguishes between successive frames
and the other separator distinguishes between the adaptive remainders
in the same frame. Since every $\CSD$-super-interval contains $\hl/c$
$\CSD$s, there are $\hl/c$ frame separators in each $\VarCSD$ (i.e. we
also record empty frames).

\begin{claim}\label{claim:VarCSD} The number of bits that
$\VarCSD(\tilde{F},\tilde{L})$ requires is at most
$2 (\tilde{F}+\tilde{L} + \hl/c)$.  If  $\max \set{\tilde{F},\tilde{L}}<w$, then
$\VarCSD(\tilde{F},\tilde{L})$ fits in a constant number of
words.
\end{claim}
\begin{proof}
We require $2\tilde{F}$ bits for the separators between the adaptive
remainders. The adaptive remainders themselves take at most
$2\tilde{L}$ bits. Finally, the frame separators require $2\hl/c$
bits, since there are $\hl/c$ frames in a $\VarCSD$. Since $\hl=O(\log w)$, the claim follows. 
\end{proof}

We say that a $\VarCSD(\tilde{F},\tilde{L})$ overflows whenever
the cardinality of the stored set exceeds $\tilde{F}$
or the total length of the adaptive remainders exceeds $\tilde{L}$. 

\section{Computation of Adaptive Remainders}\label{sec:compute adaptive remainders}
In this section, we describe a greedy process for computing the
adaptive remainders that satisfy Invariant~\ref{inv:min}. We are given
as input $f$ pairs of the form $(q_i, r_i)$, where
$q_i\in \set{0,1}^{\log w}$ and $r_i\in\set{0,1}^*$.  The output is a
set of adaptive remainders $\set{\alpha_i}_{i\in [f]}$ such that:
(1)~$\alpha_i$ is a prefix of $r_i$, (2)~the set
$\set{(q_i,\alpha_i)}_{i\in [f]}$ is prefix-free, and (3)~the sum of lengths
$\sum_{i\in[f]} \size{\alpha_i}$ is minimal.

The pairs are processed sequentially. Suppose we are processing
$(q_i,r_i)$ for some $i\in[f]$. We can view the pairs
$\set{(q_j,\alpha_j)}_{j<i}$ as if they were stored in a trie.
Interior nodes of the trie have at most two children, and every leaf
corresponds to a distinct pair $(q_j,\alpha_j)$. The adaptive remainder
$\alpha_i$ is then computed in a greedy fashion.  We begin with
$\alpha_i = \Lambda$ (i.e. the empty string) and extend $\alpha_i$ (by
``copying'' bits of $r_i$) in the following way:
\begin{enumerate}
\item If $q_i$ corresponds to an interior trie node, then we extend
$\alpha_i$ until we create a new leaf in the trie or reach an old
leaf. 
\item If $(q_i,\alpha_i)$ reaches an old leaf $(q_j,\alpha_j)$ for
some $j<i$, then we extend both $\alpha_i$ and $\alpha_j$ until
$\alpha_i\neq \alpha_j$.
\item Once $(q_i,\alpha_i)$ reaches a new leaf in the trie, no further
extension is required.
\end{enumerate}

We note that, if the pairs $\set{(q_i,\alpha_i)}_{i\in [f]}$ fit in a constant
number of words, then computing each $\alpha_i$ can be done in a
constant number of operations. See~\cite{bender2017bloom} for details.

\end{document}